\theoremstyle{plain}
\newtheorem{thm}{Theorem}
\numberwithin{thm}{section}
\newtheorem{prop}[thm]{Proposition}
\newtheorem{claim}[thm]{Claim}
\newtheorem{lem}[thm]{Lemma}
\newtheorem{cor}[thm]{Corollary}
\newtheorem{conj}[thm]{Conjecture}
\newtheorem{ques}[thm]{Question}
\theoremstyle{definition}
\newtheorem{df}[thm]{Definition}
\theoremstyle{remark}
\newtheorem*{rem}{Remark}
\newcommand{\PCSP}{\operatorname{PCSP}}
\newcommand{\CSP}{\operatorname{CSP}}
\newcommand{\Pol}{\operatorname{Pol}}
\newcommand{\Ham}{\operatorname{Ham}}
\newcommand{\Hom}{\operatorname{Hom}}
\newcommand{\Zero}{\operatorname{Zero}}
\newcommand{\One}{\operatorname{One}}
\newcommand{\AND}{\operatorname{AND}}
\newcommand{\OR}{\operatorname{OR}}
\newcommand{\Par}{\operatorname{Par}}
\newcommand{\Maj}{\operatorname{Maj}}
\newcommand{\Malt}{\operatorname{AT}}
\newcommand{\setzero}{\operatorname{SET-ZERO}}
\newcommand{\setone}{\operatorname{SET-ONE}}
\newcommand{\EQUAL}{\operatorname{EQUAL}}
\newcommand{\Maltsev}{Alternating-Threshold}
\newcommand{\flip}{\operatorname{flip}}
\newcommand{\id}{\operatorname{id}}
\title{Promise Constraint Satisfaction: \\ Algebraic Structure and a Symmetric Boolean Dichotomy\footnote{Proceedings version appeared in SODA 2018~\cite{DBLP:conf/soda/BrakensiekG18}.}}
\author{Joshua Brakensiek\thanks{Department of Computer Science, Stanford University, Stanford, CA. Email: {\tt jbrakens@stanford.edu} This work was conducted while a student at Carnegie Mellon University.  Research supported in part by an REU supplement to NSF CCF-1526092.} \and Venkatesan Guruswami\thanks{Computer Science Department, Carnegie Mellon University, Pittsburgh, PA 15213. Email: {\tt guruswami@cmu.edu}. Research supported in part by NSF grant CCF-1526092 and CCF-1908125.} }
\date{}
\begin{document}

\maketitle
\thispagestyle{empty}
\begin{abstract}
  A classic result due to Schaefer (1978) classifies all constraint
  satisfaction problems (CSPs) over the Boolean domain as being either
  in $\mathsf{P}$ or $\mathsf{NP}$-hard. This paper considers a
  promise-problem variant of CSPs called PCSPs. A PCSP over a finite
  set of pairs of constraints $\Gamma$ consists of a pair
  $(\Psi_P, \Psi_Q)$ of CSPs with the same set of variables such that
  for every $(P, Q) \in \Gamma$, $P(x_{i_1}, \hdots, x_{i_k})$ is a
  clause of $\Psi_P$ if and only if $Q(x_{i_1}, \hdots, x_{i_k})$ is a
  clause of $\Psi_Q$. The promise problem
  $\operatorname{PCSP}(\Gamma)$ is to distinguish, given
  $(\Psi_P, \Psi_Q)$, between the cases $\Psi_P$ is satisfiable and
  $\Psi_Q$ is unsatisfiable. Many problems such as approximate graph
  and hypergraph coloring as well as the $(2+\epsilon)$-SAT problem
  due to Austrin, Guruswami, and H\aa
  stad~\cite{DBLP:journals/siamcomp/AustrinGH17} can be placed in this
  framework.

  \smallskip This paper is motivated by the pursuit of understanding
  the computational complexity of Boolean promise CSPs, determining
  for which $\Gamma$ the associated PCSP is polynomial-time tractable
  or $\mathsf{NP}$-hard. As our main result, we show that
  $\operatorname{PCSP}(\Gamma)$ exhibits a dichotomy (it is either
  polynomial-time tractable or $\mathsf{NP}$-hard) when the relations
  in $\Gamma$ are symmetric and allow for negations of variables. In
  particular, we show that every such polynomial-time tractable
  $\Gamma$ can be solved via either Gaussian elimination over $\mathbb
  F_2$ or a linear programming relaxation. We achieve our dichotomy
  theorem by extending the (weak) polymorphism framework of {Austrin, Guruswami, and H\aa stad} which
  itself is a generalization of the algebraic approach used by
  polymorphisms to study CSPs. In both the algorithm and hardness
  portions of our proof, we incorporate new ideas and techniques not
  utilized in the CSP case.
\end{abstract}

\newpage

\tableofcontents

\thispagestyle{empty}

\newpage

\setcounter{page}{1}
\section{Introduction}
\label{sec:intro}

A constraint satisfaction problem (CSP) over domain $D$ is specified
by a finite collection $\Lambda$ of relations over $D$, and is denoted
as $\CSP(\Lambda)$. An instance of $\CSP(\Lambda)$ consists of a set of
variables $V$ and a collection of constraints $\{(\tau, P)\}$ where $P
\in \Lambda$ and $\tau$ is a tuple of $k$ variables where $k$
is the arity of $P$ (i.e., $P \subseteq D^k$). The goal is find an
assignment $\sigma : V \to D$ that satisfies all constraints, i.e.,
$(\sigma(\tau_1),\dots,\sigma(\tau_k)) \in P$ for each constraint
$(\tau,P)$. In the optimization version, we seek an assignment that
maximizes the number of satisfied constraints.

Constraint satisfaction problems form a rich class of problems, and
have played a crucial role in the development of computational
complexity theory, starting from the NP-completeness of 3SAT to the
PCP theorem to the Unique Games Conjecture, all of which study the
intractability of a certain CSP.  Despite the large variety of
problems that can formulated as a CSP, it is remarkable that CSPs are
a class whose computational complexity one can dream of
understanding completely, for either the decision or optimization
version (including approximability in the latter case).  For Boolean
CSPs (those over domain $D=\{0,1\}$), Schaefer~\cite{Schaefer:1978} proved a
dichotomy theorem showing that such CSP is either polynomial time
solvable or NP-complete. Further, he gave a characterization of the
tractable cases -- a Boolean $\CSP(\Lambda)$ is in P in precisely six
cases, when every constraint in $\Lambda$ is (i) satisfied by all
$0$s, (ii) satisfied by all $1$s, (iii) a conjunction of 2CNF clauses,
(iv) a conjunction of Horn SAT clauses, (v) a conjunction of dual Horn
SAT clauses, and finally (vi) every constraint in $\Lambda$ is a
conjunction of affine constraints over $\mathbb{F}_2$. The Feder-Vardi
conjecture~\cite{feder-vardi} states that {such a complexity}
dichotomy holds for every $\CSP(\Lambda)$ over arbitrary finite
domains. Besides the Boolean domain, before 2017 it had been proved for a few
other cases, including CSPs over a domain of size
$3$~\cite{bulatov-ternary} and conservative CSPs (which contain all
unary relations)~\cite{bulatov-conservative,bulatov-conservative2}. The conjecture was recently resolved independently by Bulatov~\cite{Bulatov2017} and Zhuk~\cite{Zhuk2017}.

For the (exact) optimization version, a complete dichotomy theorem was
established in \cite{thapper-zivny} showing that for every collection
of relations $\Lambda$, the associated optimization problem is
tractable if and only if a certain basic linear programming relaxation
solves it, and it is NP-complete otherwise. The result in fact holds
for a generalization of Max CSP called \emph{valued} CSP, where each
constraint has a finite weight associated with it, and the goal is to
find a minimum value solution. When infinite weights are also allowed
(so some constraints have to be satisfied), it was shown that,
surprisingly, {the} dichotomy for ordinary CSPs {implies} a dichotomy
for this more general setting as well~\cite{KKR:2015}.  For
approximate optimization, a line of work exploring the consequences of
Khot's Unique Games Conjecture (UGC)~\cite{khot02} culminated in the
striking result~\cite{prasad} (see also \cite{BCR:2015}) that for
every CSP, there is a canonical semidefinite programming relaxation
which delivers the optimal worst-case approximation ratio, assuming
the UGC.

In this work we are interested in a potential complexity dichotomy for
\emph{promise} constraint satisfaction problems (PCSPs). A promise
{constraint satisfaction problem} $\PCSP(\Gamma)$ is specified
by a finite collection $\Gamma = \{ (P_i,Q_i) \}_i$ of \emph{pairs} of
relations with each $P_i \subseteq Q_i$. Let $\Lambda =\{ P_i\}_i$ and
$\Lambda' = \{Q_i\}_i$, Suppose we are given a satisfiable instance of
$\CSP(\Lambda)$ --- while finding a satisfying assignment might be
NP-hard, can we in polynomial time find a satisfying assignment when
the input is treated as an instance of $\CSP(\Lambda')$ (in the
obvious way, by replacing each constraint $P_i$ by the corresponding
$Q_i$)? The decision version $\PCSP(\Gamma)$ is the promise problem
where given an instance, we need to output Yes on instances that are
satisfiable as a $\CSP(\Lambda)$ instance, and output No on instances
that are unsatisfiable even as a $\CSP(\Lambda')$
instance.\footnote{PCSPs are not to be confused with a promise problem
  of a similar name studied by Ham and Jackson~\cite{Ham2016,Ham2017}
  which deal with a different kind of promise that certain partial
  assignments to a CSP are `extendable' to a full solution.} The
following challenge drives this work:

\begin{ques}\label{ques:PCSP-complexity-intro}
  For which $\Gamma$ is $\PCSP(\Gamma)$ polynomial-time tractable? For which $\Gamma$ is $\PCSP(\Gamma)$ $\mathsf{NP}$-hard? Must every $\Gamma$ fall into one of these two categories?
\end{ques}

The condition $P_i \subseteq Q_i$ guarantees that the satisfiability of a CSP with clauses
in the $P_i$'s implies the satisfiability of the CSP with the $P_i$'s
replaced by the corresponding $Q_i$'s. More generally one may consider two not
necessarily distinct domains $D_1$ and $D_2$, with $P_i \subseteq
D_1^k$ and $Q_i \subseteq D_2^k$ and a  map
$\sigma : D_1 \to D_2$ such that $\sigma(P_i) \subseteq Q_i$ (where $\sigma(P_i) =\{ (\sigma(a_1),\dots,\sigma(a_k)) \mid (a_1,\dots,a_k) \in P_i\}$. Note
that $\sigma$ is not necessarily injective. We omit the details of this more general presentation in this work (except briefly in the promise graph homomorphism example below).

To demonstrate the depth and far-reaching nature of the above question, we
provide some interesting examples which fall under this Promise CSP
framework. (Throughout the paper, we will use $[n] = \{1,
  \hdots, n\}$, $|x|$ to denote the Hamming weight (the number of 1s)
  of a Boolean vector $x \in \{0, 1\}$, and $e_i \in \{0, 1\}^n$ to
  denote the unique vector such that $(e_i)_j= 1$ if and only if $j =
  i$.)

\smallskip \noindent \textbf{\emph{CSPs.}}
Consider a PCSP $\Gamma = \{(P_i, Q_i) {: i \in [r]}\}$ such that $P_i = Q_i$ for all $i$. Then $\PCSP(\Gamma)$ is equivalent to the CSP decision problem $\CSP(\Gamma)$. Thus the above question in full generality subsumes the CSP dichotomy {theorem} as a special case.

\smallskip\noindent \textbf{\emph{Approximate graph coloring.}} Let $3 \le c \le t$ be positive integers, and consider the
relations $P = \{(a, b) \in [c]^2: a \neq b\}$ and $Q =
\{(a, b) \in [t]^2 : a \neq b\}$ with $D = \{1, \hdots,
t\}$.\footnote{Instead of having $P$ ignore a portion of the domain, we could present these more naturally in the homomorphism framework mentioned previously.} Then $(P,Q)$-PCSP is an instance of the approximate graph
coloring problem in which one needs to distinguish if the chromatic
number of a graph is at most $c$ or at least $t+1$.  The complexity of
$\PCSP(P,Q)$ is a notorious open problem; this problem is strongly believed to be
NP-hard for all $3 \le c \le t$, but the best NP-hardness in various
regimes~\cite{KLS,GK-sidma,Huang13,BrakensiekGuruswami:2016}
fall woefully short of establishing hardness for all $c$ and $t$,
especially when $c$ is small.

\smallskip\noindent\textbf{\emph{Hypergraph coloring.}} Generalizations of the coloring problem to the setting of
  hypergraphs also fall under this framework. The hardness of telling
  if a $3$-uniform hypergraph is $2$-colorable or not even
  $t$-colorable~\cite{DRS} (for any fixed $t$) is captured by
  $\PCSP(P,Q)$ where $P = \{1,2\}^3 \setminus \{(1,1,1),(2,2,2)\}$ and
  $Q = [t]^3 \setminus \{ (j,j,j) \mid j \in
  [t]\}$.

\smallskip\noindent \textbf{\emph{$(2+\epsilon)$-SAT.}} This problem
studied by \cite{DBLP:journals/siamcomp/AustrinGH17} corresponds to $\Gamma = \{(P_1, Q_1), (P_2, Q_2)\}$
where $(P_1, Q_1) = (\{x \in \{0, 1\}^{2k+1}, \ |x| \ge k\},
\{0,1\}^{2k+1}\setminus \{(0,\hdots, 0)\})$ (where $|x|$ is the
Hamming weight of $x$) and $(P_2, Q_2) =
(\{(0,1),(1,0)\},\{(0,1),(1,0)\})$. The purpose of $(P_2,Q_2)$ is so
that we can refer to some variables as negations of others. This
specific $\PCSP(\Gamma)$ was shown to be $\mathsf{NP}$-hard. On the
other hand, if we replace $P_1$ with $\{x \in \{0, 1\}^{2k+1}, |x| \ge
k + 1\}$, then $\PCSP(\Gamma)$ has a polynomial-time algorithm.

\smallskip\noindent\textbf{\emph{Hypergraph discrepancy.}}
Let $\Gamma = \{(P,Q)\}$ where $P = \{ x
\in \{0,1\}^{2k+1}, \ |x| \in \{k,k+1\}\}$ and $Q = \{0,1\}^{2k+1}
\setminus \{0^{2k+1},1^{2k+1}\}$. Then $\PCSP(\Gamma)$ was shown to be
hard in \cite{DBLP:journals/siamcomp/AustrinGH17}, which means that weak $2$-coloring of
hypergraphs with minimum discrepancy is hard. On the other hand, if the
arity is even and $P$ contains strings of equal number of $0$s and
$1$s, then $\PCSP(\Gamma)$ is tractable.

\smallskip\noindent\textbf{\emph{Promise graph homomorphism.}}
Let $G$ and $H$ be fixed directed graphs for which there is a
  homomorphism $\phi : G \to H$; that is, for all $(u, v) \in E(G)$,
  $(\phi(u), \phi(v)) \in E(H)$. Consider the following promise problem: given an input directed graph $X$, determine whether there exists a
  homomorphism $X \to G$ or there exists no homomorphism $X \to H$. We
  coin this question as the ``promise digraph homomorphism problem.''
This question can be encoded as a PCSP as follows. Let $P = E(G)$,
  the ordered pairs of vertices forming directed edges, and $Q =
  E(H)$. Any instance $\PCSP(P, Q)$ corresponds to the directed graph
  whose vertices are the variables and whose directed edges are the
  clauses.

  In the non-promise case {(}when $G = H$), the problem is known as the
  ``$H$-coloring" or the ``digraph homomorphism" problem (e.g.,
  \cite{HELL199092, feder-vardi}). Resolving the full CSP
  dichotomy {theorem} is equivalent to resolving the
  special case of digraph homomorphism
  problems~\cite{feder-vardi}. Special cases which {had} been resolved
  {before the full dichotomy} include the case $H$ is undirected~\cite{HELL199092} and when $H$
  has no sources or sinks~\cite{BartoKozikNiven2009}.

 Extending the polynomial-time equivalence of CSPs and digraph homomorphism problems,
  we show in Section~\ref{app:digraph} that every PCSP is polynomial-time equivalent
  to a promise digraph homomorphism problem. It follows that a
  dichotomy theorem for promise digraph homomorphism problems
  is equivalent to a dichotomy theorem for all PCSPs.
  
  Note that even the undirected case of promise digraph homomorphism problem is a substantial
  generalization of the approximate graph coloring problem, which corresponds to
  $G = K_c$ and $H = K_t$ being cliques. We conjecture the following very general hardness result for promise graph homomorphism:
    \begin{conj}
      Let $G, H$ be undirected non-bipartite graphs with a homomorphism from $G$ to $H$. Then the promise digraph homomorphism problem associated with $G$ and $H$ is NP-hard.\footnote{If
  either $G$ or $H$ is bipartite, the problem is easy via graph
  2-coloring.}
    \end{conj}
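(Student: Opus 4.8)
The plan is to route everything through the weak-polymorphism Galois correspondence established in this paper: since the computational complexity of a PCSP is pinned down by its weak polymorphisms, it suffices to prove a structural poverty theorem for the weak polymorphisms of the pair $(G,H)$ and then mechanically convert that into an NP-hardness reduction from Gap Label Cover (i.e.\ via the PCP theorem).

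First I would unwind the polymorphism notion in this setting: a weak polymorphism of arity $n$ of $(G,H)$ is exactly a graph homomorphism $f \colon G^{\otimes n} \to H$ out of the $n$-fold tensor (categorical) power of $G$, and the collection of all of them over all $n$ forms a minion $\mathcal{M}(G,H)$. The crux is then a structure theorem: when $G$ and $H$ are non-bipartite, every $f \in \mathcal{M}(G,H)$ should have only a bounded number of ``essential'' coordinates --- or, more robustly, $\mathcal{M}(G,H)$ should admit a minion homomorphism into a minion of bounded essential arity. Non-bipartiteness is exactly what should make this go through, and it enters via the topology of tensor powers: by Weichsel's theorem $G^{\otimes n}$ stays connected and non-bipartite, its odd girth does not grow with $n$, and accordingly its chromatic number and the connectivity of its box complex stay bounded, so a homomorphism into the fixed graph $H$ cannot ``spread out'' over many coordinates without producing a structure forbidden by the non-bipartiteness of $H$. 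For $G = K_c$, $H = K_t$ this is precisely the assertion that proper $t$-colorings of $K_c^{\otimes n}$ are, in a suitable robust sense, juntas.

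Granting such a structure theorem, I would then invoke the hardness half of the algebraic framework used for the paper's hard examples (such as $(2+\epsilon)$-SAT): a PCSP whose polymorphism minion has bounded essential arity up to minion homomorphism is NP-hard via a long-code / dictatorship-test reduction from Label Cover, with soundness coming from exactly the ``few essential coordinates'' property and completeness furnished by the given homomorphism $\phi \colon G \to H$, which supplies the honest dictator polymorphisms $f(x_1,\dots,x_n) = \phi(x_i)$. The hypothesis that both graphs be non-bipartite is necessary and shows up naturally on this side too: if either graph is bipartite the problem collapses to graph $2$-coloring (hence is in $\mathsf{P}$), and correspondingly the polymorphism minion becomes rich --- it supports parity-like, abelian-group structure --- so no bounded-essential-arity structure theorem can hold.

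The main obstacle is the structure theorem in the first step, and it is a deep one: controlling $\mathcal{M}(K_c,K_t)$ for all $3 \le c \le t$ is exactly the notoriously open approximate graph coloring problem, and no single technique currently reaches all regimes --- analytic and combinatorial arguments handle restricted slices (small $c$, or $t$ close to $c$), and topological arguments on box complexes of tensor powers are a natural but as yet incomplete tool, while a uniform understanding of homomorphisms out of tensor powers of arbitrary non-bipartite graphs is missing. So what I am really proposing is a program: establish the right robust-junta structure theorem for homomorphisms $G^{\otimes n} \to H$ with $G,H$ non-bipartite, after which NP-hardness follows from the machinery already developed here.
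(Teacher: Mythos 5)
You are not in a position to compare your writeup against a proof in the paper, because there is none: the statement you chose is a \emph{Conjecture}, deliberately left open, and its special case $G=K_c$, $H=K_t$ is the notorious approximate graph coloring problem, which is unresolved for all but a handful of parameter ranges. What you have written is, as you yourself acknowledge near the end, a research program rather than a proof, and that caveat belongs up front.

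As a program, the skeleton is the natural one and is consistent with the paper's philosophy: weak polymorphisms of the promise relation $(E(G),E(H))$ are exactly graph homomorphisms from the $n$-fold tensor power $G^{\otimes n}$ to $H$; by the Galois correspondence (Appendix~\ref{app:Galois}) the complexity of the promise problem is governed by this family; completeness in a Label Cover reduction comes for free from the dictator polymorphisms $f(x_1,\dots,x_n)=\phi(x_i)$ furnished by the assumed homomorphism $\phi: G\to H$; and soundness would follow from a ``bounded essential arity'' structure theorem for homomorphisms $G^{\otimes n}\to H$. The gap is exactly that missing structure theorem, and you correctly identify it as the entire open content. Two cautions, though. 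First, your heuristics for why non-bipartiteness should force a robust junta --- Weichsel connectivity, stable odd girth, box-complex connectivity --- are suggestive but none of them, alone or together, is known to yield the needed statement, and the refutation of Hedetniemi's conjecture shows that tensor powers of graphs carry genuine surprises, so one should not assume these invariants control homomorphisms out of $G^{\otimes n}$ as tightly as your narrative implies. Second, you cannot ``mechanically'' invoke the paper's hardness theorem (Theorem~\ref{thm:hardness}) once the structure theorem is in hand: that theorem is proved for Boolean, folded families whose weak polymorphisms are literally $C$-fixing, whereas your setting has a non-Boolean vertex domain, no folding, and a looser ``bounded essential arity up to minion homomorphism'' hypothesis. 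A non-Boolean, relaxed-structure version of the Label Cover reduction would itself need to be developed; this paper only gestures toward it, and that development is its own nontrivial piece of the program.
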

    Since originally posing this conjecture, a number of special cases have been solved, see Section~\ref{subsec:subsequent-work}.
    
 \bigskip
Given that PCSPs generalize CSPs and a dichotomy theorem for CSPs over
arbitrary domains is itself a long elusive challenge, in this work we focus on
Question~\ref{ques:PCSP-complexity-intro} for relations over the
Boolean domain. Even in this restricted setting, Boolean promise CSPs have {a richer} structure from that of Boolean CSPs (see Section~\ref{subsec:full-dichotomy}), rendering proving a generalization of Schaefer's dichotomy quite difficult, if it is even true.
In this work, we build the groundwork
for the complexity classification of promise PCPs, and prove a dichotomy
for the case of \emph{symmetric} Boolean promise CSPs allowing negations (Theorem~\ref{thm:intro-1} below). Negations can be enforced if $(P,Q) \in \Gamma$ where $P=Q=\{(0,1),(1,0)\}$; we say such a $\Gamma$ \emph{allows negations} or is \emph{folded}. 
A collection of
relation pairs $\Gamma = \{(P_i,Q_i) {: i \in [r]}\}$ is symmetric if each $P_i$
and $Q_i$ is a symmetric relation. A relation $P$ is symmetric if
$(a_1,a_2,\dots,a_l) \in P$ iff $(a_{\pi(1)},\dots,a_{\pi(l)}) \in P$
for every permutation $\pi \in S_l$. Note that a symmetric relation
$P \subseteq \{0,1\}^l$ can be specified by a set $S \subseteq
\{0,1,\dots,l\}$ such that $P = \{ x \in \{0,1\}^l \mid |x| \in S\}$.

\begin{thm}[Main]
\label{thm:intro-1}
Let $\Gamma$ be a symmetric collection of Boolean relation pairs that allows negations. Then
$\PCSP(\Gamma)$ is either in P or NP-hard.
\end{thm}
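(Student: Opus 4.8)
The plan is to work entirely through weak polymorphisms. By the Galois correspondence for PCSPs established in this paper, the complexity of $\PCSP(\Gamma)$ depends only on its set of weak polymorphisms $\mathrm{wPol}(\Gamma)$, so the task reduces to: classify $\mathrm{wPol}(\Gamma)$ and, for each possibility, produce either an algorithm or an NP-hardness reduction. Two structural features drive the classification. Since $\Gamma$ is folded, every $f\in\mathrm{wPol}(\Gamma)$ of arity $n$ is \emph{folded}, i.e.\ $f(\1-x)=1-f(x)$. Since every $P_i,Q_i$ is symmetric — determined by a set of admissible Hamming weights $S_i\subseteq\{0,\dots,l_i\}$, with $T_i\supseteq S_i$ on the $Q$-side — membership of a candidate $f$ in $\mathrm{wPol}(\Gamma)$ reduces to a transparent condition on the values $f$ takes on inputs of prescribed Hamming weight; together with the fact that $\mathrm{wPol}(\Gamma)$ is a minion (closed under permuting and identifying coordinates), this makes $\mathrm{wPol}(\Gamma)$ explicitly analyzable, and the structural characterization of weak-polymorphism sets established in this paper can be invoked.

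The dichotomy criterion I would aim to establish is: $\PCSP(\Gamma)$ is in $\mathsf{P}$ if either (i)~$\mathrm{wPol}(\Gamma)$ contains $\Par_q$ for all odd $q$, or (ii)~$\mathrm{wPol}(\Gamma)$ contains, for every odd $q$, a $q$-ary polymorphism far from every dictator — one may take this to be $\Maj_q$ or the alternating-threshold function $\Malt_q$, and in some cases both types are needed — and $\PCSP(\Gamma)$ is NP-hard otherwise. To make this criterion checkable I would first carry out a finite case analysis on the shapes $(S_i,T_i)$, reducing a general symmetric folded $\Gamma$ to a bounded list of canonical relation pairs (near-balanced thresholds, parity, ``not-all-equal''-type, and so on) and determining for each which of (i)/(ii) holds.

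On the algorithmic side, in case (i) one shows that having all odd parities as weak polymorphisms forces each $P_i$ to sit between its $\F_2$-affine hull $V_i$ and $Q_i$, i.e.\ $P_i\subseteq V_i\subseteq Q_i$; hence the affine problem $\CSP(\{V_i\})$, solved by Gaussian elimination, correctly decides the promise. In case (ii) one runs the basic linear-programming relaxation of $\CSP(\{P_i\})$: a Yes-instance is $P$-satisfiable, so the LP is feasible, and I would use the surviving polymorphism family ($\Maj_q$, resp.\ $\Malt_q$) to round any feasible LP solution to an assignment satisfying $\CSP(\{Q_i\})$. The rounding here is not the standard CSP one — it reads thresholds off the LP marginals and genuinely uses the extra width of each $Q_i$ relative to $P_i$ — and is one of the new ingredients the promise setting forces.

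The hardness half is where I expect the real difficulty, and it has two layers. First, a structural lemma: when neither (i) nor (ii) holds, show there is a constant $C=C(\Gamma)$ such that every $f\in\mathrm{wPol}(\Gamma)$ is controlled by at most $C$ of its coordinates (a bounded-size ``core''), which amounts to a delicate combinatorial argument pinning down which folded symmetric functions can simultaneously respect a ``hard'' shape. Second, the reduction: feed this bounded-core property into the AGH-style hardness machinery — a reduction from a gap version of Label Cover through a dictatorship-style test whose consistency is enforced precisely by the scarcity of weak polymorphisms — and verify it respects the promise, so that $P$-satisfiable instances become satisfiable Label Cover instances and $Q$-unsatisfiable instances become Label Cover instances of vanishing value. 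Soundness must be argued using only the structural weakness of $\mathrm{wPol}(\Gamma)$, not an explicit hard relation, and this is the point that calls for techniques beyond the usual CSP hardness reductions.
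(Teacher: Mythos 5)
Your proposal follows the paper's architecture very closely: reduce via the Galois correspondence to classifying $\poly(\Gamma)$, show that having one of a small list of infinite polymorphism families yields a polynomial-time algorithm (Gaussian elimination over $\mathbb F_2$ when the Parity family is present, a basic-LP-based algorithm when Majority or Alternating-Threshold is present), show that in the absence of all these families every weak polymorphism is ``$C$-fixing'' for a constant $C$ depending only on $\Gamma$, and then drive a Label Cover reduction off the $C$-fixing property. The case analysis you sketch (reducing arbitrary symmetric $(S_i,T_i)$ to a short list of canonical promise relations) is precisely how the paper organizes Section 4, and the LP rounding argument you outline matches the proof of correctness in Section 3.2.

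There is, however, one genuine gap in the dichotomy criterion you state. Because $\Gamma$ allows negations but not necessarily the unary constant relations, its weak polymorphisms are folded but need not be idempotent, and so the families $\overline{\Par}_L$, $\overline{\Maj}_L$, $\overline{\Malt}_L$ (the negations of Parity, Majority, Alternating-Threshold) can occur \emph{without} their unbarred counterparts. Your criterion ``(i) Parity for all odd $q$, or (ii) Majority/Alternating-Threshold for all odd $q$'' misses these, and the paper gives an explicit counterexample: $P = \Ham_5(\{2\})$, $Q = \Ham_5(\{1,2,3,5\})$ with NOT has none of $\Par_L$, $\Maj_L$, $\Malt_L$ for all odd $L$, yet is tractable because $\overline{\Par}_L$ is a weak polymorphism for all odd $L$. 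Without this case your classification lemma (``if (i) and (ii) fail, every $f\in\poly(\Gamma)$ is $C$-fixing'') is false, and the hardness reduction would be run on a tractable instance. The paper's fix is a separate reduction lemma (Lemma~\ref{lem:idempotent}) showing that the idempotent polymorphisms of $\Gamma\cup\{\setzero,\setone\}$ and of $(\neg\Gamma)\cup\{\setzero,\setone\}$ together capture $\poly(\Gamma)$, so non-idempotent cases reduce to idempotent ones; you would need some analogue. A smaller point: ``in some cases both types are needed'' is not how the dichotomy reads---any single one of the six families for all odd $L$ already implies tractability; they are not combined.
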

While the symmetry requirement is a significant restriction, it is a
natural subclass that still captures several fundamental problems, such as
$k$-SAT, Not-All-Equal-$k$-SAT, $t$-out-of-$k$-SAT, Hypergraph
Coloring, Bipartiteness, Discrepancy minimization, etc. In all these
cases, whether a constraint is satisfied only depends on the number of
variables set to $1$ (negations can be enforced via the symmetric
relation $\{(0,1),(1,0)\}$). Note that Horn SAT is an example of a CSP
that is \emph{not} symmetric.

We establish Theorem~\ref{thm:intro-1} via a characterization of all
the tractable cases, and showing that everything else is NP-hard. To
describe our results in greater detail, and to highlight the
challenges faced in extending Schaefer's theorem to the land of
promise CSPs, we now turn to the algebraic approach to study
$\CSP(\Lambda)$ via polymorphisms of the underlying relations.

Polymorphisms are operations that preserve membership in a
relation. Formally, $f: \{0,1\}^m \to \{0,1\}$ is a polymorphism of $P
\subseteq \{0,1\}^k$, denoted $f \in \Pol(P)$, if for all
$(a^{(i)}_1,\dots,a^{(i)}_k) \in P$, $i=1,2,\dots,m$,\\
$\bigl(f(a^{(1)}_1,a^{(2)}_1,\dots,a^{(m)}_1) , \cdots ,
f(a^{(1)}_k,a^{(2)}_k,\dots,a^{(m)}_k) \bigr) \in P$. For a collection
$\Lambda$ of relations, $\Pol(\Lambda) = \cap_{P \in \Lambda}
\Pol(P)$. Remarkably, the complexity of $\CSP(\Lambda)$ is completely
captured by $\Pol(\Lambda)$. The \emph{Galois
  correspondence}~\cite{jeavons88} states that if $\Pol(\Lambda')
\subseteq \Pol(\Lambda)$ then any relation in $\Lambda$ can be build from $\Lambda'$ using 
primitive positive reductions (see Section~\ref{app:Galois}), from which it follows that $\CSP(\Lambda)$ reduces to
$\CSP(\Lambda')$.  Note that all dictator functions (called
projections in the decision-CSP literature), $f(x_1,\dots,x_m) = x_j$ for some $j$,
always belong to $\Pol(\Lambda)$.

The algebraic dichotomy {theorem, proved by Bulatov~\cite{Bulatov2017} and Zhuk~\cite{Zhuk2017},} states that $\CSP(\Lambda)$ is in
$\mathsf{P}$ iff $\Pol(\Lambda)$ contains a certain kind of
polymorphism known as a weak-near unanimity operator
(see~\cite{BartoKozik:2009}). The NP-hardness half of this
{theorem was resolved a decade before the full dichotomy} (see~\cite{BartoKozik:2009} and
\cite{BJK05, Bulatov2000, Maroti2008}): if
$\Pol(\Lambda)$ fails to have a weak near-unanimity
operator,\footnote{According to~\cite{BartoKozik:2009}, a function $f : D^n \to D$, $n \ge 2$, is a weak near-unanimity
  operator if for all $a, b \in D$, $f(a, \hdots, a) = a$, a property
  known as idempotence, and $f(b, a, \hdots, a) = f(a, b, a, \hdots,
  a) = \cdots = f(a, \hdots, a, b).$ A
  simple example of a weak near-unanimity operator is the Boolean
  majority function on an odd number of variables.} then
there exists an NP-hard reduction.

The algebraic formulation of Schaefer's dichotomy theorem states that
a Boolean $\CSP(\Lambda)$ is tractable if $\Pol(\Lambda)$ contains
one of the six functions: constant $0$, constant $1$, Majority on $3$
variables, Boolean AND, Boolean OR, or parity of $3$
variables;\footnote{The first two cases are CSPs satisfied trivially
  by the all $0$s or all $1$s assignment; Majority corresponds to
  2SAT; AND and OR to Horn SAT and dual Horn SAT; and Parity to linear
  equations mod $2$.}  otherwise $\CSP(\Lambda)$ is NP-complete. We
refer the reader to the article by Chen~\cite{Chen:2009} for an
excellent contemporary treatment of Schaefer's theorem for Boolean
domains in the language of polymorphisms. For larger domains, there
has been a lot exciting recent progress, including the resolution of
the bounded width conjecture by Barto and
Kozik~\cite{BartoKozik:2009,BartoKozik:2014} proving a precise
characterization of when a natural local consistency algorithm works
for $\CSP(\Lambda)$ in terms of the structure of $\Pol(\Lambda)$.

Generalizing the situation for CSPs, it is natural to hope the
complexity of PCSPs {will also have some algebraic
  structure. Austrin, Guruswami and H\aa
  stad}~\cite{DBLP:journals/siamcomp/AustrinGH17} {pioneered
  the study of \emph{polymorphisms of PCSPs}}.\footnote{Austrin,
  Guruswami and H\aa stad referred to these objects as \emph{weak
    polymorphisms.}} A function $f : \{0,1\}^m \to \{0,1\}$ is
polymorphism for a pair of relations $(P,Q)$, denoted
$f \in \Pol(P,Q)$, if $f$ maps any $m$ inputs in $P$ to an output in
$Q$. When $P=Q$, this is just the notion of a CSP polymorphism for
$P$.

In Section~\ref{app:Galois}, we prove a generalization of the Galois
correspondence from CSPs to promise CSPs, establishing that the
complexity of a PCSP is captured by its polymorphisms (closely
related results, albeit without complexity-theoretic applications in
mind, are established by Pippenger~\cite{Pippenger2002}). Therefore
polymorphisms are the `right' approach to study the complexity of
promise CSPs.  When studying promise CSPs under the lens of polymorphisms, however, several challenges surface that didn't exist
in the world of CSPs. From an algebraic point of view, 
polymorphisms are no longer closed under composition (because after one
application, we no longer have an assignment in $P$, but rather for a
different relation $Q$). In universal algebra parlance, polymorphisms of PCSPs do not necessarily form a ``clone.''\footnote{Recently, the term \emph{clonoid} or \emph{minion} has been used to describe this large class of families.}  The dichotomy theorem for
Boolean CSPs can avail of a classification of all Boolean clones which
dates back to 1941~\cite{post} (again, see \cite{Chen:2009} for a
crisp presentation). In the world of promise CSPs, polymorphisms
belong to a broader class of {algebraic structures}, and it is a lot more
challenging to understand their structure; see
Section~\ref{subsec:full-dichotomy}.
 
From a complexity point of view, the distinction between easy and hard
is now more nuanced; the existence of any particular polymorphism doesn't
itself imply tractability. Indeed, for the $(2+\epsilon)$-SAT problem
mentioned earlier, majority of small arity is a polymorphism even
though the promise CSP is NP-hard.  At an intuitive level, we might
expect a PCSP to be easy if there are polymorphisms that
``genuinely'' depend on a lot of variables, and hard if a few
variables exert a lot of influence on the function. The precise way to
formalize this notion that captures the boundary between tractable and
hard is not yet clear. In \cite{DBLP:journals/siamcomp/AustrinGH17}, hardness was shown when
the only polymorphisms were juntas, functions which depends on a
bounded number of coordinates; in this work we relax this condition to
the existence of a small number of coordinates setting all of which to
$0$ fixes the function.

 In addition to establishing the hardness of many natural PCSPs, we
 also demonstrate the existence of new polynomial-time tractable
 PCSPs. As an example, consider a hypergraph $H = (V, E)$ such that
 all of its edges have bounded valence (but not all the valences need
 to be the same). Furthermore, for each $e = \{v_1, \hdots, v_k\} \in
 E$, we specify a hitting number $h_e \in \{1, \hdots, k-1\}$. Then,
 it is {polynomial-time} tractable to distinguish between the
 following two cases (1) there exists a two-coloring of the vertices
 of $H$ such that for all $e \in E$ the number of vertices of the
 {first color} is exactly $h_e$ and (2) every two-coloring of the
 vertices of $H$ leaves a monochromatic hyperedge. Formally, this is a
 PCSP with relations of the form $P = \{x \in \{0,1\}^k \mid |x| =
 a\}$ for any choice of $0 < a < k$ and $Q = \{x \in \{0,1\}^k\mid |x| \in
 \{1,2,\dots,k-1\}\}$. In essence, this PCSP is a hypergraph
 generalization of what makes 2-coloring for graphs efficient. The
 algorithm for solving this problem is based on linear
 programming. Unlike other CSPs and PCSPs, the proof of correctness
 uses the \emph{\Maltsev{} polymorphism}, a function which takes as
 input $x_1, \hdots, x_L \in \{0, 1\}$ ($L$ odd) and returns whether
 $x_1 - x_2 + \cdots - x_{L-1} + x_L$ is positive. In the Boolean
 setting for CSPs or PCSPs, this is the first non-symmetric
 polymorphism known to yield a polynomial time algorithm.\footnote{If
   we remove the symmetric condition on the relations, it turns our
   that non-symmetric polymorphisms are the norm, even in the Boolean
   case, see Section~\ref{app:theory}.}

 We now informally state the main dichotomy (for a formal statement see Theorem \ref{thm:main-result}) in two ways. First, we give an explicit characterization in terms of the structure of the PCSP itself. For simplicity, we only state a subset of the main result in this form. {In particular, we allow for the ``negation of variables,'' that if $x_i$ can be used in a clause, then $\bar{x}_i$ can also be used and for the ``setting of constants,'' we can force a variable in a clause to be the constant $0$ or $1$.}

 \begin{thm}\label{res:simple}
   Let $P \subseteq Q \subset \{0, 1\}^k$ be a symmetric pair of
   relations. Let $\Gamma$ contain the promise relation $(P, Q)$ as
   well as allow for negation of variables--$(\{(0, 1), (1, 0)\},
   \{(0, 1), (1, 0)\})$--and the setting of constants--$(\{0\},
   \{0\})$ and $(\{1\}, \{1\})$--(e.g. $x_i = 0$). Let $S = \{|x| \mid x \in P\}$ and $T  = \{|x| \mid x \in Q\}$. Furthermore, assume that $S \cap \{1, \hdots, k - 1\}$ is nonempty. Then, $\PCSP(\Gamma)$ is polynomial-time tractable if

   a) $S \subseteq \{\ell \in [k] \mid \ell \text{ odd}\} \subseteq T$
   or $S \subseteq \{\ell \in \{0\}\cup[k] \mid \ell \text{ even}\}
   \subseteq T$ or

   b) $T \supseteq \{0, 1, \hdots, k\} \cap \{2\min S - k + 1, \hdots,
   2\max S - 1\}$ or

   c) $|S| = 1$ and $T \supseteq \{1, \hdots, k - 1\}$.
   
   Otherwise, $\PCSP(\Gamma)$ is $\mathsf{NP}$-hard.
 \end{thm}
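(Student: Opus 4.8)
The plan is to prove the theorem by isolating the three tractable regimes through the presence of certain canonical weak polymorphisms --- $\Par$ (parity of an odd number of arguments) in case (a), the majority operations $\Maj$ in case (b), and the alternating-threshold operations $\Malt$ in case (c) --- and, in every other case, showing that the weak polymorphisms of $\Gamma$ are degenerate enough for the weak-polymorphism hardness machinery to apply. Since $\Gamma$ allows negations and constants, a function $f\colon\{0,1\}^m\to\{0,1\}$ is a weak polymorphism of $\Gamma$ iff $f(0^m)=0$, $f(\1-x)=1-f(x)$, and $f\in\poly(P,Q)$; equivalently, for every multiset $M$ of $k$ points of $\{0,1\}^m$ such that each of the $m$ coordinates sums, over the points of $M$, to an element of $S$, the number of $y\in M$ with $f(y)=1$ lies in $T$. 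This combinatorial picture drives both halves of the proof.

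\emph{Tractability.} In case (a) every string of $P$ has weight of a fixed parity $\varepsilon\in\{0,1\}$ and $T$ contains every weight of parity $\varepsilon$, so $\Par$ is a weak polymorphism, and the algorithm is Gaussian elimination over $\F_2$: replace each clause by the $\F_2$-linear equation asserting that its number of literals set to $1$ is $\equiv\min S\pmod 2$, add the equations of the constant constraints, and solve. Inconsistency certifies that the instance is not $P$-satisfiable, so we answer No; any solution makes every clause's literal-weight have parity $\varepsilon$ and hence lie in $T$, so we answer Yes. In cases (b) and (c) the algorithm is instead a linear programming relaxation: introduce $x_v\in[0,1]$ for each variable, impose $\min S\le\sum_i\ell_{v_i}\le\max S$ for each clause (with $\ell_v\in\{x_v,1-x_v\}$ handling negations) together with the constant constraints, and answer Yes iff the LP is feasible. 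A $P$-satisfying assignment is an integral feasible point, so feasibility holds on Yes-instances; the content is that every feasible point rounds to a $Q$-satisfying assignment, which is certified by the canonical polymorphisms. Indeed, condition (b) is precisely the assertion that $\Maj$ on any large odd arity sends weight-$S$ tuples to tuples of weight in $\{2\min S-k+1,\dots,2\max S-1\}\cap\{0,\dots,k\}\subseteq T$, the bounds coming from an incidence count with integrality closing the naive off-by-one gaps; and in case (c), where $S=\{a\}$ with $1\le a\le k-1$, $\Malt$ applied to weight-$a$ tuples never outputs $0^k$ (the alternating column-sum equals $a>0$) and never outputs $1^k$ (that sum is $<k$), hence outputs weight in $\{1,\dots,k-1\}\subseteq T$. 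A rounding mirroring these operations turns a feasible LP point into a $Q$-satisfying assignment; the one delicate point is the treatment of fractional (e.g.\ half-integral) LP points, handled by discretizing and invoking the polymorphism of a suitable arity.

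\emph{Hardness.} For the remaining cases I would invoke the Galois correspondence for PCSPs (Appendix~\ref{app:Galois}) together with the refined hardness criterion: if there is a constant $c$ such that every weak polymorphism $f$ of $\Gamma$ has a set $C$ of at most $c$ coordinates with $f$ constant on $\{x:x_i=0\text{ for all }i\in C\}$, then $\PCSP(\Gamma)$ is $\mathsf{NP}$-hard, via a Label-Cover-plus-long-code reduction generalizing that of \cite{AGH-focs14} (relaxing ``junta'' to ``bounded $0$-fixing set'' is what forces new reduction ideas). The crux is then the purely combinatorial claim that failure of (a), (b), and (c) implies this criterion. I would prove it by a case analysis governed by where $S$ and $T$ sit inside $\{0,\dots,k\}$ --- whether $0\in T$, whether $k\in T$, the spread $\max S-\min S$, and the position of $S$ relative to $k/2$ --- with $\Par$, $\Maj$, and $\Malt$ delimiting the cases: a weak polymorphism with no bounded $0$-fixing set would, through combinatorial manipulations exploiting the symmetry of $P$ and $Q$ and the folding of $\Gamma$, yield one of $\Par$, $\Maj$, or $\Malt$ of unbounded arity as a weak polymorphism, which by the computations above would force (a), (b), or (c) --- a contradiction. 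Extracting a canonical polymorphism from an arbitrary non-degenerate one, in a setting where the weak polymorphisms form no clone, is the step I expect to be the main obstacle and where techniques beyond the classical CSP analysis are required.
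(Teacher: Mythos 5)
Your proposal follows the same high-level route as the paper: identify conditions (a), (b), (c) with the presence of $\Par$, $\Maj$, and $\Malt$ (respectively) as weak polymorphisms, solve the tractable cases by $\F_2$-Gaussian elimination (Parity) or an LP relaxation (Majority, Alternating-Threshold), and prove hardness by showing that failure of all three forces every weak polymorphism to be $C$-fixing, then plugging into a Label-Cover reduction. The verification that (b) and (c) coincide with the Majority/AT polymorphism conditions is also the same calculation the paper carries out (Claims~4.5 and~4.6). Your framing of the hardness classification --- extract a canonical polymorphism from a non-$C$-fixing one --- is the contrapositive of what the paper proves directly via the relaxation lemmas (Lemmas~4.4, 4.6) plus the additive-combinatorics Lemma~4.7; you are right to flag this as the main obstacle, and it is indeed the technical heart of the argument that your sketch leaves open.

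One concrete issue: your LP relaxation imposes only $\min S \le \sum_i \ell_{v_i} \le \max S$ per clause. This equals the convex hull of $\Ham_k(S)$ when $|S|=1$ (so case (c) is fine), but is strictly weaker when $S$ has gaps. For instance with $k=10$, $S=\{2,4\}$, $T=\{0,\dots,7\}$ (which satisfies (b)), the point $(1,1,1,0,\dots,0)$ lies in your interval polytope but is not a convex combination of weight-$2$ and weight-$4$ Booleans --- its support is too small to accommodate any weight-$4$ vector. The paper's soundness argument decomposes a feasible LP point, per clause, into a convex combination of elements of $P$ and then applies $\Maj_L$ to the resulting multiset; this decomposition need not exist for your weaker feasible region, so the rounding step is not justified. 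The fix is what the paper does: constrain each clause's marginals to lie in $\mathrm{conv}(P)$ itself, which is a fixed-size polytope and still yields a polynomial-size LP.
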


 Second, we give {a complete and elegant} ormulation of the dichotomy in terms of the polymorphisms of the PCSP instead of the PCSP itself.

 \begin{thm}[{Theorem~\ref{thm:main-result}}]\label{res:poly}
   Let $\Gamma$ be a family of pairs of symmetric relations which allows for negations as well as the setting of constants. Then, $\PCSP(\Gamma)$ is polynomial-time tractable if

   a) The Parity of $L$ variables {or the negation of Parity of $L$ variables} is a polymorphism of $\Gamma$
   for all odd $L$ or

   b) The Majority of $L$ variables {or the negation of Majority of $L$ variables} is a polymorphism of $\Gamma$
   for all odd $L$ or

   c) The \Maltsev{} of $L$ variables {or the negation of \Maltsev{} of $L$ variables} is a polymorphism of $\Gamma$ for all odd $L$.

   Otherwise, $\PCSP(\Gamma)$ is $\mathsf{NP}$-hard.
 \end{thm}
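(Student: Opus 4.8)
The plan is to prove Theorem~\ref{res:poly} in two halves: tractability of each of the three listed cases, and NP-hardness of everything else. For the tractability direction, I would first translate each polymorphic hypothesis into a structural statement about the symmetric relation pairs in $\Gamma$ via the correspondence between $|x|$-sets and weak polymorphisms. If Parity of every odd arity is a weak polymorphism, then each pair $(P_i,Q_i)$ must satisfy that $S_i$ lies inside the odd (or all-even) residues and $T_i$ contains them; in that case a given instance can be solved by Gaussian elimination over $\F_2$ (write each symmetric constraint as an affine constraint on the Hamming weight mod~$2$), and one checks that a solution to the linear system, viewed in $Q_i$, satisfies the relaxed instance. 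If Majority of every odd arity is a weak polymorphism, the relevant $T_i$ contains a ``middle interval'' of Hamming weights, and the natural basic LP relaxation — assign each variable a value in $[0,1]$, require the sum over each constraint scope to land in the convex hull of $S_i$ — has integrality properties that let us round to a genuine assignment landing in $Q_i$; this is the LP-rounding algorithm alluded to in the introduction. The \Maltsev{} case is the genuinely new one: here $|S_i|=1$ for the relevant relations, and I would run the same LP relaxation but argue correctness using the \Maltsev{} (alternating-threshold) polymorphism to show that any LP-feasible fractional point can be decomposed into integral points of $P_i$, so rounding succeeds into $Q_i$.

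For the hardness direction, the strategy is contrapositive: assume none of (a), (b), (c) holds, and produce an NP-hardness reduction. The key tool is the relaxed notion of ``hard'' weak polymorphisms the paper describes — rather than requiring all weak polymorphisms to be juntas (as in AGH), it suffices that there is a bounded set of coordinates such that fixing all of them to $0$ fixes the function. So the main structural step is: if $\Gamma$ is symmetric, folded, allows constants, and fails all three tractability conditions, then every weak polymorphism of $\Gamma$ has such a small ``$0$-fixing'' set of coordinates. I would prove this by a case analysis on the symmetric structure: the failure of (a) rules out Parity-type polymorphisms of all large arities; the failure of (b) rules out Majority-type (more generally, polymorphisms that are ``spread out'' and symmetric-ish) of all large arities; the failure of (c) rules out the \Maltsev{}-type. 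One then argues that a symmetric weak polymorphism that is none of these three kinds, on a relation pair that is appropriately ``tight'', must be dominated by few coordinates. Finally, feed this structural dichotomy into the hardness machinery (a PCP/label-cover style reduction, using the Galois correspondence established in Appendix~\ref{app:Galois} to reduce from a known NP-hard PCSP) to conclude $\PCSP(\Gamma)$ is NP-hard.

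I expect the main obstacle to be the hardness direction's structural classification — showing that the absence of Parity, Majority, and \Maltsev{} weak polymorphisms of unbounded arity forces \emph{all} weak polymorphisms to have a bounded $0$-fixing coordinate set. This is where the symmetry of the relations must be exploited carefully: one has to understand which symmetric-input behaviors a weak polymorphism $f$ of a symmetric pair $(P_i,Q_i)$ can exhibit, and rule out intermediate behaviors that are neither junta-like nor one of the three ``good'' global operations. Handling the interaction of several relation pairs simultaneously (a weak polymorphism must work for all of $\Gamma$ at once), together with the folding and constant-setting assumptions that let us normalize the relations, is the delicate part; the LP and Gaussian-elimination algorithms on the tractable side, and the reduction on the hardness side once the structure is in hand, are comparatively standard.
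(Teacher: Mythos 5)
Your outline correctly identifies the paper's overall strategy — sandwiching a tractable CSP for the Parity case, an LP relaxation for the Majority and \Maltsev{} cases, and a structural classification of weak polymorphisms (the ``bounded $0$-fixing set'' property, called $C$-fixing in the paper) feeding a Label Cover reduction for the hardness half. The classification step you flag as the main obstacle is indeed where the paper invests most of its effort: it relaxes to a short list of canonical symmetric relation pairs (Lemmas~\ref{lem:no-maltsev} and~\ref{lem:no-majority}), then uses an additive-combinatorics argument (Lemma~\ref{lem:sum-game}) to bound $|\{i : f(e_i)=1\}|$, and finally bootstraps that bound into full $C$-fixing (Lemma~\ref{lem:noMaj}). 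Your intuition that the absence of all three ``good'' infinite families forces junta-like behavior is precisely Theorem~\ref{thm:junta}, and the Label Cover reduction in Section~\ref{sec:hardness} is a direct reduction rather than a hardness transfer via the Galois correspondence (the Galois correspondence is invoked to justify passing to relaxed relation pairs, not to reduce from a known NP-hard PCSP, which is a subtle but real distinction since those relaxed pairs are themselves often tractable).

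There are, however, genuine gaps in your description of the tractability side. First, your LP constraint --- requiring ``the sum over each constraint scope to land in the convex hull of $S_i$'' --- is strictly weaker than the correct constraint, which is that the \emph{tuple} of variable values on the scope lies in $\mathrm{conv}(P_i)$. For $P = \Ham_k(\{0,k\})$ (the equality relation, which arises once constants are in $\Gamma$), your LP imposes nothing while the paper's LP forces all scope variables to be equal, so your relaxation would lose soundness. Second, the paper's LP algorithm is not ``solve the LP and round'': it iteratively re-solves the LP with each variable pinned to $0$ or to $1$, and the proof of soundness uses the resulting matrix of LP solutions together with a perturbation argument and the Majority (resp.\ \Maltsev{}) polymorphism to extract an integral $\Psi_Q$-satisfying assignment. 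Your phrase ``any LP-feasible point decomposes into integral points of $P_i$, so rounding succeeds'' is just restating the LP constraint locally per clause; the nontrivial step of producing a single globally consistent integral assignment across all clauses is exactly what the polymorphism is used for, and that argument differs substantially between the Majority and \Maltsev{} cases. Finally, note the paper's algorithms are deliberately phrased without using symmetry at all (symmetry only enters in the hardness classification), whereas your Parity and LP sketches lean on the symmetric structure; this works but is less general.
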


 {Note that we also need to consider the negations of polymorphisms in the full result.} To illustrate this, consider $P = \{x \in \{0, 1\}^5 \mid |x| \in \{2\}\}$ and $Q = \{x \in \{0, 1\}^5 \mid |x| \in \{1, 2, 3, 5\}\}$ and allow negations. {It is not hard to show that none of Parity, Majority, or \Maltsev{} on $11$ variables is in $\Pol(P, Q)$,} yet $\PCSP(\Gamma)$ is polynomial-time {tractable.} {Instead the negation of Parity or `anti-Parity' on $L$ variables is a polymorphism of $\Pol(P, Q)$ for all odd $L$.} In Section \ref{subsec:idempotence}, we show how to handle this technical {issue.}

\subsection{Proof Overview}\label{subsec:proof-overview}

The proof of the main theorem consists of three major parts. First, in Section \ref{sec:alg} we show that any PCSP which has one of these families of functions as a polymorphism--Parity, Majority, or Alternating-Threshold, or their {negations, which we call \emph{non-idempotent polymorphisms}}--has a polynomial time algorithm. The algorithms we demonstrate are quite general in that the only assumption we make is the existence of polymorphisms, in particular we do \emph{not} rely on the symmetry assumption. For Parity, we show that the problem can be reduced to an ordinary CSP with Parity as a polymorphism, and thus Schaefer's theorem can be invoked. For Majority and Alternating-Threshold, such a tactic cannot be used. Instead, we show how these problems can be written as linear programming relaxations. Surprisingly, identical algorithms are used in both cases to solve the decision problem. They do diverge, however, if one desires to use the LP relaxation to also find a solution when the PCSP is satisfiable. To deal with {non-idempotent polymorphisms}, we show in Section \ref{subsec:idempotence} how these PCSPs with {non-idemotent polymorphisms} can be reduced in polynomial time to PCSPs with their negations (that is, the {``normal'' or idempotent} polymorphisms), which we already know are polynomial{-}time solvable.

Second, in Section \ref{sec:classification}, for every symmetric PCSP
with negations that does not have the entirety of any of the mentioned
families of polymorphisms, we show that its polymorphisms
are `lopsided.' More precisely, we show that there exists a constant
$C$, only dependent on the type of the PCSP, such that for all
polymorphisms of the PCSP, there are $C$ coordinates such that setting all $C$ of those coordinates to the same value fixes the value of the polymorphism. We say that such polymorphisms are ``$C$-fixing.'' The general philosophy of the argument is as follows. First, since our $\Gamma$ fails to have \Maltsev{} on $L$ variables for some odd $L$ as a polymorphism, there is some $(P, Q) \in \Gamma$ responsible for this exclusion. Using a nuanced combinatorial argument, we attempt to classify the polymorphisms of $(P, Q)$ given that $P$ and $Q$ are symmetric. To simplify the proof, we first show that we may transform $(P, Q)$ into a canonical $(P', Q')$ without losing any polymorphisms (see Lemma \ref{lem:no-maltsev}). From this, we show that all polymorphisms $f$ of $\Gamma$ have the property that either $f(e_i)$ differs from $f(0, \hdots, 0)$ for a bounded number of $e_i$ or a substantial portion of $f$ is structured like the Parity polymorphism.  Since we assume that Parity of $L'$ variables is not a polymorphism of $\Gamma$ for some odd $L'$, we can show that the latter situation is impossible. Using another $(P'', Q'') \in \Gamma$ which fails to have Majority as a polymorphism, and after simplifying $(P'', Q'')$ to a canonical form, we can use arguments inspired from \cite{DBLP:journals/siamcomp/AustrinGH17} to obtain additional structural information which yields that all polymorphisms are $C$-fixing. We crucially exploit that $(P, Q)$ and $(P'', Q'')$ are symmetric to get these structural properties, but do not assume anything about the other clauses of $\Gamma$.

Finally, since we have pinned down the nature of the polymorphisms in these believed-to-be-hard PCSPs, in Section \ref{sec:hardness}, we prove the $\mathsf{NP}$-hardness of these PCSPs. We prove this by reducing from Label Cover, a well-known problem to reduce from for hardness of approximation proofs. This part of the proof is based on an argument of \cite{DBLP:journals/siamcomp/AustrinGH17}, but we greatly simplify how projection constraints are handled. With this hardness result established, the main theorem is proved.

\subsection{Must there be a Dichotomy?}\label{subsec:full-dichotomy}

Extending this dichotomy from the symmetric case to the full Boolean
case presents significant challenges, some of which perhaps suggest
that a dichotomy does not exist. {Compared to when this
  manuscript was first written, the possibility of a dichotomy seems
  more likely, but it has become more clear what the hurdles will be
  establishing such a result.}

{One challenge is the large variety of families of polymorphisms to consider.} In Section~\ref{app:theory}, we provide necessary and sufficient
conditions, adapting a result of Pippenger~\cite{Pippenger2002}, for a
family of functions $\mathcal F$ to satisfy $\mathcal F =
\Pol(\Gamma)$ for some PCSP $\Gamma$ (not necessarily Boolean). These
conditions, known as \emph{projection-closure} and
\emph{finitization}, are extremely flexible, allowing for an extremely
rich variety of polymorphisms. Note that these results liberate
us from ever thinking about $\Gamma$, and instead we can think
entirely in terms of establishing the easiness/hardness of
projection-closed, finitized families of functions. There is, however,
a caveat: there is a huge amount of freedom in finitizable,
projection-closed families of functions!

{However, a fair counterargument is that the polymorphisms themselves are too fine-grained of a perspective even for the CSP dichotomy. Rather, the \emph{identities} which the polymorphisms satisfy (e.g., whether the polymorphism is symmetric in its coordinates, etc.) are more important (e.g.,~\cite{Barto2017}). In particular, the recent work of Barto, Bulin, Krokhin, and Opršal~\cite{BartoBulinKrokhinEtAl2019} has shown it suffices to classify PCSPs by their ``height-1 identities.''}

{That said,} even in the Boolean setting{, both the algorithmic and hardness sides of a dichotomy resolution seem more daunting.}  For {an algorithmic} example, let $p$ be any prime number and let $S \subset \{0, \hdots, p - 1\}$ be a non-empty strict subset. Then, for each $L \in \mathbb N$ define $f^{(L)} : \{0, 1\}^L \to \{0, 1\}$ such that $f^{(L)}(x) = 0$ if  $|x| \mod p \in S$, and  $f^{(L)}(x) = 1$ if  $|x| \mod p \not\in S$. {Therefore,} if a $\PCSP$ $\Gamma$ has $f^{(L)} \in \Pol(\Gamma)$ for infinitely many $L$, then $\PCSP(\Gamma)$ can be efficiently solved using Gaussian elimination over $\mathbb F_p$! {Thus, algorithms for Boolean $\PCSP$ seem to require algorithms for CSPs over arbitrarily large domains.} {The subsequent work of \cite{BartoBulinKrokhinEtAl2019} shows that in fact CSP algorithms over infinite domains can be necessary! This alone, combined with the fact that dichotomy theorems for infinite-domain CSPs do not exist in general (e.g., \cite{bodirsky2012complexity}), should give some pause.} 

{The hardness side of the dichotomy is also vastly more complex for PCSPs than for CSPs. The hardness side of the CSP dichotomy theorem was resolved relatively early (e.g.,~\cite{BJK05}) with a gadget reduction from 3-SAT. On the other hand, even for Boolean PCSPs, the use of a PCP-like theorem for hardness seems necessary. For instance, the hardness of $(2+\epsilon)$-SAT in \cite{DBLP:journals/siamcomp/AustrinGH17} relied on the PCP theorem.\footnote{{Recent unpublished work of Barto-Kozik shows a ``Baby PCP theorem'' which might suffice in some cases.}}
Some other hardness results for PCSPs are based on very strong, conjectural versions of the PCP theorem. For instance, strong hardness results for approximate graph coloring rely on variants of the Unique Game Conjecture~\cite{DMR}, although new paths to circumventing this assumption have recently appeared~\cite{BartoBulinKrokhinEtAl2019,KrokhinOprsal2019,WrochnaZivny2019}. A recent work showed that Boolean ordered PCSPs, whose polymorphisms must be monotone functions, exhibit a dichotomy~\cite{BGS21}, but the hardness side relied on the ``Rich $2$-to-$1$ conjecture'' due to~\cite{BKM21}, which is a significant strenghtening of the $2$-to-$1$ conjecture of Khot~\cite{khot02}.}

{This leads the authors to conjecture that if a dichotomy of PCSPs does hold, then the algorithmic side will be resolved (long) before the hardness side. In fact, follow-up work~\cite{BGWZ20} has given a polynomial time (decision) algorithm for any PCSP with infinitely many symmetric polymorphisms; and it is not currently known if there is a tractable Boolean PCSP which is \emph{not} solved by this algorithm.\footnote{{Jakub Opršal showed this algorithm does not work for general non-Boolean (P)CSPs (see remark in \cite{BGWZ20}); however he conjectures that a natural extension of the algorithm may work for all CSPs~\cite{jakub}.}}}

\subsection{Subsequent Work}\label{subsec:subsequent-work}
 Since the original version of this paper~\cite{DBLP:conf/soda/BrakensiekG18}, there have been numerous follow-up works. Most notably, \cite{BulinKrokhinOprsal2019} developed a universal algebraic theory of PCSPs. In particular, generalizing the Galois correspondence of this paper, they show that the \emph{identities} which the polymorphisms satisfy suffice to determine the complexity. As a result, they were able to show that 3 versus 5 approximate graph coloring is NP-hard. A subsequent revision~\cite{BartoBulinKrokhinEtAl2019} contains additional results, including a result that the polynomial-time complexity of 1-in-3-SAT versus NAE-3-SAT cannot be attributed to any finite-domain CSP.

In this work, we assume that negations of variables are allowed in our dichotomy, a recent work~\cite{FicakKozikOlsakEtAl2019} removes this assumption by relaxing the notion of $C$-fixing.

On the complexity of $\PCSP(G, H)$, where $G \to H$ are non-bipartite graphs, \cite{KrokhinOprsal2019} showed that this problem is NP-hard whenever $H = K_3$. Interestingly, there hardness proof uses topologically inspired techniques which have been further developed in the recent work {of} \cite{WrochnaZivny2019}. See also~\cite{krokhin2020topology}.

On the algorithmic side, follow-up works of the authors \cite{BrakensiekGuruswami2019,BrakensiekGuruswami2020} have fleshed out many of the algorithmic techniques applied in this paper. In particular, \cite{BrakensiekGuruswami2020} show that any PCSP with infinitely many symmetric polymorphisms has an efficient decision algorithm, although constructing an efficient search algorithm in general is still an open question. The exact class of PCSPs for which this algorithm solves was classified in an expanded version~\cite{BGWZ20}.

\subsection{Organization}
 In Section~\ref{sec:PCSP}, we formally define the notion
of a PCSP as well as other tools and terminology which we will need in
investigating PCSPs. In Section~\ref{sec:alg}, we prove the
algorithmic portion of the main theorem. In
Section~\ref{sec:classification}, we give a structural characterization of PCSPs that is used to show hardness. In Section~\ref{sec:hardness}, we use the results of
Section~\ref{sec:classification} to complete the
$\mathsf{NP}$-hardness results of the main theorem. In Section~\ref{app:theory}, we prove some more general facts about PCSPs, including that polymorphisms precisely capture the computational complexity of PCSPs as well as classify the possible families of polymorphisms of PCSPs.

\section{Promise Constraint Satisfaction Problems}\label{sec:PCSP}

We develop a theory of the complexity of promise constraint satisfaction problems (PCSPs) analogous to that of `ordinary' CSPs such as found in \cite{Chen:2009}. We need to formally define what we mean by a PCSP.

\begin{df}\label{df:promise-relation}
Let $D$ be a finite domain. A \textit{relation} of arity $k$ is a subset $P \subseteq D^k$. A promise relation is a pair of relations $(P, Q)$ of arity $k$ such that $P \subseteq Q$.
\end{df}
We say that a relation is \textit{Boolean} if $D = \{0, 1\}$ (or more generally $|D| = 2$). For a given relation $P$, we will refer to it both as a subset of $D^k$ as well as its indicator function $P : D^k \to \{0,1\}$ ($P(x) = 1$ iff $x \in P$). It should be clear from context which notation for $P$ we are using. If $(P, Q)$ is a promise relation then $P(x) = 1 \implies Q(x) = 1$. When $P = Q$, the promise relation $(P, Q)$ is analogous to the relation $P$ in a CSP. In fact, when it is clear that we are referring to promise relations, we let $P$ denoted the promise relation $(P, P)$.

\begin{df}\label{df:(P,Q)-PCSP}
  Let $(P, Q)\subseteq D^k \times D^k$ be a promise relation. A \textit{$(P,Q)$-PCSP} is a pair of formulae $(\Psi_P, \Psi_Q)$, each with $m$ clauses on the variables $x_1, \hdots, x_n$ along with a variable-choice function $\ell : [m] \times [k] \to [n]$, such that $\Psi_P(x_1, \hdots, x_n) = \bigwedge_{i=1}^{m} P(x_{\ell(i, 1)}, x_{\ell(i, 2)}, \hdots, x_{\ell(i, k)})$ and $\Psi_Q(x_1, \hdots, x_n) = \bigwedge_{i=1}^{m} Q(x_{\ell(i, 1)}, x_{\ell(i, 2)}, \hdots, x_{\ell(i, k)})$. {Further,} let $\Gamma = \{(P_i, Q_i) \subseteq D^{k_i} \times D^{k_i} : i \in [r]\}$ be a set of promise relations over $D$ of possibly distinct arities. For $i \in [r]$ let $(\Psi_{P_i}, \Psi_{Q_i})$ be a $(P_i,Q_i)$-PCSP so that each PCSP is on the same variable set $x_1, \hdots, x_n$. A \textit{$\Gamma$-PCSP} is then a pair of formula $(\Psi_P, \Psi_Q)$ such that $\Psi_P(x_1, \hdots, x_n) = \bigwedge_{i=1}^r \Psi_{P_i}(x_1, \hdots, x_n)$ and $\Psi_Q(x_1, \hdots, x_n) = \bigwedge_{i=1}^r \Psi_{Q_i}(x_1, \hdots, x_n)$.
\end{df}

We say that $(\Psi_P, \Psi_Q)$ is \textit{satisfiable} if there exists $(x_1, \hdots, x_n) \in D^n$ such that $\Psi_P(x_1, \hdots, x_n) = 1$. That is, $\Psi_P$ is satisfiable in the usual sense. We say that $(\Psi_P, \Psi_Q)$ is \textit{unsatisfiable} if $\Psi_Q$ is unsatisfiable, for all $(x_1, \hdots, x_n) \in D^n$, $\Psi_Q(x_1, \hdots, x_n) = 0$. Since the clauses involve promise relations, any satisfying assignment to $\Psi_P$ is necessarily a satisfying assignment to $\Psi_Q$, so no $\Gamma$-PCSP can be simultaneously satisfiable and unsatisfiable. Despite that, it is possible for the PCSP to be neither satisfiable nor unsatisfiable. As an extreme case, consider $P = \{\}$ and $Q = D^k$ then every $(P, Q)$-PCSP $(\Psi_P, \Psi_Q)$ has the property $\Psi_P$ is unsatisfiable but $\Psi_Q$ is satisfiable, so the PCSP is neither satisfiable or unsatisfiable. As such, the main computational problem we seek to study is a promise decision problem.
\begin{df}\label{df:PCSP(Gamma)}
  Let $\Gamma = \{(P_i, Q_i) \subseteq D^{k_i} \times D^{k_i}\}$ be a set of promise relations. $\PCSP(\Gamma)$ is the following promise decision problem. Given a $\Gamma$-PCSP $\Psi = (\Psi_P, \Psi_Q)$, output YES if $\Psi$ is satisfiable and output NO if $\Psi$ is unsatisfiable.
\end{df}
Note that $\PCSP(\Gamma)$ is in $\mathsf{promiseNP}$ since we can easily check in polynomial time if an assignment satisfies $\Psi_P$. We implicitly allow repetition of the variables in a specific clause. We show in Section~\ref{app:repetition} that removing this assumption does not meaningfully change the complexity of the problem.

\begin{rem}
  An equivalent notation which is used in subsequent works (e.g., \cite{BartoBulinKrokhinEtAl2019}) is to consider sets $\mathbb{A} = \{P_i : i \in [r]\}$ and $\mathbb B = \{Q_i : i \in [r]\}$ and denote $\PCSP(\Gamma)$ by $\PCSP(\mathbb A, \mathbb B)$. We keep with the former notation in this paper.
\end{rem}

\subsection{Polymorphisms}

As it can be quite cumbersome to find a direct $\mathsf{NP}$-hardness reduction for $\PCSP(\Gamma)$, we study the combinatorial properties of a set of functions known as \textit{polymorphisms}, which have served well as a proxy for the computational complexity of PCSPs \cite{DBLP:journals/siamcomp/AustrinGH17, BrakensiekGuruswami:2016}.

\begin{df}\label{df:weak-poly}
  Let $(P, Q) \in D^k \times D^k$ be a promise relation. A polymorphism of $(P, Q)$ is a function $f : D^L \to D$ such that for all $(x_1^{(1)}, \hdots, x_k^{(1)}), \hdots, (x_1^{(L)}, \hdots, x_k^{(L)}) \in P$ then $(f(x_1^{(1)}, \hdots, x_1^{(L)}), \hdots, f(x_k^{(1)}, \hdots, x_k^{(L)})) \in Q$. Denote this set of polymorphisms as $\Pol(P, Q)$. If $\Gamma = \{(P_i, Q_i) {: i \in [r]}\}$ is a set of promise relations, then $f : D^L \to D$ is a polymorphism of $\Gamma$ iff $f$ is a polymorphisms of $(P_i, Q_i)$ for all $i$.
\end{df}

We let $\Pol(\Gamma)$ denote the set of polymorphisms of $\Gamma$. Note that the projection maps $\pi_i(x) = x_i$ are polymorphisms of every promise relation. Further note that $\Pol(\Gamma) = \bigcap_{(P_i, Q_i) \in \Gamma} \Pol(P_i, Q_i).$

When $P_i = Q_i$, these polymorphisms are the polymorphisms studied in the CSP literature (e.g. \cite{Chen:2009}). Sadly, when $P_i \neq Q_i$, the polymorphisms are no longer easily composable, so we no longer have necessarily that our polymorphisms form a clone. {However,} we still have one key property of a clone, that the polymorphisms are closed under {\emph{projections}\footnote{Also called {\emph{minors}} in the literature.}}. 

\begin{df}\label{df:proj}
  Let $f : D^{{R}} \to D$ be a polymorphism of a family of promise
  relations $\Gamma$. Let $\pi : [{R}] \to [L]$ be a map. A
  \textit{projection} $f^\pi : D^{{L}} \to D$ is the map
  $(f^ \pi)(x) = f(y), \forall i, y_i = x_{\pi(i)}.$
  It is straightforward to verify that $f^ \pi \in \Pol(\Gamma)$.
\end{df}

For the remainder of the article, we assume that $D = \{0, 1\}$.

\begin{df}\label{df:folding}
  Let $f : \{0, 1\}^L \to \{0, 1\}$ be a polymorphism of a family of Boolean promise relations $\Gamma$. We say that $f$ is \textit{folded} if $f(x) = \neg f(\bar{x})$ for all $x \in \{0, 1\}^L$. We say that a family of promise relations $\Gamma$ is folded if all of its polymorphisms are folded. 
\end{df}

It is straightforward to show that if $\Gamma$ contains the NOT relation ($P = Q = \{(0, 1), (1, 0)\}$) then all polymorphisms are folded. Furthermore, note that projections of folded functions are also folded.

We will also view a polymorphism as \textit{generating} a set of promise relations $\Gamma'$ from a set of relations $\Gamma$.
\begin{df}\label{df:generating}
  Let $f : \{0, 1\}^L \to \{0, 1\}$ be a polymorphism, and let $P \subseteq \{0, 1\}^k$ be a relation. Define $f(P)$ to be
  \[
  f(P) := \{x \in \{0, 1\}^k : \text{exist $x^{(1)}, \hdots, x^{(L)} \in P$ such that $x_i = f(x^{(1)}_i, \hdots, x^{(L)}_i)$ for all $i \in [k] = \{1, \hdots, k\}$}\}
  \]
\end{df}

This could also be written as $f(P) := f(P^L)$. We often state that $x = f(x^{(1)}, \hdots, x^{(L)})$, where $x^{(i)} \in P$, as a shorthand for $x_i = f(x^{(1)}_i, \hdots, x^{(L)}_i)$ for all $i \in [k]$. Note that $f \in \Pol(P, Q)$ if and only if ${f}(P) \subseteq Q$.

What is the motivation for studying these polymorphisms? Roughly,
if $\Gamma$ has an interesting family of polymorphisms, then we
expect for that family to `beget' a polynomial-time algorithm for
$\PCSP(\Gamma)$. The following are examples of families of polymorphisms which will yield algorithms. For all of these functions,
we have that our domain is $x \in \{0, 1\}^L$.
\begin{itemize}
\setlength\itemsep{0em}
\item The zero and one functions: $\Zero_L(x) = 0$, $\One_L(x) = 1$.
\item The AND and OR functions: $\AND_L(x) = \bigwedge_{i=1}^L x_i$, $\OR_L(x) = \bigvee_{i=1}^L x_i$.
\item The Parity function: $\Par_L(x) = \bigoplus_{i=1}^L x_i$. ($L$ odd)
\item The Majority function: $\Maj_L(x) = {1}$ {if} $\sum_{i=1}^L x_i > L/2$ {and $0$ otherwise} ($L$ odd).
\item The \Maltsev{} function: $\Malt_L(x) = {1}$ {if} $\sum_{i=1}^L (-1)^{i-1} x_i > 0$ {and $0$ otherwise} ($L$ odd).
\end{itemize}
Note that except for the \Maltsev{} operator, all of these
polymorphisms appear in the modern treatment of Schaefer's
Theorem. Although the \Maltsev{} operator is a polymorphism of some
traditional Boolean CSPs, such as 2-coloring, in those cases it is
possible to show that Majority is also present as a polymorphism. We
will see later that this is not the case for PCSPs. Note that the
arity-3 \Maltsev{} operator would be considered a \emph{Mal'tsev
  operator} in traditional CSPs (e.g. \cite{BulatovDalmau2006}).

In addition to these polymorphisms, we also use the prefix `anti-' to refer to the negations of these functions. The `anti-' polymorphism will be denoted with a horizontal bar. For example, anti-parity is $\overline{\Par}_L(x) = \neg \Par_L(x)$. Note that the One function is the `anti-Zero' function and vice-versa. These polymorphisms appear due to technicalities of the nature of promise-CSPs. In Section \ref{subsec:idempotence}, we show that these anti-s can be transformed into normal polymorphisms.

In Section \ref{sec:alg}, we show that if $\Pol(\Gamma)$ contains any one of these infinite families of polymorphisms, then $\PCSP(\Gamma)$ is tractable.

\subsection{Decoding}
As mentioned in the introduction, one formulation of the Algebraic
{CSP} Dichotomy {Theorem} is that for any finite set of finite
(traditional) relations $\Gamma$, the decision problem on the
satisfiability of CSPs with clauses from $\Gamma$ is in $\mathsf P$ if
and only if $\Gamma$ a weak near-unanimity operator. In the case of
Promise CSPs, the picture is known to be not as clean. For example,
\cite{DBLP:journals/siamcomp/AustrinGH17, BrakensiekGuruswami:2016} both study \textsf{NP}-hard PCSPs
in which some of the polymorphisms depend non-trivially on
multiple coordinates. In both of those works, the polymorphisms
depend on a \textit{bounded} number of coordinates, either literally
or after correcting some noise. By utilizing these polymorphisms
as gadgets in a suitable probabilistically checkable proof, such as
Label Cover, hardness was obtained.

In \cite{BrakensiekGuruswami:2016}, we approached understanding these polymorphisms of NP-hard PCSPs using a \textit{robust decoding} framework which identified influential coordinates in these polymorphism in a manner amenable to Label Cover. In this paper, to identify influential coordinates we will use the concept of a \textit{$C$-fixing junta}.

\begin{df}
  Let $f : \{0,1\}^L \to \{0, 1\}$ be a folded polymorphism. We
  say that a folded polymorphism is {a \textit{$C$-fixing junta} (or just \emph{$C$-fixing})} if
  there exists $S \subseteq \{1, \hdots, L\}$ with $|S| \le C$ such
  that if $x \in \{0, 1\}^L$ satisfies $x_i = 0$ for all $i \in S$,
  then $f(x) = f(0, \hdots 0)$.
\end{df}

In Section \ref{sec:hardness}, we show that for any folded family of promise relations $\Gamma$ all of whose polymorphisms are $C$-fixing, then $\PCSP(\Gamma)$ is $\mathsf{NP}$-hard. In Section \ref{sec:classification}, we show for a large class of $\Gamma$ that their polymorphisms are $C$-fixing via combinatorial arguments.

\subsection{Idempotence} \label{subsec:idempotence}

Define a function $f : \{0, 1\}^L \to \{0, 1\}$ to be \textit{idempotent} if $f(0, \hdots, 0) = 0$ and $f(1, \hdots, 1) = 1$. We say that a family of promise relations $\Gamma$ is idempotent if all polymorphisms are idempotent.

\begin{prop}\label{prop:idempotence}
  For any relation $P \subseteq \{0,1\}^k$ and any idempotent function $f$, we have that $P \subseteq f(P)$.
\end{prop}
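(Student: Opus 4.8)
The statement to prove is Proposition~\ref{prop:idempotence}: for any relation $P \subseteq \{0,1\}^k$ and any idempotent function $f$, we have $P \subseteq O_f(P)$.

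\medskip

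\textbf{Plan.} The plan is to show directly that every $x \in P$ lies in $O_f(P)$, using the definition of $O_f(P)$ and the idempotence of $f$. Recall that $O_f(P)$ consists of those $x \in \{0,1\}^k$ for which there exist $x^{(1)}, \hdots, x^{(L)} \in P$ such that $x_i = f(x^{(1)}_i, \hdots, x^{(L)}_i)$ for all $i \in [k]$, where $L$ is the arity of $f$. So given an arbitrary $x \in P$, I need to exhibit a witnessing tuple of $L$ elements of $P$.

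\medskip

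\textbf{Key step.} The natural choice is to take $x^{(1)} = x^{(2)} = \cdots = x^{(L)} = x$; each of these is in $P$ by assumption. Then for each coordinate $i \in [k]$, the value $f(x^{(1)}_i, \hdots, x^{(L)}_i) = f(x_i, x_i, \hdots, x_i)$. Now $x_i \in \{0,1\}$, so this is either $f(0,\hdots,0) = 0 = x_i$ or $f(1,\hdots,1) = 1 = x_i$, where both equalities use the idempotence of $f$. Hence $f(x_i,\hdots,x_i) = x_i$ for every $i$, which means exactly that $x = f(x^{(1)},\hdots,x^{(L)})$ in the shorthand notation, so $x \in O_f(P)$. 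Since $x \in P$ was arbitrary, $P \subseteq O_f(P)$.

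\medskip

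\textbf{Main obstacle.} There is essentially no obstacle here: the proof is a one-line unwinding of definitions once one picks the ``diagonal'' witness. The only thing to be slightly careful about is matching up the arity $L$ of $f$ with the number of copies of $x$ used, and observing that idempotence is precisely the hypothesis that makes the diagonal evaluation act as the identity on the constant inputs $0$ and $1$ (which are the only inputs that arise coordinatewise from a single vector). This proposition is the PCSP analogue of the trivial fact that a relation is preserved by any idempotent operation applied diagonally, and it will presumably be used to argue that idempotent weak polymorphisms cannot shrink $P$, only potentially enlarge it toward $Q$.
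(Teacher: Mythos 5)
Your proof is correct and matches the paper's argument exactly: both take the diagonal witness $x^{(1)}=\cdots=x^{(L)}=x$ and observe that idempotence forces $f(x_i,\dots,x_i)=x_i$ in each coordinate. Your write-up just spells out the case split on $x_i\in\{0,1\}$ that the paper leaves implicit.
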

\begin{proof}
  For every $x \in P$ note that $x_i = f(x_i, \hdots, x_i)$; thus $x \in f(P)$.
\end{proof}
We say that $f$ \textit{generates} the promise relation $(P, f(P))$ from $P$. If $\Gamma = \{P_i {: i \in [r]}\}$ is a set of relations, then $f(\Gamma) = \{(P_i, f(P_i)) {: i \in [r]}\}.$ 
Essentially by definition, $f(\Gamma)$ has $f$ as a polymorphism.

We can force the polymorphisms of a family of promise relations $\Gamma$ to be idempotent by adding in the unary promise relations $\setzero = (\{(0)\}, \{(0)\})$ and $\setone = (\{(1)\}, \{(1)\})$.

\begin{prop}\label{prop:more-idempotence}
  For any family of promise relations $\Gamma$, the set of idempotent promise relations of $\Gamma$ is exactly $\Pol(\Gamma \cup \{\setzero, \setone\})$.
\end{prop}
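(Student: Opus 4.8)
The plan is to prove Proposition~\ref{prop:more-idempotence} by showing a set equality via double containment, using the definition of weak polymorphism together with the observation that the constraints $\setzero$ and $\setone$ each have exactly one allowed tuple, so they act as ``idempotence gates'' on any candidate polymorphism.

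First I would unwind what it means for $f : \{0,1\}^L \to \{0,1\}$ to be a weak polymorphism of $\setzero = (\{(0)\},\{(0)\})$. The only tuple in $P = \{(0)\}$ is $(0)$, so applying Definition~\ref{df:weak-poly} with all $L$ input tuples equal to $(0)$, we need $f(0,\hdots,0) \in \{(0)\}$, i.e.\ $f(0,\hdots,0) = 0$; and this is the only constraint $\setzero$ imposes. Symmetrically, $f \in \poly(\setone)$ iff $f(1,\hdots,1) = 1$. Hence $f \in \poly(\{\setzero,\setone\})$ iff $f$ is idempotent. Combined with $\poly(\Gamma \cup \{\setzero,\setone\}) = \poly(\Gamma) \cap \poly(\setzero) \cap \poly(\setone)$ (the intersection property of $\poly$ noted right after Definition~\ref{df:weak-poly}), this gives $\poly(\Gamma \cup \{\setzero,\setone\}) = \{f \in \poly(\Gamma) : f \text{ idempotent}\}$, which is exactly the claim.

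For the two containments explicitly: if $f \in \poly(\Gamma \cup \{\setzero,\setone\})$, then in particular $f \in \poly(\setzero)$ and $f \in \poly(\setone)$, so $f$ is idempotent, and also $f \in \poly(\Gamma)$; thus $f$ is an idempotent weak polymorphism of $\Gamma$. Conversely, if $f$ is idempotent and $f \in \poly(\Gamma)$, then idempotence gives $f \in \poly(\setzero) \cap \poly(\setone)$ by the computation above, so $f \in \poly(\Gamma) \cap \poly(\setzero) \cap \poly(\setone) = \poly(\Gamma \cup \{\setzero,\setone\})$.

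I do not expect any genuine obstacle here; the statement is essentially definitional. The only point requiring a sentence of care is verifying that $\setzero$ (resp.\ $\setone$) imposes \emph{exactly} the single constraint $f(0,\hdots,0)=0$ (resp.\ $f(1,\hdots,1)=1$) and nothing more — this follows because a relation with a single tuple forces every column of the $L\times k$ evaluation array (here $k=1$) to be constant, so there is only one possible input to $f$ to check. Everything else is the routine unfolding of the intersection identity $\poly(\Gamma_1 \cup \Gamma_2) = \poly(\Gamma_1) \cap \poly(\Gamma_2)$.
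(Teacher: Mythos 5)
Your proof is correct and takes essentially the same approach as the paper's: both unwind the definition of weak polymorphism for the single-tuple relations $\setzero,\setone$ to see they enforce exactly $f(0,\hdots,0)=0$ and $f(1,\hdots,1)=1$, then apply the intersection identity $\poly(\Gamma_1\cup\Gamma_2)=\poly(\Gamma_1)\cap\poly(\Gamma_2)$. Your version is merely more explicit about the one-tuple observation, but the argument is identical.
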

\begin{proof}
  For any idempotent $f \in \Pol(\Gamma)$, we have $f(0, \hdots, 0) = 0$ and $f(1, \hdots, 1) = 1$, so $f \in \Pol(\setzero)$ and $f \in \Pol(\setone)$. Likewise, every polymorphism of $\Gamma \cup \{\setzero, \setone\}$ is idempotent.
\end{proof}

For a relation $Q$, define $\neg Q = \{\bar{x} : x \in Q\}.$ If $(P, Q)$ is a promise relation, it is not longer clear that $(P, \neg Q)$ is a promise relation, because we might not have that $P \subseteq \neg Q$. If we assume \textit{non-degeneracy} and that $(P, Q)$ has a non-idempotent polymorphism, then this is the case.

\begin{df}
A function $f : \{0, 1\}^L \to \{0, 1\}$ is \textit{non-degenerate} if $f(0,\hdots, 0) \neq f(1, \hdots, 1)$. A family of promise relations $\Gamma$ is non-degenerate if all of its polymorphisms are \textit{non-degenerate}.
\end{df}

One can verify that $\Gamma$ is non-degenerate if and only if $\Zero_1, \One_1 \not\in \Pol(\Gamma)$. 

\begin{prop}\label{prop:even-more-idempotence}
Let $(P, Q)$ be a promise relation with a non-degenerate, non-idempotent polymorphism $f : \{0, 1\}^L \to \{0, 1\}$. Then, $(P, \neg Q)$ is a promise relation, and $\neg f$ is a idempotent polymorphism of this promise relation.
\end{prop}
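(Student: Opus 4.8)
The plan is to verify the two claims separately, both essentially by unwinding definitions and exploiting the hypothesis that $f$ is non-degenerate and non-idempotent. First I would observe what non-degeneracy plus non-idempotence forces about $f$ on the constant inputs: since $f$ is non-degenerate we have $f(0,\hdots,0)\neq f(1,\hdots,1)$, and since $f$ is non-idempotent it is \emph{not} the case that $f(0,\hdots,0)=0$ and $f(1,\hdots,1)=1$; the only remaining possibility (given the two values differ) is $f(0,\hdots,0)=1$ and $f(1,\hdots,1)=0$. So $f$ ``flips'' the constants.

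Next I would prove that $(P,\neg Q)$ is a promise relation, i.e.\ $P\subseteq \neg Q$. Take any $x\in P$. Feeding $L$ copies of $x$ into the $k$ coordinates, the constant-input behaviour of $f$ shows coordinatewise that $f$ applied to $(x,\hdots,x)$ yields $\bar x$: indeed for each coordinate $i$, if $x_i=0$ then the $i$-th output is $f(0,\hdots,0)=1=\bar x_i$, and if $x_i=1$ then it is $f(1,\hdots,1)=0=\bar x_i$. Since $f\in\poly(P,Q)$ and all $L$ inputs lie in $P$, the output $\bar x$ lies in $Q$. Hence $x\in\neg Q$. As $x\in P$ was arbitrary, $P\subseteq \neg Q$, so $(P,\neg Q)$ is a promise relation.

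It remains to check that $g:=\neg f$ is an idempotent weak polymorphism of $(P,\neg Q)$. Idempotence is immediate from the first step: $g(0,\hdots,0)=\neg f(0,\hdots,0)=\neg 1=0$ and $g(1,\hdots,1)=\neg f(1,\hdots,1)=\neg 0=1$. For the polymorphism property, take $x^{(1)},\hdots,x^{(L)}\in P$ and let $y$ be defined coordinatewise by $y_i=g(x^{(1)}_i,\hdots,x^{(L)}_i)=\neg f(x^{(1)}_i,\hdots,x^{(L)}_i)$. Let $z$ be the tuple obtained by applying $f$ coordinatewise to the same inputs; then $z\in Q$ because $f\in\poly(P,Q)$ and each $x^{(j)}\in P$, and $y=\bar z$. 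Therefore $y\in\neg Q$, which is exactly the statement that $g\in\poly(P,\neg Q)$.

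I do not anticipate a genuine obstacle here; the whole argument is a definition-chase. The one place to be slightly careful is the opening case analysis, making sure that ``non-degenerate and non-idempotent'' genuinely pins down $f(0,\hdots,0)=1$, $f(1,\hdots,1)=0$ rather than leaving another case, and then keeping the direction of the negations straight when passing between $f$ on the outputs and $g=\neg f$.
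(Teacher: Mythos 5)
Your proof is correct and follows essentially the same route as the paper's: pin down $f(0,\hdots,0)=1$ and $f(1,\hdots,1)=0$ from non-degeneracy plus non-idempotence, deduce $f(x,\hdots,x)=\bar x$ to get $P\subseteq\neg Q$, and then verify directly that $\neg f$ is an idempotent weak polymorphism of $(P,\neg Q)$. No gaps.
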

\begin{proof}
Since $f$ is non-idempotent and non-degenerate, we have that $f(0,\hdots, 0) = 1$ and $f(1, \hdots, 1) = 0$. Thus, for any $x \in \{0, 1\}^L$, we have that $f(x, \hdots, x) = \bar{x}$. Since $f \in \Pol(P, Q)$, we thus have that $\neg P \subseteq Q$. Thus, $P \subseteq \neg Q$, so $(P, \neg Q)$ is a promise relation. It is easy to then see that for any $x^1, \hdots, x^L \in P$, since $f(x^1, \hdots, x^L) \in Q$, $\neg f(x^1, \hdots, x^L) \in \neg Q$. Thus, $\neg f$, which is idempotent, is a polymorphism of $(P, \neg Q)$.
\end{proof}

Thus, if a non-degenerate family of promise relations $\Gamma$ has at least one non-idempotent polymorphism, we may define $\neg \Gamma = ((P_i, \neg Q_i) : (P_i, Q_i) \in \Gamma)$ as another family of promise relations. Note that since $\Gamma$ always has idempotent polymorphisms (the {dictators/}projections), $\neg \Gamma$ thus has non-idempotent polymorphisms, so $\neg(\neg \Gamma)$ exists and is equal to $\Gamma$. Thus, the idempotent polymorphisms of $\Gamma$ are exactly the non-idempotent polymorphisms of $\neg \Gamma$ and vice-versa. We can formally show that the idempotent polymorphisms of $\Gamma$ and $\neg \Gamma$ capture the computational complexity of $\PCSP(\Gamma)$.

\begin{lem}\label{lem:idempotent}
  Let $\Gamma$ be a non-degenerate family of promise relations with at least one non-idempotent polymorphism. Let $\Gamma' = \Gamma \cup \{\setzero, \setone\}$ and $\Gamma'' = (\neg \Gamma) \cup \{\setzero, \setone\}$. Then
  \begin{enumerate}
  \item $\Pol(\Gamma) = \Pol(\Gamma') \cup (\neg \Pol(\Gamma''))$, where $\neg \Pol(\Delta) = \{\neg f : f \in \Pol(\Delta)\}$.
  \item If $\PCSP(\Gamma')$ or $\PCSP(\Gamma'')$ is polynomial-time tractable, then so is $\PCSP(\Gamma)$.
  \end{enumerate}
\end{lem}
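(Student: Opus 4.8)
The plan is to prove the two parts separately, with part~(1) being a short bookkeeping argument and part~(2) the substantive reduction. For part~(1), I would argue by double inclusion. For the inclusion $\poly(\Gamma') \cup (\neg\poly(\Gamma'')) \subseteq \poly(\Gamma)$: every $f \in \poly(\Gamma')$ is in particular in $\poly(\Gamma)$ since $\Gamma \subseteq \Gamma'$; and if $g \in \poly(\Gamma'')$ then $g \in \poly(\neg\Gamma)$, so $g$ sends tuples of $P_i$ into $\neg Q_i$, and since $g$ is also idempotent (it lies in $\poly(\setzero,\setone)$) Proposition~\ref{prop:even-more-idempotence} applied to $\neg\Gamma$ (whose double-negation is $\Gamma$) shows $\neg g \in \poly(P_i, \neg\neg Q_i) = \poly(P_i, Q_i)$, i.e.\ $\neg g \in \poly(\Gamma)$. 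For the reverse inclusion, take $f \in \poly(\Gamma)$; since $\Gamma$ is non-degenerate, $f$ is non-degenerate, so either $f$ is idempotent, in which case $f \in \poly(\setzero,\setone)$ and hence $f \in \poly(\Gamma')$, or $f$ is non-idempotent, in which case Proposition~\ref{prop:even-more-idempotence} gives $\neg f$ idempotent and $\neg f \in \poly(\neg\Gamma)$, so $\neg f \in \poly(\Gamma'')$ and thus $f \in \neg\poly(\Gamma'')$.

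For part~(2), the key observation is that $\PCSP(\Gamma)$ reduces in polynomial time to each of $\PCSP(\Gamma')$ and $\PCSP(\Gamma'')$. The reduction to $\PCSP(\Gamma')$ is the easy direction: given a $\Gamma$-PCSP instance $(\Psi_P,\Psi_Q)$, we may view it as a $\Gamma'$-PCSP instance with no $\setzero$ or $\setone$ clauses at all; satisfiability and unsatisfiability are unchanged, so a YES/NO decision for the $\Gamma'$ instance is a valid decision for the $\Gamma$ instance. The reduction to $\PCSP(\Gamma'')$ uses the negation trick underlying Proposition~\ref{prop:even-more-idempotence}: given a $\Gamma$-instance on variables $x_1,\dots,x_n$ with clauses $P_i$ applied to chosen tuples, produce the $\neg\Gamma$-instance obtained by replacing each clause $P_i(\cdots)$ with $(\neg Q_i)(\cdots)$ on the \emph{same} tuples — wait, that is not quite a PCSP since we need the ``$P$'' side too; instead we use that $(P_i,\neg Q_i)$ is a promise relation (established in the proposition), so the $\neg\Gamma$-instance $(\Psi_{P},\Psi_{\neg Q})$ is well-formed on the same variables, and again adding no $\setzero,\setone$ clauses we get a $\Gamma''$-instance. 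If the original instance is satisfiable by $\sigma$, then $\sigma$ satisfies $\Psi_{P}$, which is the $P$-side of the $\neg\Gamma$-instance, so the $\Gamma''$-instance is satisfiable; if the original is unsatisfiable, then $\Psi_Q$ has no satisfying assignment, hence (taking complements coordinatewise) $\Psi_{\neg Q}$ has no satisfying assignment either, so the $\Gamma''$-instance is unsatisfiable. Thus a polynomial-time algorithm for $\PCSP(\Gamma'')$ decides $\PCSP(\Gamma)$ after this trivial transformation.

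The step I expect to require the most care is making the $\PCSP(\Gamma'')$ reduction airtight: one must check that $\neg$ as a coordinatewise involution really does turn satisfying assignments of $\Psi_Q$ into satisfying assignments of $\Psi_{\neg Q}$ and conversely (so that ``$\Psi_Q$ unsatisfiable'' transfers correctly), and that the ``$P$-side'' of the constructed instance is literally $\Psi_P$ so that ``satisfiable'' transfers with no change of assignment. Both follow directly from the definitions of $\neg Q_i$ and of $\Gamma$-PCSP, but it is worth spelling out since the promise structure means we are tracking two formulas at once. Everything else — the inclusion chasing in part~(1), the observation that dropping $\setzero,\setone$ clauses is a valid instance map — is routine given Propositions~\ref{prop:idempotence}, \ref{prop:more-idempotence}, and \ref{prop:even-more-idempotence}.
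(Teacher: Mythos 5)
Your proof is correct and follows essentially the same two-step structure as the paper's own argument: a partition of $\poly(\Gamma)$ into idempotent and non-idempotent polymorphisms via Propositions~\ref{prop:more-idempotence} and~\ref{prop:even-more-idempotence} for part~(1), and the two reductions $\PCSP(\Gamma) \le \PCSP(\Gamma')$ (drop the $\setzero,\setone$ clauses, using $\Gamma\subseteq\Gamma'$) and $\PCSP(\Gamma) \le \PCSP(\Gamma'')$ (replace each $Q_i$ by $\neg Q_i$, observing that complementation is a bijection on assignments to $\Psi_Q$) for part~(2). The only cosmetic difference is that you spell out the double inclusion in part~(1) explicitly where the paper appeals to the discussion following Proposition~\ref{prop:even-more-idempotence}.
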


\begin{proof}
  1. In Proposition \ref{prop:more-idempotence}, we have that the polymorphisms of $\Pol(\Gamma')$ are exactly the idempotent polymorphisms of $\Pol(\Gamma)$. From Proposition \ref{prop:even-more-idempotence} and the subsequent discussion, we have that the non-idempotent polymorphisms of $\Gamma$ are exactly the negations of the idempotent polymorphisms of $\neg \Gamma$ which are the polymorphisms of $\Gamma''$. Thus, $\Pol(\Gamma) = \Pol(\Gamma') \cup (\neg \Pol(\Gamma''))$ since every polymorphism of $\Gamma$ is either idempotent or non-idempotent and every polymorphism of $\Gamma'$ and $\Gamma''$ is idempotent.

  2. Since $\Gamma \subseteq \Gamma'$, we have that if $\PCSP(\Gamma')$ is polynomial-time tractable, then $\PCSP(\Gamma)$ is polynomial-tractable by applying the exact same algorithm. Now, assume that $\PCSP(\Gamma'')$ is polynomial-time tractable. Since  ${\neg \Gamma \subseteq \Gamma''}$, we have that $\PCSP(\neg \Gamma)$ is polynomial-time tractable. Consider an instance $\Psi = (\Psi_P, \Psi_Q)$ of $\PCSP(\Gamma)$. Let $\Psi^{\neg} = (\Psi_P, \Psi_{\neg Q})$ be an instance of ${\PCSP(\neg\Gamma)}$ in which every $Q_i$ clause of $\Psi_Q$ is replaced with a $\neg Q_i$ clause. Clearly $\Psi_P$ is satisfiable if and only if $\Psi_P$ is satisfiable and $\Psi_Q$ is satisfiable if and only if $\Psi_{\neg Q}$ is satisfiable (a satisfying assignment to one is the negation of a satisfying assignment to the other). Thus, {$\Psi$ is satisfiable if and only if $\Psi^{\neg}$ is satisfiable}. Thus, if we run the algorithm for  $\PCSP(\neg \Gamma)$ which decides $\Psi^{\neg}$, we have also solved the problem in polynomial time for $\Psi$.
\end{proof}

In the proceeding sections, we utilize this lemma repeatedly so that we do not need to separately consider the non-idempotent polymorphisms.

\subsection{Symmetric PCSPs}
The primary focus of this paper is the study of $\Gamma$ in which every relation is \textit{symmetric}.

\begin{df}
A relation $P \subseteq \{0, 1\}^k$ is symmetric if for all $x \in P$ and all permutations $\sigma : \{1, \hdots, k\} \to \{1, \hdots, k\}$, we have that $(x_{\sigma(1)}, \hdots, x_{\sigma(k)}) \in P$. We say that a family of promise relations $\Gamma = \{(P_i, Q_i) {: i \in [r]}\}$ is symmetric if $P_i$ and $Q_i$ are symmetric for all $i$.
\end{df}

For a symmetric family of promise relations $\Gamma = \{(P_i, Q_i){: i \in [r]}\}$,
we have that each $P_i$ and $Q_i$ is uniquely determined by its arity
and the Hamming weights of the elements. We let $\Ham_{k}(S) = \{x \in
\{0, 1\}^k : |x| \in S\}$ denote these sets. For example, $\text{NOT}
= \{(0, 1), (1, 0)\} = \Ham_2(\{1\})$. Furthermore, the idempotence
relations $\text{SET-ZERO}$ and $\text{SET-ONE}$ are also symmetric,
so adding these relations to a symmetric family of promise relations
preserves that the family is symmetric. The following property of
symmetric relations helps us when working with polymorphisms.

\begin{prop}\label{prop:sym}
  Let $P$ be a symmetric relation. Let $f : \{0, 1\}^L \to \{0, 1\}$ be any function. Then, $f(P)$ is symmetric.
\end{prop}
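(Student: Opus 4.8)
The plan is to show $O_f(P)$ is symmetric directly from the definitions, exploiting that the defining property of $O_f(P)$ is quantified over tuples $x^{(1)},\dots,x^{(L)} \in P$ and that $P$ itself is closed under coordinate permutations. First I would fix an arbitrary $x \in O_f(P)$ and an arbitrary permutation $\sigma$ of $[k]$, and let $x^{(1)},\dots,x^{(L)} \in P$ be the witnessing tuples, so that $x_i = f(x^{(1)}_i,\dots,x^{(L)}_i)$ for all $i \in [k]$. The natural move is to permute each witness by $\sigma$: set $y^{(j)}$ to be the tuple with $y^{(j)}_i = x^{(j)}_{\sigma(i)}$. Since $P$ is symmetric, each $y^{(j)} \in P$, so the $y^{(j)}$ are legitimate witnesses for membership in $O_f(P)$.

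Next I would compute, for each $i \in [k]$, the coordinate $f(y^{(1)}_i,\dots,y^{(L)}_i) = f(x^{(1)}_{\sigma(i)},\dots,x^{(L)}_{\sigma(i)}) = x_{\sigma(i)}$, where the last equality uses the witnessing property of $x$ applied at coordinate $\sigma(i)$. Hence the tuple $(x_{\sigma(1)},\dots,x_{\sigma(k)})$ is exactly $f$ applied coordinatewise to $y^{(1)},\dots,y^{(L)} \in P$, which shows $(x_{\sigma(1)},\dots,x_{\sigma(k)}) \in O_f(P)$. Since $x$ and $\sigma$ were arbitrary, $O_f(P)$ is symmetric.

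There is essentially no obstacle here; the only thing to be careful about is bookkeeping the index gymnastics, namely that applying $\sigma$ to each of the $L$ witnesses simultaneously commutes with applying $f$ along the $L$-direction. I would present it as a short two-line argument: permuted witnesses stay in $P$ by symmetry of $P$, and $f$ acts coordinatewise so permuting all witnesses permutes the output the same way. No case analysis or quantitative estimate is needed, and the statement makes no use of folding, idempotence, or any structure on $f$ beyond being a function $\{0,1\}^L \to \{0,1\}$.
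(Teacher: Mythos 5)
Your proof is correct and follows essentially the same approach as the paper's: permute the witnessing tuples by $\sigma$, observe they remain in $P$ by symmetry, and note that $f$ acts coordinatewise so the output is likewise permuted by $\sigma$. You simply spell out the index bookkeeping that the paper leaves implicit.
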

\begin{proof}
For any $y \in f(P)$ and permutation $\sigma : \{1, \hdots, k\} \to \{1, \hdots, k\}$, consider the $x^1, \hdots, x^L \in P$ such that $f(x^1, \hdots, x^L) = y$. If we apply $\sigma$ to the coordinates of $x^1, \hdots, x^L$, they will stay in $P$ (since $P$ is symmetric). Furthermore, $f$ applies to these permuted variables with be $\sigma$ applied to the coordinates of $y$.
\end{proof}

In the remainder of the paper, we prove the following result. Note that Theorem \ref{res:poly} follows as a corollary.

\begin{thm}[Main Result]\label{thm:main-result}
  Let $\Gamma$ be a folded, symmetric, finite family of promise relations. If at least one of $\Par_L$, $\Maj_L$, $\Malt_L$, $\overline{\Par}_L$, $\overline{\Maj}_L,$ or  $\overline{\Malt}_L$ is a polymorphism of $\Gamma$ for all odd $L$, then $\PCSP(\Gamma)$ is polynomial-time tractable. Otherwise, $\PCSP(\Gamma)$ is $\mathsf{NP}$-hard.
\end{thm}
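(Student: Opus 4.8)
The plan is to prove Theorem~\ref{thm:main-result} by splitting into the tractability direction and the hardness direction, using the structural machinery set up in the excerpt. For the tractability direction, suppose one of $\Par_L$, $\Maj_L$, $\Malt_L$, $\overline{\Par}_L$, $\overline{\Maj}_L$, or $\overline{\Malt}_L$ is a weak polymorphism of $\Gamma$ for every odd $L$. The case analysis proceeds as follows. If an ``anti-'' family applies, first invoke Lemma~\ref{lem:idempotent} (together with the discussion of $\neg\Gamma$ in Section~\ref{subsec:idempotence}): since $\Gamma$ is folded it is non-degenerate, so if some $\overline{\Par}_L$, $\overline{\Maj}_L$, or $\overline{\Malt}_L$ is a weak polymorphism then $\Gamma$ has a non-idempotent weak polymorphism, and $\PCSP(\Gamma)$ reduces to $\PCSP(\Gamma'')$ where $\Gamma'' = (\neg\Gamma)\cup\{\setzero,\setone\}$; the corresponding non-anti family ($\Par_L$, $\Maj_L$, $\Malt_L$) is then a weak polymorphism of $\neg\Gamma$. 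So it suffices to produce polynomial-time algorithms when $\Par_L$, $\Maj_L$, or $\Malt_L$ is a weak polymorphism for all odd $L$. This is precisely the content advertised for Section~\ref{sec:alg}: for $\Par_L$ one shows $\PCSP(\Gamma)$ reduces to an ordinary $\CSP$ preserved by parity (an affine CSP over $\mathbb F_2$) and applies Schaefer; for $\Maj_L$ and $\Malt_L$ one writes down a linear programming relaxation and argues, using the polymorphism, that the LP is exact for the promise decision problem. I would cite these algorithmic lemmas from Section~\ref{sec:alg} as black boxes, since they are stated to make no use of symmetry.

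For the hardness direction, assume that for each of the six families there is some odd $L$ for which it fails to be a weak polymorphism of $\Gamma$. The goal is to show $\PCSP(\Gamma)$ is $\mathsf{NP}$-hard. Here I would first reduce to the idempotent case: by Lemma~\ref{lem:idempotent}, hardness of $\PCSP(\Gamma')$ implies hardness of $\PCSP(\Gamma)$ where $\Gamma'=\Gamma\cup\{\setzero,\setone\}$, and $\Gamma'$ remains folded and symmetric; moreover failure of the six families for $\Gamma$ passes to $\Gamma'$ (adding $\setzero,\setone$ only shrinks $\poly$). If $\Gamma$ has a non-idempotent polymorphism one also has to rule out that the ``anti'' side $\neg\Gamma$ is tractable — but that is handled symmetrically, since the six-family hypothesis is closed under the $\Gamma\leftrightarrow\neg\Gamma$ correspondence ($\neg\Par_L=\overline{\Par}_L$, etc.). So we may assume $\Gamma$ is folded, symmetric, and idempotent, and that none of $\Par_L$, $\Maj_L$, $\Malt_L$ holds for all odd $L$. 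Now apply the classification theorem of Section~\ref{sec:classification}: there is a constant $C=C(\Gamma)$ such that every weak polymorphism of $\Gamma$ is $C$-fixing. Finally feed this into the hardness reduction of Section~\ref{sec:hardness}: any folded family all of whose weak polymorphisms are $C$-fixing has $\PCSP(\Gamma)$ $\mathsf{NP}$-hard, via a reduction from Label Cover in which the $C$-fixing coordinates are ``decoded'' into labels. Chaining these gives $\mathsf{NP}$-hardness of $\PCSP(\Gamma)$.

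The main obstacle is the classification step — showing that failure of all three of $\Par_L,\Maj_L,\Malt_L$ (each for some odd $L$) forces every weak polymorphism to be $C$-fixing. The delicate point is that these three failures are witnessed by possibly different pairs $(P,Q),(P'',Q'')\in\Gamma$ at possibly different arities $L$, and one must combine the constraints they impose. Following the proof overview, I would: (i) from the failure of $\Malt_L$, pick a responsible $(P,Q)$, normalize it to a canonical $(P',Q')$ without losing weak polymorphisms (Lemma~\ref{lem:no-maltsev}), and deduce via a combinatorial argument on symmetric relations that every weak polymorphism $f$ of $\Gamma$ either has $f(e_i)\ne f(\mathbf 0)$ for only boundedly many $i$, or else a large part of $f$ looks like parity; (ii) use the failure of $\Par_{L'}$ to kill the parity-like alternative, leaving only the ``few influential singletons'' case; (iii) use the failure of $\Maj$, with its responsible $(P'',Q'')$ again normalized to canonical form, to run the AGH-style argument upgrading ``few $i$ with $f(e_i)\ne f(\mathbf 0)$'' to the full $C$-fixing property (roughly: controlling how $f$ behaves on inputs of weight beyond these special coordinates). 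Keeping the constant $C$ uniform over all weak polymorphisms — depending only on the finitely many arities and relations in $\Gamma$ — and correctly handling the interaction of the two distinct witness relations, is where the real work lies; the algorithmic and Label-Cover endpoints are comparatively routine given the cited lemmas.
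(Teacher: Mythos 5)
The overall three-block decomposition of your proof (algorithms from Section~\ref{sec:alg}, classification to $C$-fixing from Section~\ref{sec:classification}, Label Cover hardness from Section~\ref{sec:hardness}) is exactly the paper's, and your tractability direction is fine. But there is a genuine gap in the hardness direction, in the step where you ``reduce to the idempotent case.''

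You write: ``by Lemma~\ref{lem:idempotent}, hardness of $\PCSP(\Gamma')$ implies hardness of $\PCSP(\Gamma)$ where $\Gamma'=\Gamma\cup\{\setzero,\setone\}$.'' This runs the wrong way. Since $\Gamma\subseteq\Gamma'$, every $\Gamma$-PCSP instance is already a $\Gamma'$-PCSP instance, so it is $\PCSP(\Gamma)$ that trivially reduces to $\PCSP(\Gamma')$; $\PCSP(\Gamma')$ being $\mathsf{NP}$-hard tells you nothing about $\Gamma$. Lemma~\ref{lem:idempotent}(2) only states that \emph{tractability} of $\Gamma'$ or $\Gamma''$ gives tractability of $\Gamma$ --- exactly the direction you already used correctly on the algorithmic side, and useless here. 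Consequently you are not entitled to ``assume $\Gamma$ is folded, symmetric, and idempotent'' when establishing hardness; $\Gamma$ itself may well have non-idempotent weak polymorphisms.

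The fix, which is what the paper does, is to work at the level of the polymorphism classification rather than reducing the decision problem. By Lemma~\ref{lem:idempotent}(1), $\poly(\Gamma)=\poly(\Gamma')\cup\neg\poly(\Gamma'')$, with $\Gamma'$ and $\Gamma''$ both idempotent, folded, symmetric, and (as you correctly note, since adding $\setzero,\setone$ only shrinks $\poly$) avoiding $\Par_L$, $\Maj_L$, $\Malt_L$ for some odd $L$. Lemma~\ref{lem:noMaj} then shows every weak polymorphism of $\Gamma'$ and of $\Gamma''$ is $C$-fixing, and since negating a folded function preserves the $C$-fixing property, every weak polymorphism of $\Gamma$ itself is $C$-fixing. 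Theorem~\ref{thm:hardness} is then applied directly to $\Gamma$ --- note it only requires $\Gamma$ to be folded with all weak polymorphisms $C$-fixing, not idempotent --- giving $\mathsf{NP}$-hardness of $\PCSP(\Gamma)$ without ever passing through hardness of $\Gamma'$.
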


\section{Efficient Algorithms}\label{sec:alg}
In this section, we show that if a finite collection of promise relations $\Gamma$ has a polymorphism of a certain kind, then there exists a polynomial-time algorithm for solving $\PCSP(\Gamma)$. Note that we need not assume that the relations of $\Gamma$ are symmetric. We let $k$ refer to the maximum arity of any relation of $\Gamma$. 

\subsection{Zero, One, AND, OR, Parity}
In each of these cases, we will reduce the PCSP $\Gamma$ to a traditional CSP $\Gamma'$ with the same polymorphism, which we can then solve in polynomial time by virtue of Schaefer's theorem. See \cite{Chen:2009}.

\begin{lem}
  Let $\Gamma = \{(P_i, Q_i) : i \in \{1, \hdots, \ell\}\}$ be a finite family of promise relations, each of arity at most $k$. Suppose that $\Gamma$ has $f$ as a polymorphism, in which $f \in \{\Zero_1, \One_1, \AND_{2^k}, \OR_{2^k}, \Par_{2^k+1}\}$. Then, $\PCSP(\Gamma)$ is polynomial-time tractable.
\end{lem}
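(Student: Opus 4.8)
The plan is to handle the five cases by a uniform strategy: reduce $\PCSP(\Gamma)$ to an ordinary CSP $\CSP(\Gamma')$ that has $f$ among its polymorphisms, where $\Gamma'$ is obtained from $\Gamma$ by ``closing off'' each promise pair under $f$. Concretely, for each $(P_i, Q_i) \in \Gamma$ define $R_i = O_f(P_i)$, the relation generated from $P_i$ by $f$ (Definition~\ref{df:generating}), and set $\Gamma' = \{R_i : i \in [\ell]\}$, viewed as a collection of ordinary relations. The first step is to observe that $P_i \subseteq R_i \subseteq Q_i$: the left inclusion holds because each $f \in \{0_1, 1_1, \AND_{2^k}, \OR_{2^k}, \Par_{2^k+1}\}$ is idempotent, so Proposition~\ref{prop:idempotence} applies; the right inclusion is exactly the statement $f \in \poly(P_i, Q_i)$, i.e.\ $O_f(P_i) \subseteq Q_i$. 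Moreover, by the defining property of $O_f$, the relation $R_i$ has $f$ as a (genuine) polymorphism --- but I should be slightly careful here, since $O_f$ applied once need not be closed under $f$; the clean fix is to iterate, $R_i := O_f^{(t)}(P_i)$ for $t$ large enough that the chain $P_i \subseteq O_f(P_i) \subseteq O_f^{(2)}(P_i) \subseteq \cdots$ stabilizes (it lives inside the finite set $Q_i$), so that $O_f(R_i) = R_i$ and hence $f \in \poly(R_i)$, while still $R_i \subseteq Q_i$. Here the arity bound $2^k$ (resp.\ $2^k+1$) matters only in that $O_f$ with these arities already captures all of $\AND$/$\OR$/$\Par$ of arbitrarily many arguments drawn from a $k$-ary relation --- a $k$-ary relation has at most $2^k$ elements, so any longer AND/OR/parity combination reduces to one using at most $2^k$ (or $2^k+1$) of them by idempotence; this is the lemma that lets me use a single fixed-arity $f$.

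The second step is the reduction itself. Given an instance $\Psi = (\Psi_P, \Psi_Q)$ of $\PCSP(\Gamma)$ with variable-choice function $\ell$, produce the $\CSP(\Gamma')$ instance $\Psi'$ on the same variables with each clause $P_i(x_{\ell(i,1)},\dots)$ replaced by $R_i(x_{\ell(i,1)},\dots)$. I then claim: $\Psi$ is a YES instance $\Rightarrow$ $\Psi'$ is satisfiable $\Rightarrow$ $\Psi$ is not a NO instance. The first implication is immediate from $P_i \subseteq R_i$: any assignment satisfying $\Psi_P$ satisfies $\Psi'$. The second follows from $R_i \subseteq Q_i$: any assignment satisfying $\Psi'$ satisfies $\Psi_Q$, so $\Psi_Q$ is satisfiable and $\Psi$ is not NO. Hence deciding $\CSP(\Gamma')$ (which, by the promise, is guaranteed to be a genuine YES or a genuine NO on inputs coming from the promise) correctly solves $\PCSP(\Gamma)$.

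The third and final step is to invoke Schaefer's theorem (as presented in \cite{Chen:2009}): since $\CSP(\Gamma')$ has one of $0_1$, $1_1$, $\AND$, $\OR$, or $\Par$ as a polymorphism, it falls into one of the six tractable Schaefer classes (all-zero-satisfied, all-one-satisfied, dual-Horn / Horn, or affine over $\F_2$ respectively), and is therefore solvable in polynomial time; the reduction above is clearly polynomial time (the relations $R_i$ are computed once, independent of the input, by brute force over $Q_i$). The main obstacle --- really the only non-routine point --- is the closure/arity bookkeeping in Step~1: one must verify that iterating $O_f$ stabilizes and that the stabilized relation still sits inside $Q_i$ (which it does, monotonically, since every stage is an image of $P$-tuples under $f \in \poly(P_i,Q_i)$), and that using $f$ of the fixed arity $2^k$ or $2^k+1$ loses nothing because on any relation of arity $\le k$ an AND/OR/parity of arbitrarily many tuples collapses, by idempotence and (for parity) by pairwise cancellation, to one of at most $2^k$ (resp.\ $2^k+1$) tuples. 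Once that is pinned down, the rest is a direct sandwiching argument plus a citation.
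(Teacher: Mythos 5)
Your proposal is correct and follows essentially the same strategy as the paper: sandwich a genuine CSP $\Gamma'$ between the $P_i$ and $Q_i$, observe that it inherits $f$ as a polymorphism, and invoke Schaefer. The one place where your write-up is shaky is the parenthetical justification that the iterated closure ``sits inside $Q_i$ (which it does, monotonically, since every stage is an image of $P$-tuples under $f \in \poly(P_i,Q_i)$).'' That reasoning is not literally valid for $t \ge 2$: $O_f^{(2)}(P_i) = O_f(O_f(P_i))$ is an image of $O_f(P_i)$-tuples, not of $P_i$-tuples, and $f \in \poly(P_i, Q_i)$ says nothing about those. What actually rescues you is the collapsing observation you make in the next clause --- for AND, OR, and Parity of arity $2^k$ (resp.\ $2^k+1$) applied to a relation of size at most $2^k$, any composite is already the AND/OR/XOR of a single (odd, for parity) subset of $P_i$, so $O_f^{(t)}(P_i) = O_f(P_i)$ for all $t\ge 1$ and no genuine iteration occurs. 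That, rather than the ``image of $P$-tuples'' remark, is the correct reason the closure stays inside $Q_i$.

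The paper sidesteps the iteration question entirely: it sets $R_i = O_f(P_i) \cup P_i$ once, argues directly and case-by-case that $R_i$ is closed under the small-arity version of $f$ ($0_1$, $1_1$, $\AND_2$, $\OR_2$, or $\Par_3$) --- e.g.\ that an AND of two elements of $R_i$ is an AND of the union of two subsets of $P_i$, hence still in $R_i$ --- and then invokes Schaefer, which is typically stated in terms of exactly those small-arity polymorphisms. This is cleaner than reasoning about fixed points of $O_f$, and you may want to adopt it. The paper also separately disposes of the degenerate $P_i = \emptyset$ case before running the argument; your version handles it implicitly (everything is vacuous), so that omission is harmless.
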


\begin{proof}
  If for some $(P_i, Q_i) \in \Gamma$, $P_i$ is the empty relation, we can check if our $\Gamma$-PCSP has a $P_i$ clause and reject, otherwise, we run the polynomial time algorithm for the promise relation family $\Gamma \setminus \{(P_i, Q_i) {: i \in [r]}\}$. Thus, we may without loss of generality assume that no $P_i$ of $\Gamma$ is the empty relation.

  For each possible $f$, we reduce the family of promise relations
  $\Gamma$ to $\Gamma' = \{R_i = f(P_i)\cup P_i : i \in \{1, \hdots,
  \ell\}\}$. We must have that $P_i \subseteq R_i \subseteq Q_i$, so
  the reduction is immediate (replace each $(P_i, Q_i)$ clause with a
  corresponding $R_i$ clause). We now show that $\CSP(\Gamma')$ is tractable in each case.
  
  \textbf{Case 1,} $f = \Zero_1$. For all $i \in \{1, \hdots, \ell\}$, note that $f(P_i) = \{(0, \hdots, 0)\}$. Thus, for all $R_i \in \Gamma'$, $R_i$ is closed under $\Zero_1$. Thus $\Gamma'$ has $\Zero_1$ as a polymorphism and so $\CSP(\Gamma')$ is polynomial-time tractable. {S}etting every variable to $0$ satisfies the $\Gamma'$-CSP.
  
  \textbf{Case 2,} $f = \One_1$. This is identical to Case 1, except $f(P_i) = \{(1, \hdots, 1)\}$.

  \textbf{Case 3,} $f = \AND_{2^k}$. Since $2^k \ge |P_i|$, the bitwise-AND of every subset of $P_i$ must be in $R_i$. Thus, we have that $R_i$ must be closed under the $\AND_2$ operator. Thus, $\Gamma'$ has $\AND_2$ as a polymorphism and is polynomial-time tractable.

  \textbf{Case 4,} $f = \OR_{2^k}$. Essentially the same as Case 3.

  \textbf{Case 5,} $f = \Par_{2^k+1}$. Since $2^k + 1 > |P_i|$, the bitwise-XOR of every odd-sized subset of $P_i$ is in $R_i$. Thus, $R_i$ is closed under the $\Par_3$ operator (the symmetric difference of 3 odd-sized subsets is an odd-sized subset). Thus, $\Gamma'$ has $\Par_3$ as a polymorphism and so it is polynomial-time tractable via a Gaussian-elimination algorithm. 
\end{proof}

\subsection{Majority and \Maltsev{}}

The algorithms in the previous section used the fact that
$\PCSP(\Gamma)$ has a tractable CSP $\Gamma'$ that is `sandwiched' by
$\Gamma$. If $\Gamma$ has the $\Maj_L$ or $\Malt_L$ polymorphism for
all odd $L$, it is no longer always the case that the polymorphisms of
$\Gamma$ contain the polymorphisms of a tractable CSP.\footnote{This
  follows from the fact that the composition of $\Maj_L$ (or
  $\Malt_L$) with itself is not a higher-arity $\Maj$ (or $\Malt$)
  function. In other words, the closure of the family of majority (or
  alternating-threshold) functions under identification do not form a
  clone.} Instead, we demonstrate tractability
by writing any $\Gamma$-PCSP $\Psi = (\Psi_P, \Psi_Q)$ as a linear
programming relaxation. This approach generalizes that of
\cite{DBLP:journals/siamcomp/AustrinGH17}. The following is the pseudocode for establishing
the existence of a solution.
\begin{itemize}

\itemsep=0ex
\item Construct the LP relaxation:
  \begin{itemize}
  
  \itemsep=0ex
  \item For each variable $x_j$ of $\Psi_P$, stipulate that $0 \le v_j \le 1$.
  \item For each clause $P_i(x_{j_1}, \hdots, x_{j_{k_i}})$ in $\Psi_P$, stipulate that $(v_{j_1}, \hdots, v_{j_{k_i}})$ is in the convex hull of the elements of $P_i$. 
  \end{itemize}
\item For each variable $x_j$ of $\Psi_P$.
  \begin{itemize}
  
  \itemsep=0ex
  \item Fix $v_j = 0$ (fix no other variables) and re-solve the LP.
  \item If no solutions, fix $v_j = 1$ and re-solve the LP.
  \item If still no solutions, output `unsatisfiable.'
  \end{itemize}
\item Output `satisfiable.'
\end{itemize}

\begin{rem}
  Since $\Gamma$ is fixed and finite, the size of the LP relaxation is within a constant factor of the size of the instance PCSP. Also, this LP does not require an objective function, since we are only concerned whether the LP has any solution.
\end{rem}

\begin{rem}
  It is worth noting that a different algorithm also exists for the \Maltsev~polymorphism. For each $P_i(x_{j_1}, \hdots, x_{j_{k_i}})$ in $\Psi_P$, write {the} minimal system of linear equations over $\mathbb Z$ such that every element of $P_i$ is a solution (this is known as the \emph{affine hull} of $P_i$). Then, solve this system of linear equations using Gaussian elimination over $\mathbb Z$ (e.g.,~\cite{KannanBachem1979}). Clearly if the system is infeasible, then $\Psi_P$ is unsatisfiable. For any solutions $(v_1, \hdots, v_n)$ to this system, then $(w_1, \hdots, w_n)$ where
\[
w_i = \begin{cases}
  1 & v_i \ge 1\\
  0 & v_i \le 0
\end{cases}
\]
is a solution to $\Psi_Q$.
\end{rem}

\begin{proof}
  Note that the algorithm did not distinguish whether the family $\{\Maj_L\}$ or $\{\Malt_L\}$ were the polymorphisms. The reason the algorithm works, however, differs for these two cases.

  First, assume that $\Psi_P$ is satisfiable. Then, there must exist an integer solution to the linear program. Thus, for each variable $x_j$, there must the LP must be feasible for at least one of $v_j = 0$ or $v_j = 1$. Therefore, the algorithm always correctly reports satisfiable in this case.

  Now, consider the case that $\Psi_Q$ is unsatisfiable. Assume for {the} sake of contradiction, that our algorithm incorrectly reports satisfiable on input $\Psi$. Thus, from our checks, we have that there exists a matrix $M \in \mathbb [0, 1]^{n \times n}$ of solutions (on the columns) such that $M_{i, i} \in \{0, 1\}$ for all $i \in [n]$. Note that we may assume that the entries of $M$ are rational. Furthermore, any convex combination of these $n$ solutions will yield a new solution to the original LP. In other words, for any column vector $v \in [0, 1]^n$, the sum of whose weights is $1$, we have that $Mv$ is also a solution to the LP. Now, we split into cases.

  \textbf{Case 1}, $\Maj_L$ is a polymorphism of $\Gamma$ for all odd $L$.

  We claim that there is $v \in [0, 1]^n$ with sum of coordinates $1$ such that $(Mv)_i \neq 1/2$ for all $i \in [n]$. Consider $w$ with the right properties such that $Mw$ has a minimal number of coordinates equal to $1/2$. If the number of such coordinates is $0$, we are done. Otherwise, consider a coordinate $j$ such that $(Mw)_j = 1/2$. Let $\epsilon = \min\{|(Mw)_i - 1/2|, (Mw)_i \neq 1/2\}/n$. Consider $w' = (1-\epsilon/2)w + (\epsilon/2)e_j$, where $e_j \in \{0, 1\}^n$ has value $1$ in the $j$th coordinate and $0$ everywhere else. Note, then that $|(Mw)_i - (Mw')_i| \le \epsilon/2$ for all $i$, so $Mw'$ will not have any new coordinates equal to $1/2$. Furthermore, since $(Me_j)_j$ is an integer (by construction of $M$), we have that the $(Mw')_j = 1/2 \pm \epsilon / 4 \neq 1/2$ also. Thus, $Mw'$ has fewer coordinates equal to $1/2$, violating the minimality of $w$. Thus, we can find a $v$ such that $(Mv)_i \neq 1/2$ for all $i \in [n]$.

  Thus, now we know that such a $v$ exists, we may consider $\epsilon = \min\{|(Mv)_j - 1/2|\} > 0$. We may perturb $v$ slightly to $v'$ with all of its coordinates rational so that $(Mv')_j \neq 1/2$ for all $j$. Since the coefficients of $M$ are rational, we have that $w = Mv$ has rational entries all not equal to $1/2$. We claim that $x_i^* = \lfloor w_i\rceil$ ($w_i$ rounded to the nearest integer) is a satisfying assignment to $\Psi_Q$. Now, consider any clause $P_i(x_{j_1}^*, \hdots, x_{j_{k_i}}^*)$ of $\Psi_P$, and enumerate the {integral points} $x^1, \hdots, x^{|P|} \in P$. Since $w$ is a rational solution to the LP, we have that there exists $\alpha_1, \hdots, \alpha_{|P|} \in \mathbb Q \cap [0, 1]$ which sum to 1 such that $(w_{j_1}, \hdots, w_{j_{k_i}}) = \alpha_1x^1 + \cdots + \alpha_{|P|}x^{|P|}$. Pick an integer $N \in \mathbb N$ which is a common denominator of $\alpha_1, \hdots, \alpha_{|P|}$. Consider $L = 2N + 1$. Since $\Maj_{L}$ is a polymorphism of $(P, Q)$, we have that that the majority of $2\alpha_1N$ copies of $x^1$, up to $2\alpha_{|P|}N$ copies of $x^N$ and an extra copy of $x^1$ (which has no effect) is in Q. It is easy to verify that this majority is the rounding of the entries of $(w_{j_1}, \hdots, w_{j_{k_i}})$ to the nearest integer. Thus, a satisfying assignment to $\Psi_Q$ exists, a contradiction.
  
  \textbf{Case 2}, $\Malt_L$ is a polymorphism of $\Gamma$ for all odd $L$.

  Let $\hat{w}$ be any rational solution to the LP. Using an argument similar to that in Case 1, we may find $v, w \in [0, 1]^n \cap \mathbb Q^n$ with $\sum_i v_i = 1$ such that $w = Mv$ and $w_i \neq \hat{w}_i$ for all $i \in [n]$ such that $\hat{w}_i \not\in \{0, 1\}$ (otherwise, it may be the case that $w_i = \hat{w}_i$ for all possible $w_i$). We now claim that the following is a satisfying assignment to $\Psi_Q$.
  \[
  \forall i \in [n],\ x_i^* = \begin{cases}0 & w_i < \hat{w_i} \text{ or } w_i = \hat{w_i} = 0\\ 1 & w_i > \hat{w_i} \text{ or } w_i = \hat{w_i} = 1\end{cases}.
  \]
  Consider any clause $P_i(x_{j_1}^*, \hdots, x_{j_{k_i}}^*)$ of $\Psi_P$ and enumerate the {integral points} $x^1, \hdots, x^{|P|} \in P$. Let $\alpha_1, \hdots, \alpha_{|P|}, \hat{\alpha}_1, \hdots, \hat{\alpha}_{|P|} \in [0, 1]\cap \mathbb Q$ be the weights such that $(w_{j_1}, \hdots, w_{j_{k_i}}) = \alpha_1x^1 + \cdots + \alpha_{|P|}x^{|P|}$ and $(\hat{w}_{j_1}, \hdots, \hat{w}_{j_{k_i}}) = \hat{\alpha}_1x^1 + \cdots + \hat{\alpha}_{|P|}x^{|P|}$. Let $N$ be a common denominator of the $\alpha_i$'s and $\hat{\alpha}_i$'s. For $L = 4N + 1$, plug into the odd-indexed entries of $\Malt_L$, $2N\alpha_i$ copies of $x^i$ for all $i \in \{1, \hdots, |P|\}$ and one extra copy of $x^1$ (which will not affect the output of the polymorphism). Into the even-indexed entries plug in $2N\hat{\alpha}_i$ copies of $x^i$ for all $i \in \{1, \hdots, |P|\}$. For each coordinate $\ell \in \{1, \hdots, k\}$, if $w_{j_{\ell}} = \hat{w}_{j_{\ell}} \in \{0, 1\},$ then when computing the $\ell$th coordinate, $\Malt_L$ will have every input equal to $x_{j_{\ell}}^*$ and thus will output that same value, as desired. If $w_{j_{\ell}} < \hat{w}_{j_{\ell}}$, then there will be strictly more $1$s in the even coordinates than in the odd coordinates, so $\Malt_L$ will output $0$ which agrees with our solution $x_{j_{\ell}}^*$. Finally, if $w_{j_{\ell}} > \hat{w}_{j_{\ell}}$, then there will be strictly more $1$s in the odd coordinates than in the even coordinates, so $\Malt_L$ will output $1$ which agrees with our solution $x_{j_{\ell}}^*$. Therefore $\Psi_Q$ is indeed satisfiable, contradiction.
  
  \textbf{End Cases.}
\end{proof}

\begin{rem}
  This only checks whether $\Psi = (\Psi_P, \Psi_Q)$ is satisfiable, but does not find a solution when satisfiable. The proof of correctness may be modified to achieve polynomial-time algorithms for finding a satisfying assignment. See also the subsequent work by the same authors~\cite{BrakensiekGuruswami2019, BrakensiekGuruswami2020} which have a substantial discussion on the differences between decision and search as well as streamlined and generalized the algorithms in this section to a wider range of polymorphisms{.}
\end{rem}

\subsection{Non-idempotent polymorphisms}

Consider a family of promise relations $\Gamma$. If $\Zero_1$ or $\One_1$ is a polymorphism of $\Gamma$, as previously mentioned, it is polynomial-time tractable. Thus, now consider $\Gamma$ non-degenerate. What if $\Pol(\Gamma)$ has none of the idempotent families of polymorphisms mentioned in this section, but it has one of the non-idempotent families (such as $\overline{\Maj}_L$ for all odd $L$)? Then, by Proposition \ref{prop:even-more-idempotence}, the non-idempotent version of this family yields the corresponding idempotent family of polymorphisms of $\Gamma'' = (\neg \Gamma)\cup\{\setzero, \setone\}$. From the previous sections, we then have that $\Gamma''$ is polynomial-time tractable. Therefore, by Lemma \ref{lem:idempotent}, that $\Gamma$ itself is polynomial-time tractable. Hence, we have proved the following.

\begin{thm}\label{thm:alg-result}
  Let $\Gamma$ be a finite family of promise relations. If at least one of $\Zero_L$, $\One_L$, $\AND_L$, $\OR_L$, $\overline{\AND}_L$, or $\overline{\OR}_L$ is a polymorphism of $\Gamma$ for all $L$, or $\Par_L$, $\Maj_L$, $\Malt_L$, $\overline{\Par}_L$, $\overline{\Maj}_L,$ or  $\overline{\Malt}_L$ is a polymorphism of $\Gamma$ for all odd $L$, then $\PCSP(\Gamma)$ is polynomial-time tractable.
\end{thm}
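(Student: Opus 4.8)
The plan is to assemble Theorem~\ref{thm:alg-result} directly from the three lemmas proved in this section together with the idempotence machinery of Section~\ref{subsec:idempotence}. The statement lists twelve families of weak polymorphisms; we handle them according to whether the members are idempotent or not. The idempotent families among them are $\Zero_L$ (no, constant $0$ is not idempotent) — let me recount: actually only the projections-like ones matter, so the truly idempotent families in the list are $\AND_L$, $\OR_L$, $\Par_L$, $\Maj_L$, $\Malt_L$, while $\Zero_L$, $\One_L$, $\overline{\AND}_L$, $\overline{\OR}_L$, $\overline{\Par}_L$, $\overline{\Maj}_L$, $\overline{\Malt}_L$ are non-idempotent. (Note $\One_L = \overline{\Zero}_L$.) First I would dispose of the case where $\Zero_1$ or $\One_1$ is a weak polymorphism: if $\Zero_L \in \poly(\Gamma)$ for all $L$ then in particular $\Zero_1 \in \poly(\Gamma)$, and then checking whether the all-zeros assignment satisfies $\Psi_P$ decides $\PCSP(\Gamma)$; symmetrically for $\One_1$. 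So from now on assume $\Gamma$ is non-degenerate, i.e.\ $\Zero_1, \One_1 \notin \poly(\Gamma)$.

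Next, for the idempotent families I would invoke the two lemmas already proved. If $\AND_L$ (resp.\ $\OR_L$, $\Par_L$) is a weak polymorphism of $\Gamma$ for all $L$, then in particular $\AND_{2^k}$ (resp.\ $\OR_{2^k}$, $\Par_{2^k+1}$) is, where $k$ is the maximum arity in $\Gamma$, so the first lemma of Section~\ref{sec:alg} (the ``Zero, One, AND, OR, Parity'' lemma) gives a polynomial-time algorithm. If $\Maj_L$ or $\Malt_L$ is a weak polymorphism for all odd $L$, the ``Majority and \Maltsev{}'' lemma gives the LP-based algorithm. This covers all five idempotent families.

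It remains to treat the non-idempotent families $\overline{\AND}_L$, $\overline{\OR}_L$, $\overline{\Par}_L$, $\overline{\Maj}_L$, $\overline{\Malt}_L$ (the cases $\Zero_L$, $\One_L$ having been dispatched above). Here I would use Proposition~\ref{prop:even-more-idempotence} and Lemma~\ref{lem:idempotent}. Since $\Gamma$ is non-degenerate and, say, $\overline{\Maj}_L \in \poly(\Gamma)$ is non-idempotent (for $L \geq 3$), the family $\neg\Gamma$ is well-defined; by Proposition~\ref{prop:even-more-idempotence} (applied relationwise), $\neg\overline{\Maj}_L = \Maj_L$ is an idempotent weak polymorphism of $\neg\Gamma$, hence of $\Gamma'' = (\neg\Gamma) \cup \{\setzero, \setone\}$. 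By the idempotent case just handled, $\PCSP(\Gamma'')$ is polynomial-time tractable, and then Lemma~\ref{lem:idempotent}(2) transfers tractability back to $\PCSP(\Gamma)$. The same argument applies verbatim with $\Maj$ replaced by $\AND$, $\OR$, $\Par$, or $\Malt$; one only needs the ``anti'' function to be non-idempotent, which holds as soon as the arity is at least $2$ (for $\AND, \OR$) or at least $3$ (for $\Par, \Maj, \Malt$), and the hypothesis supplies members of arbitrarily large arity. The main thing to be careful about — the only place where anything could go wrong — is checking the hypotheses of Proposition~\ref{prop:even-more-idempotence}: we need $\Gamma$ non-degenerate and the anti-polymorphism genuinely non-idempotent, both of which we have arranged. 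Everything else is routine bookkeeping over the twelve cases, so there is no serious obstacle here; the theorem is essentially a corollary packaging the section's work.
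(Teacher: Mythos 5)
Your proposal follows the paper's decomposition exactly: dispatch $\Zero_1$/$\One_1$ directly, invoke the Section~3.1 lemma for $\AND$, $\OR$, $\Par$, the Section~3.2 LP lemma for $\Maj$, $\Malt$, and then reduce the anti-polymorphism cases to the idempotent ones via Proposition~\ref{prop:even-more-idempotence} and Lemma~\ref{lem:idempotent}(2). One small correction to the $\Zero$/$\One$ step: the algorithm should check whether the all-zeros assignment satisfies $\Psi_Q$ (or the intermediate CSP $\Gamma'$), not $\Psi_P$ — failure of all-zeros on $\Psi_P$ does not rule out other satisfying assignments, whereas the weak polymorphism $\Zero_1$ guarantees that if $\Psi_P$ is satisfiable then all-zeros satisfies $\Psi_Q$, which is what makes the one-sided test sound.
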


Note that we did not assume that $\Gamma$ is symmetric for our algorithms. That assumption will be incorporated into the $\mathsf{NP}$-hardness arguments.

\section{Classification of Polymorphisms of Folded, Symmetric Promise Relations}\label{sec:classification}

Consider any family $\Gamma$ of finitely many symmetric promise relations which contains the NOT relation. We showed in Section \ref{sec:alg} if the polymorphisms of $\Gamma$ contain any of $\Par_L, \Maj_L, \Malt_L, \overline{\Par}_L, \overline{\Maj}_L,$ or $\overline{\Malt}_L$ for all odd $L$, then $\Gamma$ is polynomial-time tractable. We show in this section that if $\Gamma$ does not have any of these as polymorphisms for all odd $L$, then $\Gamma$'s polymorphisms are rather skewed. Explicitly, we show that every polymorphism $f \in \Pol(\Gamma)$ is a ``$C$-fixing junta.'' This is a key element of the $\mathsf{NP}$-hardness proof in Section~\ref{sec:hardness}.

\subsection{PCSP relaxation}

In order to simplify our proof as well as to illuminate the crucial role of the promise relations, we introduce the notion of \textit{relaxing} a promise relation.

\begin{df}
Let $\Gamma = \{(P_i, Q_i) {: i \in [r]}\}$ be a family of promise relations. We say that another family of promise relations $\Gamma'$ is a \textit{relaxation} of $\Gamma$ if $\Pol(\Gamma) \subseteq \Pol(\Gamma')$.
\end{df}

Intuitively, a larger set of polymorphisms should make the PCSP easier. In Section \ref{app:Galois}, we confirm this by showing that if $\Pol(\Gamma) \subseteq \Pol(\Gamma')$, then there is a polynomial time reduction from $\PCSP(\Gamma')$ to $\PCSP(\Gamma)$. This fact follows from the \textit{Galois correspondence} of polymorphisms and PCSPs. Therefore, since our aim is to demonstrate the $\mathsf{NP}$-hardness of $\PCSP(\Gamma)$, it suffices to show that $\PCSP(\Gamma')$ is $\mathsf{NP}$-hard for some suitable choice of $\Gamma'$ that is a relaxation of $\Gamma$.

The main insight leading to our choice of $\Gamma'$ is our over-arching philosophy that \textit{polymorphisms beget algorithms}. Thus, if we ensure $\Gamma'$ fails to have the polymorphisms which we showed led to polynomial-time algorithms, $\Par_L, \Malt{}_L, \Maj_L$, then $\PCSP(\Gamma')$ should be $\mathsf{NP}$-hard. In the coming subsections, we show exactly which promise relations need to be added to $\Gamma'$ in order to exclude Parity, \Maltsev{}, and Majority, while still including all of the idempotent polymorphisms of $\Gamma$. 

To warm up, here is a claim about such relaxations in the symmetric case. Intuitively, this relation says we can reduce the arity of any symmetric relation in a way which respects the symmetric structure.

\begin{claim}\label{claim:shrink}
Let $(P, Q)$ be a symmetric promise relation of arity $k$. Let $P = \Ham_k(S), Q = \Ham_k(T)$, where $S \subseteq T \subseteq \{0, \hdots, k\}$. Then, {each} idempotent polymorphisms $(P, Q)$ {is a} polymorphism of $(\Ham_{k-1}(S \setminus \{k\}), \Ham_{k-1}(T \setminus \{k\}){)}$.  
\end{claim}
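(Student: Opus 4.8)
The plan is to argue directly from the definition of a weak polymorphism, using a ``padding coordinate'' together with idempotence. First I would note that $P \subseteq Q$ forces $S \subseteq T$, hence $S \setminus \{k\} \subseteq T \setminus \{k\}$, so $(\Ham_{k-1}(S \setminus \{k\}), \Ham_{k-1}(T \setminus \{k\}))$ is a legitimate promise relation and the statement is meaningful. If $S \setminus \{k\} = \emptyset$ then $\Ham_{k-1}(S\setminus\{k\})$ is empty and the claim holds vacuously, so I may assume $S \setminus \{k\} \neq \emptyset$.

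Now fix an idempotent $f \in \poly(P,Q)$, say $f : \{0,1\}^L \to \{0,1\}$, and take arbitrary tuples $x^{(1)}, \hdots, x^{(L)} \in \Ham_{k-1}(S\setminus\{k\})$. The key step is to lift each $x^{(j)} \in \{0,1\}^{k-1}$ to $\tilde x^{(j)} := (x^{(j)}, 0) \in \{0,1\}^{k}$ by appending a $0$ in the last coordinate. Since $|x^{(j)}| \in S \setminus \{k\} \subseteq S$ and appending a $0$ does not change the Hamming weight, each $\tilde x^{(j)} \in \Ham_k(S) = P$. Because $f$ is a weak polymorphism of $(P,Q)$, applying $f$ coordinatewise gives $f(\tilde x^{(1)}, \hdots, \tilde x^{(L)}) \in Q = \Ham_k(T)$.

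Finally I would read off the coordinates of this output. By idempotence $f(0, \hdots, 0) = 0$, so the last coordinate of $f(\tilde x^{(1)}, \hdots, \tilde x^{(L)})$ equals $0$, and hence $f(\tilde x^{(1)}, \hdots, \tilde x^{(L)}) = (f(x^{(1)}, \hdots, x^{(L)}), 0)$. In particular $|f(x^{(1)}, \hdots, x^{(L)})| = |f(\tilde x^{(1)}, \hdots, \tilde x^{(L)})| \in T$. Since $f(x^{(1)}, \hdots, x^{(L)}) \in \{0,1\}^{k-1}$, its weight is at most $k-1$, so it lies in $T \cap \{0, \hdots, k-1\} = T \setminus \{k\}$; that is, $f(x^{(1)}, \hdots, x^{(L)}) \in \Ham_{k-1}(T \setminus \{k\})$, which is exactly what is needed.

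There is no substantial obstacle here. The only points deserving care are the edge case where the reduced source relation is empty (handled vacuously) and the precise role of idempotence: we only use the half $f(0,\hdots,0)=0$, and this is exactly what guarantees that the appended coordinate remains $0$ after applying $f$, so that projecting it away is legitimate. Symmetry of $P$ and $Q$ enters only through writing them in the $\Ham_k(\cdot)$ form; no further structural argument about $f$ is required.
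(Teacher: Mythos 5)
Your proof is correct and takes essentially the same approach as the paper: append a zero coordinate to lift tuples from $\Ham_{k-1}(S\setminus\{k\})$ into $P$, apply the weak polymorphism to land in $Q$, use $f(0,\hdots,0)=0$ to see the appended coordinate outputs $0$, and project it away to conclude the weight lies in $T\setminus\{k\}$. The only difference is notational (you write out the $L$ input tuples directly, while the paper works with the transposed ``column'' vectors in $\{0,1\}^L$), plus you explicitly handle the vacuous case $S\setminus\{k\}=\emptyset$.
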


{For this proof, and throughout the paper we shall use the following notation. For $S \subseteq \{1, \hdots, L\}$, we let $e_S \subseteq \{0, 1\}^L$ be such that $(e_S)_i = 1$ if and only if $i \in S$. If $S = \{i\}$ is a single element, we let $e_i = e_S$.}

\begin{proof}
  Let $f \in \Pol(P, Q)$ be any idempotent polymorphism of arity $L$. Consider $x^1, \hdots, x^{k-1} \in \{0, 1\}^L$ such that for all $i \in \{1, \hdots, L\}$, $|(x^1_i, \hdots, x^{k-1}_i)| \in S \setminus \{k\}$. This implies that $|(x^1_i, \hdots, x^{k-1}_i, 0)| \in S$ for all $i$, so since $f \in \Pol(P, Q)$,
  \[
  |(f(x^1), \hdots, f(x^{k-1}), f(0\hdots 0))| \in T
  \]
  Thus, since $f$ is idempotent, $|(f(x^1), \hdots, f(x^{k-1}))| \in T\setminus \{k\}$. Thus, $f \in \Pol(\Ham_{k-1}(S \setminus \{k\}), \Ham_{k-1}(T \setminus \{k\}).$
\end{proof}

Let $P$ be any relation of arity $k$, and let $S \subseteq \{1, \hdots, k\}$ be any subset. Then, define
\[\flip_S(P) = \{y \in \{0, 1\}^k\: :\: y\oplus e_S \in P\}.\]
Note that $\neg P = \flip_{[k]}(P)$.

\begin{claim}\label{claim:flip}
Let $(P, Q)$ be a promise relation of arity $k$, and let $S \subseteq \{1, \hdots, k\}$. Then, $(P, Q)$ and $(\flip_S(P), \flip_S(Q))$ have identical folded polymorphisms.
\end{claim}
\begin{proof}
Consider any $f \in \Pol(P, Q)$ of arity $L$ which is folded. Pick $x^1, \hdots, x^k \in \{0, 1\}^L$ such that $(x^1_j, \hdots, x^k_j) \in \flip_S(P)$ for all $j \in \{1, \hdots, L\}$. Then, consider $y^1, \hdots, y^k$ such that $y^i = \neg x^i$ if $i \in S$ and $y^i = x^i$ otherwise. Then, for all $j \in \{1, \hdots, L\}$, $(y^1_j, \hdots, y^k_j) \in P$. Thus, $(f(y^1), \hdots, f(y^k)) \in Q$. Due to folding, we have that $(f(x^1), \hdots, f(x^k)) \in \flip_S(Q)$. Thus, the folded polymorphisms of $(P, Q)$ are polymorphisms of $(\flip_S(P), \flip_S(Q))$. By a symmetric argument, we may deduce that the folded polymorphisms of $(P, Q)$ and $(\flip_S(P), \flip_S(Q))$ are identical.
\end{proof}

\begin{rem}
Note that unless $S = \{\}$ or $S = \{1, \hdots, k\}$, then $\flip_S(P)$ and $\flip_S(Q)$ may not be symmetric if $P$ and $Q$ were originally symmetric. That said, in most applications, $S$ will be one of these symmetry-preserving choices.
\end{rem}

We can combine these two claims to get a natural corollary. This result tells us that we can shift down the Hamming weights of a symmetric, folded promise relation.

\begin{claim}\label{claim:shrink2}
  Let $(P, Q)$ be a symmetric promise relation of arity $k$. Let $P = \Ham_k(S), Q = \Ham_k(T)$, where $S \subseteq T \subseteq \{0, \hdots, k\}$. Then, the idempotent, folded polymorphisms of $(P, Q)$ are polymorphisms of $(\Ham_{k-1}(\{{\ell: \ell \ge 0, \ell + 1 \in S}\}), \Ham_{k-1}({\{\ell: \ell \ge 0, \ell + 1 \in T\}}))$.  
\end{claim}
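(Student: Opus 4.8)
The plan is to combine Claim~\ref{claim:shrink} (arity reduction by fixing a coordinate to $0$) and Claim~\ref{claim:flip} (the folding symmetry that lets us relabel a relation by flipping a set of coordinates) in the right order. Directly applying Claim~\ref{claim:shrink} to $(P,Q) = (\Ham_k(S), \Ham_k(T))$ only removes the Hamming weight $k$ from $S$ and $T$; it does not shift everything down by one. The trick is to first flip \emph{one} coordinate, say coordinate $k$, using Claim~\ref{claim:flip}, which converts the symmetric relation $\Ham_k(S)$ into a non-symmetric relation, and then project onto the remaining coordinates by fixing the flipped coordinate appropriately so that the leftover is the symmetric relation with all weights decremented.

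Concretely, here is the sequence of steps I would carry out. First, apply Claim~\ref{claim:flip} with $S' = \{k\}$ (a single coordinate): the idempotent, folded weak polymorphisms of $(P,Q) = (\Ham_k(S), \Ham_k(T))$ are exactly the folded weak polymorphisms of $(\flip_{\{k\}} P, \flip_{\{k\}} Q)$. Note that $\flip_{\{k\}} \Ham_k(S) = \{ y \in \{0,1\}^k : y \oplus e_k \in \Ham_k(S)\}$, i.e., a string $y$ is in this relation iff flipping its last bit gives Hamming weight in $S$. Second, mimic the argument of Claim~\ref{claim:shrink}, but now fix the last coordinate to $1$ instead of $0$: given $x^1,\dots,x^{k-1} \in \{0,1\}^L$ with $|(x^1_i,\dots,x^{k-1}_i)| \in \{\ell - 1 \ge 0 : \ell \in S\}$ for every $i$, the tuple $(x^1_i,\dots,x^{k-1}_i,1)$ has Hamming weight in $S$, so after flipping the last coordinate (which becomes $0$) it has weight in $S$ — wait, more carefully: $(x^1_i,\dots,x^{k-1}_i,1) \oplus e_k = (x^1_i,\dots,x^{k-1}_i,0)$ has weight $|(x^1_i,\dots,x^{k-1}_i)| \in S$, so $(x^1_i,\dots,x^{k-1}_i,1) \in \flip_{\{k\}} P$. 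Since $f$ is idempotent and folded, $f(1,\dots,1) = 1$, so applying $f$ coordinatewise and using that $f$ is a weak polymorphism of $(\flip_{\{k\}} P, \flip_{\{k\}} Q)$ gives $(f(x^1),\dots,f(x^{k-1}),1) \in \flip_{\{k\}} Q$, meaning $(f(x^1),\dots,f(x^{k-1}),0)$ has weight in $T$, i.e., $|(f(x^1),\dots,f(x^{k-1}))| \in T$, hence in $\{\ell - 1 \ge 0 : \ell \in T\}$ since the total weight is at most $k-1$. This shows $f \in \poly(\Ham_{k-1}(\{\ell-1\ge 0 : \ell \in S\}), \Ham_{k-1}(\{\ell-1 \ge 0 : \ell \in T\}))$.

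Actually, the cleanest presentation just chains the two claims as black boxes: by Claim~\ref{claim:flip} the idempotent folded weak polymorphisms of $(\Ham_k(S),\Ham_k(T))$ coincide with the folded weak polymorphisms of $(\flip_{\{k\}}\Ham_k(S), \flip_{\{k\}}\Ham_k(T))$, and then one observes that fixing the last coordinate to $1$ in the latter pair (the ``flip-then-shrink'' analogue of Claim~\ref{claim:shrink}) yields exactly $(\Ham_{k-1}(\{\ell-1 \ge 0 : \ell \in S\}), \Ham_{k-1}(\{\ell-1 \ge 0 : \ell \in T\}))$, because $\flip_{\{k\}}$ sends a weight-$\ell$ string whose last bit is $1$ to a weight-$(\ell-1)$ string whose last bit is $0$. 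I expect the only mild subtlety — not really an obstacle — is bookkeeping the set arithmetic $\{\ell - 1 \ge 0 : \ell \in S\}$ correctly (in particular that weights are never negative because $0 \in S$ would contribute nothing, and that the projected relation lives in $\{0,1\}^{k-1}$ so weight $k$ is automatically excluded), together with checking that idempotence is what lets us plug the constant $1$ into the last coordinate. One should also double-check that $\flip_{\{k\}}$ of a symmetric relation, though itself not symmetric, becomes symmetric again after deleting the flipped coordinate, which is immediate from the weight-shift observation.

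\begin{proof}
  By Claim~\ref{claim:flip} applied with the singleton $\{k\} \subseteq \{1,\dots,k\}$, the folded weak polymorphisms of $(P, Q) = (\Ham_k(S), \Ham_k(T))$ are exactly the folded weak polymorphisms of $(\flip_{\{k\}} P, \flip_{\{k\}} Q)$. In particular this holds for the idempotent folded weak polymorphisms. Now let $f \in \poly(\flip_{\{k\}} P, \flip_{\{k\}} Q)$ be any idempotent weak polymorphism of arity $L$, and consider $x^1, \hdots, x^{k-1} \in \{0, 1\}^L$ such that $|(x^1_i, \hdots, x^{k-1}_i)| \in \{\ell - 1 \ge 0 : \ell \in S\}$ for all $i \in \{1, \hdots, L\}$. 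Then for each $i$, the string $(x^1_i, \hdots, x^{k-1}_i, 1)$ has Hamming weight in $S$, and since $(x^1_i, \hdots, x^{k-1}_i, 1) \oplus e_k = (x^1_i, \hdots, x^{k-1}_i, 0)$ has weight in $S$, we get $(x^1_i, \hdots, x^{k-1}_i, 1) \in \flip_{\{k\}} P$. As $f$ is idempotent, $f(1, \hdots, 1) = 1$, so because $f \in \poly(\flip_{\{k\}} P, \flip_{\{k\}} Q)$,
  \[
  (f(x^1), \hdots, f(x^{k-1}), 1) \in \flip_{\{k\}} Q,
  \]
  which means $(f(x^1), \hdots, f(x^{k-1}), 0)$ has Hamming weight in $T$, i.e. $|(f(x^1), \hdots, f(x^{k-1}))| \in T$. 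Since this weight is at most $k - 1$, it lies in $\{\ell - 1 \ge 0 : \ell \in T\}$. Hence $f \in \poly(\Ham_{k-1}(\{\ell - 1 \ge 0 : \ell \in S\}), \Ham_{k-1}(\{\ell - 1 \ge 0 : \ell \in T\}))$, as desired.
\end{proof}
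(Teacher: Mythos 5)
Your underlying idea — use idempotence to plug in a fixed last coordinate of~$1$ rather than~$0$, and thereby shift all weights down by one in a single step — is a genuine shortcut compared to the paper's flip-shrink-flip sequence. But your execution has a real error: the $\flip_{\{k\}}$ step is both unnecessary and incorrectly applied, and it breaks the proof. Concretely, if the columns of $x^1,\dots,x^{k-1}$ have weight in $S' := \{\ell-1 \ge 0 : \ell \in S\}$, then $(x^1_i,\dots,x^{k-1}_i,0)$ has weight in $S'$, \emph{not} in $S$, so the sentence ``$(x^1_i,\dots,x^{k-1}_i,0)$ has weight in $S$, we get $(x^1_i,\dots,x^{k-1}_i,1)\in\flip_{\{k\}}P$'' is false (try $S=\{2\}$, $k=3$: a column of weight $1$ gives $(x^1_i,x^2_i,0)$ of weight $1\notin S$). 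What \emph{is} true, and what you noticed in the same breath, is that $(x^1_i,\dots,x^{k-1}_i,1)$ has weight $\ell \in S$, i.e.\ it lies in $P$ itself, not in $\flip_{\{k\}}P$. You then compound the error at the end: from ``$(f(x^1),\dots,f(x^{k-1}),0)$ has weight in $T$'' you conclude the reduced weight is in $T'$ merely because it is at most $k-1$, but $T\cap\{0,\dots,k-1\}$ need not equal $T' = \{\ell-1\ge 0:\ell\in T\}$ (e.g.\ $T=\{0,3\}$ gives $T'=\{2\}$, yet $0\in T\cap\{0,\dots,k-1\}$).

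The fix is to drop $\flip$ entirely and argue directly with $(P,Q)$: each column $(x^1_i,\dots,x^{k-1}_i,1)$ has weight $\ell\in S$, hence lies in $P$; since $f$ is idempotent, $f(1,\dots,1)=1$, so $f\in\poly(P,Q)$ gives $(f(x^1),\dots,f(x^{k-1}),1)\in Q=\Ham_k(T)$; therefore $|(f(x^1),\dots,f(x^{k-1}))|+1\in T$, which is exactly $|(f(x^1),\dots,f(x^{k-1}))|\in T'$. Note this direct argument never uses folding, only idempotence, which is slightly stronger than the stated claim. By contrast, the paper proves Claim~\ref{claim:shrink2} by chaining Claim~\ref{claim:flip} (negate all $k$ coordinates), Claim~\ref{claim:shrink} (drop coordinate $k$ by fixing it to $0$), and Claim~\ref{claim:flip} again (negate the remaining $k-1$ coordinates), which does use folding; either route is fine once the bookkeeping is right, but as written your proof does not establish the claim.
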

\begin{proof}
Apply Claim~\ref{claim:flip} on $\{1,\hdots, k\}$ to reduce the idempotent, polymorphisms of $(P, Q)$ to $(\Ham_k(\{k - \ell : \ell \in S\}, \Ham_k(\{k - \ell : \ell \in T\})$. Then, we apply Claim \ref{claim:shrink} to reduce further to $(\Ham_{k-1}(\{k - \ell : \ell \in S\} \cap \{0, \hdots, k- 1\}), \Ham_{k-1}(\{k - \ell : \ell \in T\} \cap \{0, \hdots, k - 1\}))$. Finally, we use Claim \ref{claim:flip} again to reduce the idempotent, folded polymorphisms of $(P, Q)$ to $(\Ham_{k-1}(\{{\ell: \ell \ge 0, \ell + 1 \in S}\}), \Ham_{k-1}(\{{\ell: \ell \ge 0, \ell + 1 \in T}\})),$ as desired.
\end{proof}

In the following sections, we will repeatedly use the claims to the reduce the $(P, Q)$ of our $\Gamma$ to some simpler promise relations for which we can analyze the folded, idempotent polymorphisms.

\subsubsection{\Maltsev{}-excluding relaxation}
\begin{lem}\label{lem:no-maltsev}
  Let $\Gamma$ be a symmetric, folded, idempotent family of promise relations such that $\Malt_L \not\in \Pol(\Gamma)$ for some odd positive integer $L$, then $\Gamma' = \{(P, Q)\}$ is a relaxation of $\Gamma$, in which either
  \begin{align*}
  P &= \Ham_k(\{1\}), &Q &= \Ham_k(\{0, 1, \hdots, k-2, k\}), k \ge 3, \text{ or}\\
  P &= \Ham_k(\{0, b\}), &Q &= \Ham_k(\{0, \hdots, k-1\}), k \ge 2, b\in \{1, \hdots, k-1\}.
  \end{align*}
\end{lem}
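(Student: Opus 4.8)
\textbf{Proof proposal for Lemma~\ref{lem:no-maltsev}.}

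The plan is to start from the hypothesis that some $(P_0,Q_0) \in \Gamma$ is responsible for excluding $\Malt_L$, i.e. $\Malt_L \notin \poly(P_0,Q_0)$, and then to extract from $(P_0,Q_0)$ one of the two listed relation pairs $(P,Q)$ as a relaxation by repeatedly applying Claims~\ref{claim:shrink}, \ref{claim:flip}, and \ref{claim:shrink2}. Concretely, $\Malt_L \notin \poly(P_0,Q_0)$ means there exist rows $x^1,\dots,x^k \in P_0$ such that, writing $y = \Malt_L(x^1,\dots,x^k)$ coordinatewise, we have $y \notin Q_0$. Since $P_0 = \Ham_k(S_0)$ and $Q_0 = \Ham_k(T_0)$ are symmetric, the only data that matters about this witness is, for each column $j \in [L]$, the Hamming weight $|(x^1_j,\dots,x^k_j)| \in S_0$; the alternating-sum threshold then forces $|y|$ to some value in $\{0,\dots,k\}\setminus T_0$. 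I would first use Claim~\ref{claim:shrink2} (and its symmetric/flip variants, Claim~\ref{claim:flip}) to shift the weights of $(P_0,Q_0)$ downward, and Claim~\ref{claim:shrink} to delete coordinates, so as to shrink $(P_0,Q_0)$ down to a minimal arity at which the alternating-threshold obstruction still persists. The point of this normalization is that once we know $(P,Q)$ is ``tight'' for failing $\Malt$, the combinatorics of when an alternating sum of $0/1$ weights can leave the weight band $T$ pins $(P,Q)$ down to very few possibilities.

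The core of the argument is the following case analysis on the structure of $S := \{|x| : x \in P\}$ after normalization. Since $\Gamma$ is idempotent, $0 \in S \Leftrightarrow 0 \in S$ is not automatic, but by Claim~\ref{claim:shrink2} we may always assume $\min S \in \{0,1\}$ after a suitable downward shift. Case A: $\min S = 1$. Here I would argue that the minimal arity at which $\Malt$ fails forces $S = \{1\}$: if $S$ contained two distinct weights $\ge 1$, one could use that freedom to either correct the alternating sum or reduce the arity further (contradicting minimality), so we land on $P = \Ham_k(\{1\})$, and then the largest $T$ that still excludes $\Malt$ turns out to be $\{0,1,\dots,k-2,k\}$ — the key computation being that with all rows of weight $1$, an odd alternating combination can realize weight $k-1$ (but not $k-2$ or below, after normalization), so $k-1$ must be excluded while everything else may be kept; one also checks $k\ge 3$ is needed for this to be a genuine obstruction. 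Case B: $\min S = 0$. Then $0 \in S$, and I would show that to have an alternating-threshold failure with the weight band as large as possible, $S$ must be exactly $\{0,b\}$ for some $1 \le b \le k-1$: having a third weight would again let us correct the alternating sum or shrink, and having $b \in \{0,k\}$ is degenerate. The surviving element that must be excluded from $T$ is $k$ (the all-ones string), because an odd alternating combination of rows of weights $0$ and $b$ can reach every weight in $\{0,\dots,k-1\}$ but the alternating structure prevents reaching $k$; hence $Q = \Ham_k(\{0,\dots,k-1\})$ and $k \ge 2$. Throughout, I need to verify the other direction too — that each listed $(P,Q)$ genuinely has $\Malt_L \notin \poly(P,Q)$ for some odd $L$ and that all idempotent folded weak polymorphisms of $\Gamma$ survive (i.e. $\poly(\Gamma) \subseteq \poly(\Gamma')$) — which follows because every reduction step we used (Claims~\ref{claim:shrink}, \ref{claim:flip}, \ref{claim:shrink2}) only enlarges the polymorphism set.

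The main obstacle I anticipate is the bookkeeping in the case analysis: proving that a \emph{minimal-arity} alternating-threshold obstruction forces $S$ to have the rigid form $\{1\}$ or $\{0,b\}$, rather than some larger set of weights. The subtlety is that $\Malt_L$ for large odd $L$ is quite flexible — by choosing how many copies of each row to place in odd versus even positions one can realize many coordinatewise weight vectors — so I must argue carefully that if $S$ had ``too much'' structure then either (i) some sub-collection of coordinates already exhibits the obstruction at smaller arity (contradicting minimality of the chosen $(P,Q)$), or (ii) the extra weights in $S$ let us push the alternating sum back inside $T$, so no obstruction exists at all. Making the reduction to minimal arity precise, and correctly identifying exactly which single weight ($k-1$ in Case A, $k$ in Case B) is the one forced out of $T$, is where the real work lies; the rest is routine application of the three claims. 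I would also double-check the edge cases $k=2,3$ by hand, since those are where the two families in the statement nearly coincide.
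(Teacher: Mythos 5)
Your overall architecture (identify a witnessing pair in $\Gamma$, then normalize via Claims~\ref{claim:shrink}, \ref{claim:flip}, \ref{claim:shrink2}, then do a case analysis) is in the right spirit, but two of your central claims are factually wrong and one key mechanism is missing.

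First, your description of \emph{why} a particular weight is excluded is incorrect. You assert that with $P = \Ham_k(\{1\})$ ``an odd alternating combination can realize weight $k-1$ (but not $k-2$ or below)'' and that with $P=\Ham_k(\{0,b\})$ ``the alternating structure prevents reaching $k$.'' Both statements are false: the paper's Claim~\ref{claim:Maltsev-brute-force} shows $O_{\Malt}(\Ham_k(\{1\})) = \Ham_k(\{1,\dots,k-1\})$ (all middle weights are reachable, not just $k-1$), and $O_{\Malt}(\Ham_k(\{\ell_1,\ell_2\})) = \{0,1\}^k$ whenever $\{\ell_1,\ell_2\}\ne\{0,k\}$ (so weight $k$ \emph{is} reachable from $\{0,b\}$). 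The weight $a$ that ends up being $k-1$ (resp.\ $k$) after normalization is not ``the weight $\Malt$ cannot reach'' — it is the particular weight that happens to lie in $O_{\Malt}(P)\setminus Q$ for the given pair in $\Gamma$, and the whole point of the shifting/flipping is to \emph{move} that weight to the position $k-1$ (resp.\ $k$). Reasoning from ``what $\Malt$ cannot reach'' will not produce the correct $Q$.

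Second, your ``minimal arity'' argument is not needed and, as stated, does not go through. You propose to argue that a minimal-arity obstruction forces $S=\{1\}$ or $S=\{0,b\}$ by showing that any extra weight in $S$ would let one correct the alternating sum or shrink further; that step is left entirely informal and is actually unnecessary. The paper's mechanism is simply \emph{relaxation}: since $(P',Q')$ with $P'\subseteq P$ and $Q\subseteq Q'$ satisfies $\poly(P,Q)\subseteq\poly(P',Q')$, one may throw away most of $P$ immediately — relax $P$ to a single Hamming level $\Ham_k(\{\ell\})$ (Case~1) or to two levels $\Ham_k(\{\ell_1,\ell_2\})$ (Case~2), and relax $Q$ to $\{0,\dots,k\}\setminus\{a\}$. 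There is no minimality to establish; the case split is on where the excluded weight $a$ lies ($a\in\{1,\dots,k-1\}$ versus $a\in\{0,k\}$), not on $\min S$. Your case split on $\min S$ does not match the actual structure of the argument, and the claim that you can ``always assume $\min S\in\{0,1\}$ after a downward shift'' conflates normalizing the witness $a$ with normalizing $S$.

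Third, you do not invoke any precise characterization of $O_{\Malt}$ applied to symmetric relations (the paper's Claim~\ref{claim:Maltsev-brute-force}). Without that ingredient the existence of a weight $\ell$ (or of the pair $\ell_1,\ell_2$ not equal to $\{0,k\}$) with the needed reachability properties remains unproved, and the arithmetic that produces $k''=|a-\ell|+2\ge3$ (resp.\ $k'\ge 2$) after the shrink/shift steps — which is exactly what yields the two normal forms in the lemma — has no foundation.
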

\begin{proof}
  As $\Malt_L \not\in \Pol(\Gamma)$, there is $(P, Q) \in \Gamma$ such that $\Malt_L \not\in \Pol(P, Q)$. Define ${\Malt}(P) = \bigcup_{L \in \mathbb N, \text{odd}} {\Malt_L}(P)$. Since $\Malt_L \not\in \Pol(P, Q)$ for some odd $L$, we have that ${\Malt}(P) \not\subseteq Q.$ We claim the following.
  \begin{claim}\label{claim:Maltsev-brute-force}
    Consider $k \ge 1$, then
    \begin{enumerate}
    \item ${\Malt}(\Ham_k(\{0\})) = \Ham_k(\{0\})$
    \item ${\Malt}(\Ham_k(\{k\})) = \Ham_k(\{k\})$
    \item ${\Malt}(\Ham_k(\{0, k\})) = \Ham_k(\{0, k\})$
    \item ${\Malt}(\Ham_k(\{\ell\})) = \Ham_{k}(\{1, \hdots, k-1\}),\ k \ge 2,\ \ell \in \{1, \hdots, k - 1\}$
    \item ${\Malt}(\Ham_k(\{\ell_1, \ell_2\})) = \{0, 1\}^k,\ k \ge 2, \{\ell_1, \ell_2\} \neq \{0, k\}, \ell_1 \neq \ell_2$
    \end{enumerate}
    \begin{proof}
    Facts 1-3 are easy to verify since $\Malt_L$ is idempotent for all odd $L$.

    For Fact 4, consider $\ell, \ell' \in \{1, \hdots, k - 1\}$. We claim that $\Ham_k(\{\ell'\}) \subseteq {\Malt}(\Ham_k(\{\ell\}))$. Pick $L = 2\ell'(k-\ell') + 1$. It suffices to pick $x^1, \hdots, x^L \in \Ham_k(\{\ell\})$ such that $\Malt_L(x^1, \hdots, x^L) = (1, \hdots, 1, 0, \hdots, 0),$ where the output has Hamming weight $\ell'$. Let $x^1 = (1, \hdots, 1, 0, \hdots, 0)$, of Hamming weight $\ell$, and let $x^2 = (0, \hdots, 0, 1,\hdots, 1)$, of Hamming weight $\ell$. Let $x^1, x^3, \hdots, x^{L - 2}$ be all possible permutations of $x^1$ in which the first $\ell'$ coordinates are cyclically shifted and the last $k - \ell'$ coordinates are cyclically shifted. There may be repetition, but each repetition should appear an equal number of times. Likewise, let $x^2, x^4, \hdots, x^{L - 1}$ be the same kind of permutations but of $x^2$. Let $x^L = 1$. It is easy to verify that
    \begin{align*}
    j \in \{1, \hdots, \ell'\}, \sum_{i=1, \text{odd}}^{L-2} x^i_j &= (k-\ell')\min(\ell, \ell')\\
    j \in \{\ell'+1, \hdots, k\}, \sum_{i=1, \text{odd}}^{L-2} x^i_j &= \ell'\max(0, \ell-\ell')\\
    j \in \{1, \hdots, \ell'\}, \sum_{i=1,\text{even}}^{L-1} x^i_j &= (k-\ell')\max(0, \ell+\ell'-k)\\
    j \in \{\ell'+1, \hdots, k\}, \sum_{i=1, \text{even}}^{L-1} x^i_j &= \ell'\min(\ell, k-\ell')
    \end{align*}
    Thus,
    \begin{align*}
      j\in \{1, \hdots, \ell'\}, \sum_{i=1}^L (-1)^{i-1}x^i_j &= (k-\ell')(\min(\ell, \ell') - \max(0, \ell+\ell'-k)) + x^L_j \ge (k-\ell') + x^L > 0\\
      j\in \{\ell'+1 \hdots, k\}, \sum_{i=1}^L (-1)^{i-1}x^i_j &= \ell'(\max(0, \ell-\ell') - \max(\ell, k-\ell')) + x^L_j < -\ell' + x^L_j \le 0.
    \end{align*}
    Therefore, $\Malt_L(x^1, \hdots, x^L) \in \Ham_k(\{\ell'\})$, as desired. By Proposition \ref{prop:sym}, $\Ham_k(\{\ell'\}) \subseteq {\Malt}(\Ham_k(\{\ell\}))$, for all $\ell, \ell' \in \{1, \hdots, k-1\}$, as desired.

    Now, to finish Fact 4, we seek to show that $\Ham_k(\{0\}) = \{(0, \hdots, 0)\} \not\subseteq \Ham_k(\{\ell\})$. Assume for {the} sake of contradiction, there exists $L$ odd and $x^1, \hdots, x^L \in \Ham_k(\{\ell\})$ such that $\Malt_L(x^1, \hdots, x^L) = (0, \hdots, 0)$. Then, we have that for all $i \in \{1, \hdots, k\}$, $\sum_{j=1}^L (-1)^{j-1}x^j_i\le 0$. Summing over all $i$, we have that $0 \ge \sum_{j=1}^L (-1)^{j-1}\sum_{i=1}^kx^j_i = \sum_{j=1}^L (-1)^{j-1}\ell = \ell$, a contradiction. Likewise, if $\Malt_L(x^1, \hdots, x^L) = (1, \hdots 1)$. We would have that $k \le \sum_{j=1}^L (-1)^{j-1}\sum_{i=1}^L x^j_i = \sum_{j=1}^L(-1)^{j-1}\ell = \ell$, which is also a contradiction. Thus, we have shown fact 4.

    For Fact 5, since we know that $\{\ell_1, \ell_2\} \neq \{0, k\}$, we know that at least one of $\ell_1$ and $\ell_2$ is strictly between $1$ and $k-1$. Thus, by fact 4, $\Ham_{k}(\{1, \hdots, k-1\}) \subseteq {\Malt}(\Ham_k(\{\ell_1, \ell_2\})).$ Therefore, it suffices to prove that $(0, \hdots, 0), (1, \hdots, 1) \in {\Malt}(\Ham_k(\{\ell_1, \ell_2\}))$. Assume that $\ell_1 < \ell_2$ and consider $L = 4k + 1$. To show $(0, \hdots, 0) \in {\Malt}(\Ham_k(\{\ell_1, \ell_2)\})$, pick $x^1, \hdots, x^{4k+1}$ such that $x^j$ has Hamming weight $\ell_1$ when $j$ is odd and Hamming weight $\ell_2$ when $j$ is even. Let $x^1 = (1, \hdots, 1, 0, \hdots, 0)$, and let $x^3, x^5, \hdots$ be successive cyclic shifts. Likewise, let $x^2 = (1, \hdots, 1, 0, \hdots, 0)$ (with the appropriate number of 1s), and let $x^4, \hdots$ be successive cyclic shifts. Then, it is easy to see that for all $i \in \{1, \hdots, k\}$, we have that $\sum_{j=1}^{4k+1} (-1)^{i-1}x^j_i$ is $2\ell_1 - 2\ell_2 < 0$ or $2\ell_1 - 2\ell_2 + 1 < 0$ (because we have $2k+1$ odd-indexed terms but $2k$ even-indexed terms). Thus, $\Malt_{4k+1}(x^1, \hdots, x^{4k+1}) = (0, \hdots, 0)$. If we do the same construction but swap $\ell_1$ and $\ell_2$, we would then have that $2\ell_1 - 2\ell_2, 2\ell_1 - 2\ell_2 + 1 > 0$, so $\Malt_{4k+1}(x^1, \hdots, x^{4k+1}) = (1, \hdots, 1)$. Thus, Fact 5 is shown.
  \end{proof}

  \end{claim}

  Now, let $a \in \{0, \hdots, k\}$ be such that $\Ham_k(\{a\}) \subseteq {\Malt}(P)$ but $\Ham_k(\{a\}) \not \subseteq Q$. Such an $a$ must exist by Proposition \ref{prop:sym}. Note that since $P \subseteq Q$, we must have that $\Ham_k(\{a\}) \not \subseteq P$. We divide the remaining analysis into two cases.

  \textbf{Case 1,} $a \in \{1, \hdots, k - 1\}$.

  Then, by Fact 3 of the claim, there must exists $\ell \in \{1, \hdots, k - 1\}$ such that $\Ham_k(\{\ell\}) \subseteq P$ (else, $P \subseteq {\Ham_k(\{0, k\})}$ and then $\Ham_k(\{a\}) \not\subseteq {\Malt}(P)$.) Let $P' = \Ham_k(\{\ell\})$ and let $Q' = \Ham_k(\{0, \hdots, k\}-\{a\})$. Since $P' \subseteq P \subseteq Q \subseteq Q'$, every polymorphism of $(P, Q)$ is a polymorphism of $(P', Q')$. Furthermore, by Fact 4, $\Ham_k(\{a\}) \subseteq {\Malt}(P')$ but $\Ham_k(\{a\}) \not\subseteq Q'$. Thus, $(P', Q')$ does not admit $\Malt_{L'}$ as a polymorphism for some $L'$.

  {We may assume without loss of generality that $\ell < a$. Else we may apply Claim~\ref{claim:flip} with $S = [k]$.} Let $k' = {a} + 1$. From Claim \ref{claim:shrink}, applied $k - k'$ times, we have that all of the idempotent polymorphisms of $(P', Q')$ are idempotent polymorphisms of
  \[
  P^{(2)} = \Ham_{k'}(\{\ell\}), Q^{(2)} = \Ham_{k'}(\{0, \hdots, k'\} - \{a\}).
  \]
  Likewise, applying Claim~\ref{claim:shrink2} ${\ell} - 1$ times, all of the folded polymorphisms of $P^{(2)}, Q^{(2)}$ are polymorphisms of
  \[
  P^{(3)} = \Ham_{k''}(\{\ell'\}), Q^{(3)} = \Ham_{k''}(\{0, \hdots, k''\} - \{a'\})
  \]
  where $k'' = {a - \ell} + 2 \ge 3$. {Note that} $\ell' = 1$ and $a' = k''-1${. Thus,} the idempotent, folded polymorphisms of $\Gamma$ are polymorphisms of
  \[
  P^{(4)} = \Ham_{k''}(\{1\}), Q^{(4)} = \Ham_{k''}(\{0, \hdots, k''-2, k''\}).
  \]

  \textbf{Case 2,} $a \in \{0, k\}$.

  Without loss of generality, we may assume that $a = 0$. Otherwise, we may replace $(P, Q)$ with $(\flip_{[k]}(P), \flip_{[k]}(Q))$, which preserves the folded, idempotent polymorphisms of $\Gamma$. Since $\Ham_{k}(\{0\}) \subseteq {\Malt}(P)$ but $\Ham_{k}(\{0\}) \not\subseteq P \subseteq Q$, we must be in Fact 5 of {Claim~\ref{claim:Maltsev-brute-force}}. That is, there must be $\ell_1, \ell_2 \in \{0, \hdots, k\}$ distinct and ${\{\ell_1, \ell_2\} \neq }\{0, k\}$ such that $\Ham_k(\{\ell_1, \ell_2\}) \subseteq P$. Like in Case 1, relax $(P, Q)$ to $P' = \Ham_k(\{\ell_1, \ell_2\})$ and $Q' = \Ham_k(\{1, \hdots, k\})$. Let $k' = \max(\ell_1, \ell_2)$, and apply Claim \ref{claim:shrink} $k - k'$ times to yield
  \[
  P^{(2)} = \Ham_{k'}(\{\min(\ell_1, \ell_2), k'\}), Q^{(2)} = \Ham_{k'}(\{1, \hdots, k'\}).
  \]
  Then, applying Claim~\ref{claim:flip} with $S = \{1, \hdots, k'\}$, we get that
  \[
  P^{(3)} =  \Ham_{k'}(\{0, b\}), Q^{(3)} = \Ham_{k'}(\{0, \hdots, k'-1\}), b = k' - \min(\ell_1, \ell_2) \in \{1, \hdots, k'-1\}, k'\ge 2
  \]
  has as polymorphisms the folded, idempotent polymorphisms of $\Gamma$, as desired.
  
  \textbf{End Cases.}
\end{proof}

\subsubsection{Majority-excluding relaxation}
\begin{lem}\label{lem:no-majority}
  Let $\Gamma$ be a symmetric, folded, idempotent family of promise relations such that $\Maj_L \not\in \Pol(\Gamma)$ for some odd positive integer $L$, then $\Gamma' = \{(P, Q)\}$ is a relaxation of $\Gamma$, in which either
  \begin{align*}
  P &= \Ham_k(\{(k+1)/2\}), &Q &= \Ham_k(\{0, 1, \hdots, k-1\}), \text{ ($k\ge 3$ odd), or}\\
  P &= \Ham_k(\{1, k\}), &Q &= \Ham_k(\{0, 1, \hdots, k\} - \{b\}), k \ge 3, b \in \{2, \hdots, k - 1\}.
  \end{align*}
\end{lem}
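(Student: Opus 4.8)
\textbf{Proof proposal for Lemma~\ref{lem:no-majority}.}
The plan is to mirror the structure of the proof of Lemma~\ref{lem:no-maltsev}. Since $\Maj_L \notin \poly(\Gamma)$ for some odd $L$, there is a pair $(P,Q) \in \Gamma$ with $\Maj_L \notin \poly(P,Q)$, i.e. writing $O_{\Maj}(P) = \bigcup_{L \text{ odd}} O_{\Maj_L}(P)$ we have $O_{\Maj}(P) \not\subseteq Q$. First I would establish the analogue of Claim~\ref{claim:Maltsev-brute-force}, computing $O_{\Maj}(\Ham_k(S))$ for all symmetric $P = \Ham_k(S)$ (this is the content referred to the appendix on brute-force/Majority properties). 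The key facts I expect: $O_{\Maj}(\Ham_k(\{0\})) = \Ham_k(\{0\})$ and symmetrically for $\{k\}$ and $\{0,k\}$; for a singleton $\{\ell\}$ with $0 < \ell < k$ the majority closure is $\Ham_k(S')$ where $S'$ is the set of weights obtainable as roundings of convex combinations of weight-$\ell$ strings, namely an interval of Hamming weights around $\ell$ whose exact endpoints depend on $\ell$ and $k$ (roughly $\{2\ell - k, \dots, 2\ell - 1\} \cap \{0,\dots,k\}$ when $\ell$ is in the appropriate range — this is precisely where the quantity $\{2\min S - k + 1, \dots, 2\max S - 1\}$ in the informal theorem comes from); and for two distinct weights the closure is typically everything except possibly $0$ or $k$ (or all of $\{0,1\}^k$), except in degenerate cases. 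I would state this as a claim and defer its verification.

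Next, using Proposition~\ref{prop:sym}, pick $a \in \{0,\dots,k\}$ with $\Ham_k(\{a\}) \subseteq O_{\Maj}(P)$ but $\Ham_k(\{a\}) \not\subseteq Q$; necessarily $\Ham_k(\{a\}) \not\subseteq P$ since $P \subseteq Q$. I would then split into the two cases $a \in \{1,\dots,k-1\}$ and $a \in \{0,k\}$, exactly as in Lemma~\ref{lem:no-maltsev}. In the first case the brute-force claim forces a single weight $\ell \in \{1,\dots,k-1\}$ (or possibly a two-element set) with $\Ham_k(\{\ell\}) \subseteq P$; relax to $P' = \Ham_k(\{\ell\})$, $Q' = \Ham_k(\{0,\dots,k\}\setminus\{a\})$, note this still excludes $\Maj_{L'}$ for some $L'$ by the claim, and then repeatedly apply Claim~\ref{claim:shrink}, Claim~\ref{claim:shrink2}, and Claim~\ref{claim:flip} to bring $(P',Q')$ into a canonical small-arity form. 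The target here, given that Majority's closure of a singleton is symmetric around the weight, should collapse to the balanced case $P = \Ham_k(\{(k+1)/2\})$, $Q = \Ham_k(\{0,\dots,k-1\})$ with $k$ odd (up to a global flip, which symmetrizes $\{0,\dots,k-1\}$ vs $\{1,\dots,k\}$). In the second case, WLOG $a = 0$ after a global $\flip_{[k]}$; then $\Ham_k(\{0\}) \subseteq O_{\Maj}(P)$ but not $\subseteq P$, which by the brute-force claim forces $P$ to contain $\Ham_k(\{\ell_1,\ell_2\})$ for an appropriate pair with $\min(\ell_1,\ell_2) = 1$ (since weight $0$ only enters the Majority closure from strings that already contain a weight-$1$ string), and after Claims~\ref{claim:shrink}/\ref{claim:flip} this collapses to $P = \Ham_k(\{1,k\})$, $Q = \Ham_k(\{0,\dots,k\}\setminus\{b\})$ with $b \in \{2,\dots,k-1\}$, $k \ge 3$.

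The main obstacle I anticipate is the combinatorial claim computing $O_{\Maj}(\Ham_k(S))$ precisely — in particular nailing the exact interval of achievable weights for a singleton $\{\ell\}$, since unlike parity (which is closed under XOR and hence combinatorially rigid) and unlike Alternating-Threshold (which is governed by simple sign counting), the coordinatewise majority of many weight-$\ell$ vectors can realize a subtle range of weights that depends on how the $1$'s can be aligned across the input vectors, and one must argue both that every weight in the claimed interval is attainable (by an explicit construction of aligned weight-$\ell$ strings) and that nothing outside it is (a counting/averaging argument on column sums). A secondary technical point is bookkeeping the arity-reduction steps: one must verify that after each application of Claim~\ref{claim:shrink} or Claim~\ref{claim:shrink2} the reduced pair still witnesses the failure of Majority (so that the final canonical pair genuinely lies outside $\poly$ for some odd $L'$), which should follow from Fact~4-type entries of the brute-force claim but needs to be tracked carefully. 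Everything else — the case split, the flips, the final identification of the two normal forms — is routine given the template already set by Lemma~\ref{lem:no-maltsev}.
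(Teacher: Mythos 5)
Your proposal correctly identifies the skeleton that the paper uses: compute $O_{\Maj}(\Ham_k(S))$ in a brute-force claim, pick an excluded weight $b$ with $\Ham_k(\{b\}) \subseteq O_{\Maj}(P) \setminus Q$, relax to a small canonical promise pair via Claims~\ref{claim:shrink}, \ref{claim:shrink2}, \ref{claim:flip}, and worry about tracking the failure of Majority through the reductions. However, there are two genuine gaps that would cause the argument to fail.

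First, the case split you import from Lemma~\ref{lem:no-maltsev} — namely $b \in \{1,\dots,k-1\}$ versus $b \in \{0,k\}$ — does not correspond to the two normal forms in the Majority lemma, and the claimed correspondence is false in both directions. Take $P = \Ham_3(\{2\})$, $Q = \Ham_3(\{1,2\})$: here $O_{\Maj}(P) = \Ham_3(\{2,3\})$, so the offending weight is $b = 3 = k \in \{0,k\}$, yet this instance canonicalizes to the \emph{balanced} form $\Ham_3(\{2\}), \Ham_3(\{0,1,2\})$, not to $\Ham_k(\{1,k\})$. Conversely, take $P = \Ham_5(\{2,5\})$, $Q = \Ham_5(\{0,1,2,4,5\})$: the offending weight is $b = 3 \in \{1,\dots,k-1\}$, yet this canonicalizes to the $\Ham_k(\{1,k\})$ form, not the balanced one. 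The paper's split is instead between $b > \max S$ and $\ell < b < \max S$ (where $\ell \in \{1,\dots,k-1\} \cap S$, after a flip to ensure $\ell < b$); that is, the dichotomy is whether the excluded weight lies \emph{above} or \emph{strictly inside} the range of weights in $P$, not whether it is an endpoint of $\{0,\dots,k\}$. The $\Ham_k(\{1,k\})$ normal form arises precisely because, when $b$ is strictly between $\ell$ and $\max S$, you must carry \emph{two} weights of $P$ through the reduction, and after shrinking to $k' = \max S$ and shifting down by $\ell - 1$ those two weights land on $1$ and $k''$.

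Second, your sketch of the brute-force claim is wrong in the two-weight case. You describe the Majority closure of $\Ham_k(\{\ell_1,\ell_2\})$ as "typically everything except possibly $0$ or $k$ (or all of $\{0,1\}^k$)," which is the \emph{Maltsev} behavior (Fact~5 of Claim~\ref{claim:Maltsev-brute-force}), not the Majority behavior. The actual formula, used crucially in the paper, is that for any symmetric $P = \Ham_k(S)$ with $S \setminus \{0,k\}$ nonempty, $O_{\Maj}(P) = \Ham_k(\{0,\dots,k\} \cap \{2\min S - k + 1, \dots, 2\max S - 1\})$ — an interval determined by $\min S$ and $\max S$, not by whether $|S| = 1$ or $2$. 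Relatedly, your claim in the $a = 0$ subcase that "weight $0$ only enters the Majority closure from strings that already contain a weight-$1$ string" is false: for $k = 5$ and $P = \Ham_5(\{2\})$, the five cyclic shifts of $(1,1,0,0,0)$ have coordinatewise majority $(0,0,0,0,0)$, so $0 \in O_{\Maj}(\Ham_5(\{2\}))$ without any weight-$1$ string in $P$. More generally, any $\ell \le (k-1)/2$ has $0$ in its Majority closure. Because both your case split and your forced-structure conclusions rest on these incorrect facts, the argument does not go through as written, even though the reduction toolkit you invoke is the right one.
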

\begin{proof}
  The proof proceeds in a similar manner to Lemma \ref{lem:no-maltsev}. Define ${\Maj}(P) = \bigcup_{L \in \mathbb N, \text{odd}}{\Maj_L}(P)$. We begin with the analogue of Claim~\ref{claim:Maltsev-brute-force} for the Majority operation.
  \begin{claim}\label{claim:Majority-brute-force}
    Consider $k \ge 1$. If $P \subseteq \Ham_k(\{0, k\})$, then ${\Maj}(P) = P$. Otherwise, if $P = \Ham_k(S)$ is symmetric but $S \setminus \{0, k\}$ is nonempty, then
    \[
    {\Maj}(P) = \Ham_k(\{0, \hdots, k\} \cap \{2(\min S) - k + 1, \hdots, 2(\max S) - 1\}).
    \]
  \end{claim}

      \begin{proof}
    As before, the case $S \subseteq \{0, k\}$ is easy.

    We first, show that ${\Maj}(P) \subseteq \Ham_k(\{0, \hdots, k\} \cap \{2\min S - k + 1, \hdots, 2\max S - 1\}).$ For any $b \in \{0, \hdots, k\}$, such that $\Ham_k(\{b\}) \subseteq {\Maj}(P)$, there is $L$ odd and $x^1, \hdots, x^L \in P$ such that $\Maj_L(x^1, \hdots, x^L)$ has Hamming weight $b$. Assume without loss of generality the coordinates equal to $1$ are the first $b$ ones. Thus, we have that for all $i \in \{1, \hdots, b\}$, $\sum_{j=1}^L x^j_i \ge (L+1)/2.$. Thus,
    \[
    \sum_{i=1}^{k}\sum_{j=1}^L x^j_i \ge b(L+1)/2.
    \]
    Thus, by the pigeonhole principle, there is some $x^j$ such that its Hamming weight is at least $b(L+1)/(2L) \le \max S$. Thus, $b \le 2L\max S/(L+1) < 2\max S$. Therefore, $b \le 2\max S -1$, as desired. Using the fact that $\sum_{j=1}^L x^j_i \le (L-1)/2$ for all $i \in \{b +1, \hdots, k\}$, we have that some $x^j$ has Hamming weight at most $b + (k-b)(L-1)/(2L) \ge \min S$ (we add $b$ since all of the first $b$ coordinates may be $1$s). Thus, $b \ge 2L\max S/(L - 1) - k(L-1)/(L+1) > 2\max S - k$. Therefore, $b \ge 2\max S + 1 - k$, as desired. Thus, ${\Maj}(P) \subseteq \Ham_k(\{0, \hdots, k\} \cap \{2\min S - k + 1, \hdots, 2\max S - 1\})$.

    Now, we show the reverse direction, that every $b \in \{0, \hdots, k\} \cap \{2\min S - k + 1, \hdots, 2\max S - 1\}$ can be obtained as a Hamming weight. Assume that there is $\ell \in S \cap \{1, \hdots, k-1\}$ (so $k \ge 2$). For ease of notation, let $s = \min S, t = \max S$, so $s \le \ell \le t$. To start, we show if $b \in \{0, \hdots, k\} \cap \{\ell,  \hdots, 2\max S - 1\}$, then $\Ham_k(\{b\}) \subseteq {\Maj}(P)$.

    First, if $b \ge \max S$, consider $L = 2b+1$. We now seek to pick $x^1, \hdots, x^L$ of Hamming weight $\max S$ such that $\Maj_L(x^1, \hdots, x^L)$ has Hamming weight $b$. Let $x^1 = (1, \hdots, 1, 0, \hdots, 0)$ of the suitable Hamming weight, and for all $j \ge 2$, let $x^j$ be the cyclic shift of $x^{j-1}$ in the first $b \ge \max S$ coordinates. For all $i \in \{b+1, \hdots, k\}$, $\sum_j x^j_i = 0$, so $\Maj_L(x^1_i, \hdots, x^L_i) = 0$. For all $i \in \{1, \hdots, b\}$, $\sum_j x^j_i = 2\max S + x^L_j \ge b + 1 + x^L_j > L/2$. Thus, $\Maj_L(x^1_j, \hdots, x^L_j) = 1$, so $\Maj_L(x^1, \hdots, x^L)$ has Hamming weight $b$.

    Otherwise, if $b \in \{\ell, \ell +1, \hdots, \max S - 1\}$, consider now $L = 2b-1 \ge 1$. Let $x^1 = \cdots = x^{b-1} = (1, \hdots, 1, 0, \hdots, 0)$, with Hamming weight $\max S$. Let $x^b = (1, \hdots, 1, 0, \hdots, 0)$ of Hamming weight $\ell$, and let $x^{b+1}, \hdots, x^{2b-1}$ be $x^b$ except that the first $b$ coordinates are cyclically shifted. If $i \in \{1, \hdots, b\}$ then $\sum_i x^j_i = b-1 + \ell \ge b > L/2$ since $\ell \ge 1$. If $i \in \{b+1, \hdots, k\}$, then $\sum_i x^j_i\le b-1 < L/2$. Thus, $\Maj_L(x^1, \hdots, x^L)$ has Hamming weight $b$, as desired. Thus, $\Ham_k(\{0, \hdots, k\} \cap \{\ell, \hdots, 2\max S - 1\}) \subseteq {\Maj}(P)$.
    
    By an analogous argument, we may show that $\Ham_k(\{0, \hdots, k\} \cap \{2\min S - k + 1, \hdots, \ell\}) \subseteq {\Maj}(P)$. A simple route to this is reversing the notions of $0$ and $1$ in our previous construction.
  \end{proof}

  Consider $b \in \{0, \hdots, k\}$ such that $\Ham_{k}(\{b\}) \subseteq {\Maj}(P)$ but $\Ham_k(\{b\}) \not\subseteq Q, P$. From Claim~\ref{claim:Majority-brute-force}, $P \setminus \Ham_k(\{0, k\})$ must be nonempty. Thus, there is $\ell \in \{1, \hdots, k-1\}$ such that $\Ham_k(\{\ell\}) \subseteq P$. We may assume without loss of generality that $\ell < b$ as $\ell \neq b$ and we can apply Claim~\ref{claim:flip} to $(P, Q)$ to get $(\flip_{[k]}(P),\flip_{[k]}(Q))$, which does not change the folded, idempotent polymorphisms.

  Let $S \subseteq \{0, \hdots, k\}$ be such that $P = \Ham_k(S)$. Since $\Ham_k(\{b\}) \subseteq {\Maj}(P)$ and $\ell < b$, we have by the claim that $\Ham_k(\{b\}) \subseteq {\Maj}(\Ham_k(\{\ell, \max S\}))$. Thus, we can relax to $P' = \Ham_k(\{\ell, \max S\})$ and $Q' = \Ham_k(\{0, \hdots, k\} - \{b\})$ while still preserving the idempotent, folded polymorphisms of $\Gamma$. We again diverge into two cases.

  \textbf{Case 1,} $b > \max S$.

  We may relax to
  \[
  P^{(2)} = \Ham_{k}(\{\max S\}), Q^{(2)} = \Ham_{k}(\{0, \hdots, k\} - \{b\}).
  \]
  Let $k' = b$, and apply Claim \ref{claim:shrink} $k - k'$ times to relax the folded, idempotent polymorphisms to
  \[
  P^{(3)} = \Ham_{k'}(\{\max S\}), Q^{(3)} = \Ham_{k'}(\{0, \hdots, k'-1\}).
  \]
  Recall that $\max S < k' = b \le 2\max S - 1$ {by Claim~\ref{claim:Majority-brute-force}}.  Thus, $k'' = 2(k' - \max S) + 1 \le k'$. Applying Claim \ref{claim:shrink2} $k' - k''$ times, we then get that the folded, idempotent polymorphisms of $\Gamma$ are also polymorphisms of
  \[
  P^{(4)} = \Ham_{k''}(\{(k''+1)/2\}), Q^{(4)} = \Ham_{k''}(\{0, \hdots, k''-1\}), k'' \ge 3.
  \]
  This establishes the first case of the lemma.

  \textbf{Case 2,} $\ell < b < \max S$.

  Letting $k' = \max S$ and applying Claim \ref{claim:shrink} $k - k'$ times, we get that the idempotent, folded polymorphisms of $\Gamma$ {are also polymorphisms of}
  \[
  P^{({2})} = \Ham_{k'}(\{\ell, k'\}), Q^{({2})} = \Ham_{k'}(\{0, \hdots, k'\} - \{b\}).
  \]
  Now, consider $k'' = k' - \ell + 1$, and apply Claim \ref{claim:shrink2} $k' - k''$ times to get that the idempotent, folded, polymorphisms of $\Gamma$ {are also polymorphisms of}
  \[
  P^{({3})} = \Ham_{k''}(\{1, k''\}), Q^{({3})} = \Ham_{k''}(\{0, \hdots, k''\} - \{b'\}), b' \in \{2, \hdots, k''-1\}.
  \]
  Note that $k'' \ge 3$; therefore the second case of the lemma is also fully established.
  
  \textbf{End Cases.}
\end{proof}

\begin{rem}
{We do not have a section specific to Parity-avoiding promise relations. It is convenient enough for us to reason about excluding Parity along with another familiy of polymorphisms (e.g., Lemma~\ref{lem:noMalt-noPar}).}
\end{rem}
\subsection{Idempotent case}

We now seek to establish that if a symmetric, idempotent, folded family of promise relations $\Gamma$ avoids $\Par_{L_1}, \Malt_{L_2}, \Maj_{L_3}$ as polymorphisms for some odd $L_1, L_2, L_3$, then the polymorphisms are $C$-fixing for some suitable constant $C(\Gamma)$. Note that this $C$ may depend on $L_1, L_2, L_3$, but if we pick $L_1, L_2, L_3$ to be minimal, then these also depend only on $\Gamma$. Our first step is to establish the following lemma in additive combinatorics.

\begin{lem}\label{lem:sum-game}
  Let $S_0, S_1 \subseteq \mathbb Z_{\ge 0}$ such that $0 \in S_0$ and $1 \in S_1$. Assume that there exists a positive integer $n$ such that for all $a \in S_0$ and $b_1, \hdots, b_{n} \in S_1$ (not necessarily distinct)
  \begin{align}
    b_1 + \cdots + b_n &\in S_0 \label{eq:sum-0}\\
    a + b_1 + \cdots + b_{n-1} &\in S_1. \label{eq:sum-1}
  \end{align}
  If $n$ is odd, then there is $A(n) \in \mathbb Z_{\ge 0}$ such that $A(n) \in S_0 \cap S_1$. Otherwise, if $n$ is even, there is $d(n) \in \mathbb Z_{\ge 0}$ such that $S_0$ contains all even integers at least $d(n)$ and $S_1$ contains all odd integers at least $d(n)$.
\end{lem}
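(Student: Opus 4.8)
The idea is to distill from \eqref{eq:sum-0}--\eqref{eq:sum-1} a few closure properties of $S_0$ and $S_1$, bootstrap them into arithmetic progressions of common difference $n-2$ of \emph{arbitrary} length, and then finish with a Sylvester--Frobenius (``Chicken McNugget'') argument; the parity of $n-2$ — equivalently of $n$ — is exactly what separates the two conclusions. The cases $n\le 2$ degenerate (the ``step'' $n-2$ is $\le 0$) and are handled directly: $n=1$ forces $S_0=S_1$, and $n=2$ gives $S_1+S_1\subseteq S_0$, $S_0+S_1\subseteq S_1$, from which $S_0\supseteq\{0,2,4,\dots\}$ and $S_1\supseteq\{1,3,5,\dots\}$ follow at once. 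So assume $n\ge 3$.

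\emph{Step 1 (elementary closures).} Padding with copies of $1\in S_1$ in \eqref{eq:sum-0} and \eqref{eq:sum-1} gives immediately $n\in S_0$, $n-1\in S_1$, $S_1+(n-1)\subseteq S_0$, and $S_0+(n-1)\subseteq S_1$; composing the last two shows $S_0$ and $S_1$ are each closed under adding $2(n-1)$. Replacing the $1$'s by copies of $n-1$ one at a time shows moreover $S_0\supseteq\{\,n+i(n-2):0\le i\le n\,\}$ and $S_1\supseteq\{\,(n-1)+i(n-2):0\le i\le n-1\,\}$, i.e.\ short progressions of step $n-2$ sitting in the residue classes $2$ and $1$ modulo $n-2$ (note $n\equiv 2$ and $n-1\equiv 1 \pmod{n-2}$).

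\emph{Step 2 (bootstrapping long progressions).} This is the heart of the proof. Suppose $S_1\supseteq\{\,1+t(n-2):0\le t\le T\,\}$ (Step 1 gives this with $T=n$, since $n-1+i(n-2)=1+(i+1)(n-2)$ and $1\in S_1$). Using these values as the allowed summands in \eqref{eq:sum-0}, an $n$-fold sum equals $n+\bigl(\sum_l t_l\bigr)(n-2)$ with each $t_l\in\{0,\dots,T\}$, so the coefficient of $(n-2)$ ranges over all of $\{0,1,\dots,nT\}$ and hence $S_0\supseteq\{\,n+i(n-2):0\le i\le nT\,\}$. Feeding these (with $a=0$) into \eqref{eq:sum-1} likewise yields $S_1\supseteq\{\,(n-1)+i(n-2):0\le i\le (n-1)T\,\}=\{\,1+t(n-2):1\le t\le (n-1)T+1\,\}$, which together with $1\in S_1$ upgrades $T$ to $T'=(n-1)T+1>T$. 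Iterating, $S_0$ contains a step-$(n-2)$ progression in the class $2\bmod(n-2)$ of arbitrary length, and symmetrically $S_1$ one in the class $1\bmod(n-2)$.

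\emph{Step 3 (Frobenius and the case split).} Now $S_0\supseteq\{\,n+i(n-2):0\le i\le M\,\}$ for every $M$ and is closed under $+2(n-1)$; the relevant gcd is $\gcd\bigl(n-2,2(n-1)\bigr)=\gcd(n-2,2)$, which is $1$ if $n$ is odd and $2$ if $n$ is even. A standard Frobenius-type count (choose $M$ at least the period $2(n-1)/\gcd$, then for a target $t$ solve $t=n+i(n-2)+2k(n-1)$ with $0\le i\le M$, $k\ge 0$) shows: if $n$ is odd, $S_0$ — and, by the same argument applied to the class-$1$ progression, $S_1$ — contains all sufficiently large integers, so $S_0\cap S_1$ is cofinite and any element of it serves as $A(n)$; if $n$ is even, every element $n+i(n-2)+2k(n-1)$ is even so $S_0$ contains all large even integers, while symmetrically $S_1$ contains all large odd integers, giving the threshold $d(n)$.

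\emph{Main obstacle.} The genuinely delicate part is Step 2: one must verify that an $n$- (resp.\ $(n-1)$-)fold sum of $\{1+t(n-2):0\le t\le T\}$ realizes \emph{every} intermediate coefficient of $(n-2)$, and that the recursion $T\mapsto(n-1)T+1$ indeed produces unbounded lengths, so that Step 3's ``for every $M$'' is justified. The padding manipulations of Step 1 and the Frobenius counting of Step 3 are routine; a minor recurring point is that \eqref{eq:sum-0} and \eqref{eq:sum-1} require \emph{exactly} $n$ and $n-1$ summands, which is precisely why the hypothesis $1\in S_1$ is exploited throughout as filler.
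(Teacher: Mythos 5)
Your proposal is correct, but it takes a noticeably more elaborate route than the paper's. The paper, after the same $1\in S_1$ padding trick, derives just three one-step implications (adding $n-1$ moves $S_0\to S_1$ and $S_1\to S_0$; adding $n-2$ preserves $S_1$), composes them to get that $S_0$ is closed under adding \emph{both} $2n-2$ and $3n-4$, and then applies Sylvester--Frobenius directly: $\gcd(2n-2,3n-4)=\gcd(2,n-2)$ is $1$ for odd $n$ and $2$ for even $n$, so $S_0$ (and by the parallel computation with $n-2,\,2n-2$, also $S_1$) contains all sufficiently large integers, resp.\ all sufficiently large integers of one parity. Your Step~1 produces the same elementary closures, but you do not extract the $+(3n-4)$ closure; instead your Step~2 bootstraps arbitrarily long arithmetic progressions of step $n-2$ via the recursion $T\mapsto(n-1)T+1$, and then your Step~3 combines a long AP of step $n-2$ with the closure under $+2(n-1)$ and runs a Frobenius count on $\gcd(n-2,2(n-1))$, which is the same $\gcd(2,n-2)$. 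So the parity analysis and the final numerical-semigroup conclusion agree; what differs is that you manufacture an unbounded AP where the paper gets away with a single seed point ($0\in S_0$, $1\in S_1$) and two coprime increments. Your Step~2, while correct, is strictly more work than is needed; the paper's observation that closure under $+(2n-2)$ and $+(3n-4)$ already suffices bypasses the bootstrapping entirely. The explicit handling of $n\le 2$ in your version is a minor cosmetic difference (the paper's argument degenerates gracefully there since $n-2\le 0$ is simply never used as a positive increment when $n=2$, and $n=1$ is handled separately in both).
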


\begin{proof}
  If $n = 1$, then by (\ref{eq:sum-1}), we have that $0 \in S_1$. Thus, we can set $A(1) = 0.$ Now assume $n \ge 2$. We can easily deduce the following facts
  \begin{align}
    \forall x \in S_0{,\ }&x + n - 1 \in S_1&\text{ ($a = x$ and $b_1, \hdots, b_{n-1} = 1$ in (\ref{eq:sum-1}))} \label{eq:0->1}\\
    \forall y \in S_1{,\ }&y + n - 1 \in S_0&\text{ ($b_1 = y$ and $b_2, \hdots, b_{n-1} = 1$ in (\ref{eq:sum-0}))} \label{eq:1->0}\\
    \forall y \in S_1{,\ }&y + n - 2 \in S_1&\text{ ($a = 0$, $b_1 = y$, and $b_2, \hdots, b_{n-1} = 1$ in (\ref{eq:sum-1}))} \label{eq:1->1}
  \end{align}
  In particular, we may deduce that
  \begin{align}
    \forall x \in S_0{,\ }&x + 2n-2 \in S_0&\text{ (\ref{eq:0->1} and \ref{eq:1->0})}\label{eq:0->0,1}\\
    \forall x \in S_0{,\ }&x + 3n-4 \in S_0&\text{ (\ref{eq:0->1}, \ref{eq:1->1}, and \ref{eq:1->0})}\label{eq:0->0,2}\\
    \forall y \in S_1{,\ }&y + n-2 \in S_{{1}}&\text{ (\ref{eq:1->1})} \nonumber\\
    \forall y \in S_1{,\ }&y + 2n-2 \in S_1&\text{ (\ref{eq:1->0} and \ref{eq:0->1})}\label{eq:1->1,2}
  \end{align}

  Note that if $n \ge 3$ is odd, then $\gcd(2n-2, 3n-4)= 1$. Therefore, by (\ref{eq:0->0,1}) and (\ref{eq:0->0,2}) and that $0 \in S_0$, we may deduce by the standard analysis to the Frobenius coin problem (e.g., \cite{brauer1962problem}) (colloquially known as the Chicken McNugget Theorem) that $S_0$ contains all sufficiently large positive integers. Likewise, since $\gcd(n-2, 2n-2) = 1$ and (\ref{eq:1->1}) and (\ref{eq:1->1,2}) hold, we have that $S_1$ contains all sufficiently large positive integers. Hence, there must exist $A(n) \in \mathbb N$ such that $A(n) \in S_0 \cap S_1$.

  If $n \ge 2$ is even, then $\gcd(2n-2, 3n-4) = \gcd (n-2, 2n-2) = 2$. Since $0 \in S_0$, and $1 \in S_1$, we may then deduce by the same theorem that $S_0$ will contain all sufficiently large even numbers and that $S_1$ will contain all sufficiently large odd numbers. Thus, we may select $d(n)$ accordingly.
\end{proof}

\begin{rem}
  Consider the modification that there is some positive integer $N$ such that $\max S_0, \max S_1 \le N$ with the stipulation that (\ref{eq:sum-0}) and (\ref{eq:sum-1}) only apply when the sums are {at} most $N$. The theorem still holds as long as{, when $n$ is odd, the original $A(n)$ is at most $N$, or when $n$ is even, the original $d(n)$ is at most $N$.} {These follow from the fact that $A(n)$ and $d(n)$ are} independent of ${S_1, S_2}$.
\end{rem}

With this established, we can now deduce significant structural properties of the polymorphisms of \Maltsev{} and Parity-avoiding families of promise relations. These {arguments have} connections to those in \cite{DBLP:journals/siamcomp/AustrinGH17}, but differ significantly in details.

\begin{lem}\label{lem:noMalt-noPar}
  Let $\Gamma$ be a symmetric, folded, idempotent family of promise relations such that $\Par_{L_1}, \Malt_{L_2} \not\in \Pol(\Gamma)$ for some odd positive integers $L_1, L_2$. Then, there exists $c(\Gamma) \in \mathbb N$ such that for all $L \in \mathbb N$ and for all $f : \{0, 1\}^L \to \{0, 1\} \in \Pol(\Gamma)$,
  \[
  |\{i \in \{1, \hdots, L\}\: : \: f(e_i) = 1\}| \le c(\Gamma).
  \]
\end{lem}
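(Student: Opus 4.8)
The strategy is to use the \Maltsev{}-excluding relaxation of Lemma~\ref{lem:no-maltsev} to convert the hypothesis $\Malt_{L_2}\notin\poly(\Gamma)$ into a concrete obstruction, and then let $\Par_{L_1}\notin\poly(\Gamma)$ kill the one escape route. First I would apply Lemma~\ref{lem:no-maltsev} (legitimate, since $\Gamma$ is symmetric, folded, idempotent and $\Malt_{L_2}\notin\poly(\Gamma)$) to obtain a relaxation $\Gamma'=\{(P,Q)\}$ in which $(P,Q)$ has one of the two explicit symmetric forms listed there. Since $\poly(\Gamma)\subseteq\poly(\Gamma')$, it suffices to bound $|\{i:f(e_i)=1\}|$ for every folded, idempotent $f\in\poly(P,Q)$, while still being free to invoke $\Par_{L_1}\notin\poly(\Gamma)$. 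Fix such an $f$ of arity $L$, put $I_1=\{i\in[L]:f(e_i)=1\}$ (note $f(e_\emptyset)=0$ by idempotence, and $e_{[L]\setminus A}=\overline{e_A}$, so $f(e_{[L]\setminus A})=\neg f(e_A)$ by foldedness), and assume $|I_1|$ is large (a threshold to be named). Define $S_0=\{\,|A|:A\subseteq I_1,\ f(e_A)=0\,\}$ and $S_1=\{\,|A|:A\subseteq I_1,\ f(e_A)=1\,\}$; then $0\in S_0$, $1\in S_1$, and $1\notin S_0$.

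The core step is to check that $S_0,S_1$ satisfy the hypotheses of Lemma~\ref{lem:sum-game} for the value $n$ dictated by $(P,Q)$. For the first form $P=\Ham_k(\{1\})$, $Q=\Ham_k(\{0,\dots,k-2,k\})$ the relevant value is $n=k-1$: given disjoint $B_1,\dots,B_{k-1}\subseteq I_1$ witnessing $b_1,\dots,b_{k-1}\in S_1$, feeding $f$ the partition $(B_1,\dots,B_{k-1},[L]\setminus\bigcup_j B_j)$ of $[L]$ into $k$ blocks and using the constraint ``number of ones $\neq k-1$'' forces $f(e_{[L]\setminus\bigcup_j B_j})=1$, i.e.\ $f(e_{\bigcup_j B_j})=0$, i.e.\ $b_1+\cdots+b_{k-1}\in S_0$; replacing one block by a witness of $a\in S_0$ gives $a+b_1+\cdots+b_{k-2}\in S_1$. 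The second form is handled analogously with weight-$b$ columns in place of partitions, again using symmetry of $P,Q$. Since $|I_1|$ is finite we invoke the finite-range version of Lemma~\ref{lem:sum-game} from the Remark following it, valid once $|I_1|\ge A(n)$. Now split on the parity of $n$. If $n$ is odd, then by the Schur-type facts in the proof of Lemma~\ref{lem:sum-game}, applied up to $N=|I_1|$, both $S_0$ and $S_1$ contain every integer in $[c_0,|I_1|]$ for a constant $c_0=c_0(n)$; in particular both contain $|I_1|$, but the only subset of $I_1$ of size $|I_1|$ is $I_1$ itself, so $f(e_{I_1})$ would equal $0$ and $1$ simultaneously --- contradiction, hence $|I_1|<\max(A(n),c_0)$. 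If $n$ is even, the lemma gives that $S_0$ contains all sufficiently large even integers up to $|I_1|$ and $S_1$ all sufficiently large odd ones; if both also contained $|I_1|$ we contradict as before, so one of them misses arbitrarily large values, and combining this with the symmetry of $(P,Q)$ one upgrades the statement to: there is $J\subseteq I_1$ with $|J|\ge L_1$ (and, after discarding at most one element, $|J|$ odd) such that $f(e_A)=\1[\,|A|\text{ odd}\,]$ for \emph{every} $A\subseteq J$. (Here is where the assumption that $|I_1|$ is large is finally cashed in.)

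Finally, in the even case this parity chunk is promoted to a genuine polymorphism of $\Gamma$. By foldedness and $|J|$ odd, $f(x)=\bigoplus_{i\in J}x_i$ whenever $x$ is constant on $[L]\setminus J$; hence the projection $f^\pi$, with $\pi$ the identity on $J$ and sending every coordinate outside $J$ to a fixed $j_0\in J$, satisfies $f^\pi=\Par_{|J|}$ as a function on the coordinates $J$. Projections stay in $\poly(\Gamma)$ (Definition~\ref{df:proj}), and $\Par_{L_1}$ is obtained from $\Par_{|J|}$ by identifying $(|J|-L_1)/2$ disjoint pairs of variables (again a projection), so $\Par_{L_1}\in\poly(\Gamma)$, contradicting the hypothesis. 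Taking $c(\Gamma)$ to be the maximum of the constants arising from the two forms of $(P,Q)$ and the two parities of $n$ --- a function of $k$, $b$, $L_1$, and $L_2$ only --- finishes the argument; small values of $k$ (where $(P,Q)$ may be too degenerate for the sum-game argument) are disposed of by direct inspection. The step I expect to be the main obstacle is the even-$n$ upgrade in the second paragraph: passing from the purely existential, per-size output of Lemma~\ref{lem:sum-game} to the universal statement that $f$ agrees with $\Par$ on \emph{all} indicators supported inside a large set $J$. This is precisely where the symmetry of $P$ and $Q$ is essential, and where the two explicit forms coming out of Lemma~\ref{lem:no-maltsev} likely require separate, somewhat delicate combinatorial arguments.
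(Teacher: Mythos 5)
Your high-level plan --- relax via Lemma~\ref{lem:no-maltsev} to an explicit symmetric $(P,Q)$, bucket the evaluations $f(e_T)$ for $T\subseteq I_1$ by $|T|$ into sets $S_0,S_1$, verify the hypotheses of Lemma~\ref{lem:sum-game}, and then split on the parity of $n$ with the even case killed by $\Par_{L_1}\notin\poly(\Gamma)$ --- is exactly the paper's strategy. But two steps as written do not go through. First, your $S_i=\{|A|:A\subseteq I_1,\ f(e_A)=i\}$ is defined \emph{existentially}: $j\in S_i$ means \emph{some} $j$-subset gives value $i$. Since $f$ need not be symmetric, $S_0$ and $S_1$ need not be disjoint, so $S_0\cap S_1\neq\emptyset$ (the odd-$n$ output of Lemma~\ref{lem:sum-game}) is no contradiction. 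Worse, the verification of the closure $b_1+\cdots+b_n\in S_0$ breaks: the argument needs disjoint $B_1,\dots,B_n\subseteq I_1$ with $|B_i|=b_i$ and $f(e_{B_i})=1$, but the existential witnesses of $b_i\in S_1$ cannot in general be chosen disjoint. The paper uses the universal definition $S_i=\{j:\text{for all }T\subseteq A\text{ of size }j,\ f(e_T)=i\}$, which is disjoint by construction and makes the partitioning argument immediate (any partition of any $T$ of size $\sum b_i$ into blocks of sizes $b_1,\dots,b_n$ has every block evaluate to $1$).

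Second, the even-$n$ branch. You flag the ``upgrade'' --- finding a large odd-sized $J\subseteq I_1$ with $f(e_T)=\1[|T|\text{ odd}]$ for \emph{all} $T\subseteq J$, then projecting to get $\Par_{L_1}\in\poly(\Gamma)$ --- as the likely sticking point, and it is. Even with the universal definition, Lemma~\ref{lem:sum-game} only controls $f(e_T)$ for $|T|\ge d(n)$; a projection $\pi$ that keeps $J$ and sends $[L]\setminus J$ to a fixed $j_0\in J$ would require knowing $f(e_T)$ for arbitrarily small $T\subseteq J$, and there is no way within the projection formalism to pin the off-$J$ coordinates to $0$ independently of the test input. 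The paper avoids needing this stronger statement entirely: it picks $(P',Q')\in\Gamma$ witnessing $\Par_{L_1}\notin\poly(P',Q')$, normalizes via Claim~\ref{claim:flip} so that $(0,\dots,0)\in P'$, then plugs into $f$ the columns $(x^1,\dots,x^{L_1})$ each repeated $2d(n)+1$ times and padded with $(0,\dots,0)$. Every row's support is then either empty (handled by idempotence) or has size a nonzero multiple of $2d(n)+1$, which is $\ge d(n)$ and hence in the controlled regime, so $f$ outputs the parity of the row and the output equals $\Par_{L_1}(x^1,\dots,x^{L_1})\notin Q'$ --- a contradiction with no ``upgrade'' required. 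I would replace the projection argument with this direct substitution; as sketched, the proposal's even-$n$ case has a real gap that is not obviously fillable by the symmetry of $(P,Q)$ alone.
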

\begin{proof}
  Fix $f \in \Pol(\Gamma)$ of arity $L$, and let $A = \{i \in \{1, \hdots, L\} \: : \: f(e_i) = 1\}$. Assume for {the} sake of contradiction that $|A|$ {grows} arbitrarily large. Define $S_0, S_1 \subseteq \{0, 1, \hdots, |A|\}$ as follows.
  \[
  S_i = \{j \::\: \text{for all $T \subseteq A$ of size $j$, $f(e_{T}) = i$}\},\ i \in \{0, 1\}.
  \]
  
  We seek to show that there exists $n(\Gamma)$ for which (\ref{eq:sum-0}) and (\ref{eq:sum-1}) hold, so that that we may invoke Lemma \ref{lem:sum-game} on $S_0$ and $S_1$. 
  {Either, we} have some $f$ such that $S_0 \cap S_1$ is nonempty, an immediate contradiction. {Or, $f$ will have structure similar to that of parity, which we can then use to contradict that} $\Par_{L_1} \not\in \Pol(\Gamma)${.}

  To achieve the first goal, which is to show that $n(\Gamma)$ exists which satisfies (\ref{eq:sum-0}) and (\ref{eq:sum-1}), we utilize Lemma \ref{lem:no-maltsev} to deduce a symmetric $(P, Q)$, independent of $L$, such that $f \in \Pol(P,Q)$. {Let $k$ be the arity of $P$ and $Q$.} The proof now proceeds into two cases.

  \textbf{Case 1,} $k \ge 3$, $P = \Ham_k(\{1\}), Q = \Ham_k(\{0, \hdots, k-2, k\})$.

  Let $n = k - 1$. For any $b_1, \hdots, b_{k-1} \in S_1$ such that $b_1 + \cdots + b_{k-1} \le |A|$, consider any $T \subseteq A$ of size $b_1 + \hdots + b_{k-1}$. Partition $T = T_1 \cup T_2 \cup \cdots \cup T_{k-1}$ such that $|T_i| = b_i$ for all $i$. {Let $T_k := [L] \setminus T$.} Consider the $k$-tuple {of $L$-tuples}
  \[(e_{T_1}, e_{T_2}, \hdots, e_{T_{k-1}}, e_{{T_k}}){.}\]
  For every $i \in \{1, \hdots, L\}$, {there is exactly one $j \in [k]$  with $i \in T_j$.} Thus, since $f \in \Pol(P, Q)$, we have that
  \[
  (f(e_{T_1}), f(e_{T_2}), \hdots, f(e_{T_{k-1}}), f(e_{{T_k}})) 
  \]
  has Hamming weight not equal to $k - 1$. Since $f(e_{T_i}) = 1$ for all $i {\in [k-1]}$, we must then have that $f(e_{{T_k}}) = 1$. Since $f$ is folded {and $T = [L] \setminus T_k$}, we can thus deduce that $f(e_{T}) = 0$, as desired. Since the choice of $T\subseteq A$ {is} arbitrary except for size, we have that $b_1 + \cdots + b_{k-1} \in S_0$, so (\ref{eq:sum-0}) holds.

  Now, consider any $a \in S_0$ and $b_1, \hdots, b_{k-2} \in S_1$ such that $a + b_1 + \cdots + b_{k-2} \le |A|$. Again, consider any $T \subseteq A$ of size $a + b_1 + \cdots + b_{k-2}$. Partition $T = T_0 \cup T_1 \cup \cdots \cup T_{k-2}$ such that $|T_0| = a$ and $|T_i| = b_i$ for all other $i$. {Let $T_{k-1} := [L] \setminus T$} Note again that the ${k}$-tuple {of $L$-tuples}
  \[(e_{T_0}, \hdots, e_{{T_{k-2}}}, e_{{T_{k-1}}})\]
  has for every $i \in \{1, \hdots, L\}$ has exactly one {$j \in \{0, 1, \hdots, k-1\}$ with $j \in T_i$}. Thus, we may again deduce that since $f \in \Pol(P, Q)$,
  \[(f(e_{T_0}), \hdots, f(e_{{T_{k-2}}}), f(e_{{T_{k-1}}}))\]
  has Hamming weight not equal to $k-1$. Since exactly $k-2$ of the first $k-1$ entries are equal to $1$, we must have $f(e_{{T_{k-1}}}) = 0$. Thus, $f(e_{T}) = 1$. {Since the choice of $T\subseteq A$ is arbitrary except for size,} $a + b_1 + \cdots + b_{k-2} \in S_1$, so (\ref{eq:sum-1}) also holds, as desired.

  \textbf{Case 2,} $k\ge 2{,}$ $P = \Ham_k(\{0, b\}), Q = \Ham_k(\{0, \hdots, k-1\})$, $b \in \{1, \hdots, k-1\}$.

  Let $n = k - b + 1$. For any $b_1, \hdots, b_{n} \in S_1$ such that $b_1 + \cdots + b_{n} \le |A|$. Consider any $T \subseteq A$ of size $b_1 + \hdots + b_{n}$. Partition $T = T_1 \cup T_2 \cup \cdots \cup T_{n}$ such that $|T_i| = b_i$ for all $i$. Consider the $k$-tuple
  \[(e_{T_1}, e_{T_2}, \hdots, e_{T_{n}}, e_{T}, \hdots, e_T),\]
  where $e_T$ appears $b-1 \ge 0$ times. We can verify that for each $i \in T$, there are exactly $b$ tuples {in this list} with $1$ in the $i$th coordinate. For any $i \not\in T$, there are $0$ tuples with $1$ in the $i$th coordinate. Thus,
  \[(f(e_{T_1}), \hdots, f(e_{T_n}), f(e_{T}), \hdots, f(e_{T})) \in Q.\]
  Since $f(e_{T_1}) = \cdots = f(e_{T_n}) = 1$, to avoid a contradiction, we must have that $f(e_T) = 0$, so $b_1 + \cdots + b_n \in S_0$.

  For any $a \in S_0$ and $b_1, \hdots, b_{n-1} \in S_1$ such that $a + b_1 + \cdots + b_{n-1} \le |A|$, consider $T \subseteq A$ of size $a + b_1 + \cdots + b_{n-1}$. Partition $T = T_0 \cup T_1 \cup \cdots \cup T_{n-1}$ such that $|T_0| = a$ and $|T_i| = b_i$ for all other $i$. It is easy to check that the following is a valid $k$-tuple
  \[
  (e_{T_1}, \hdots, e_{T_{n-1}}, \neg e_{T_0}, \hdots, \neg e_{T_0}, \neg e_T),
  \]
  where the are $k - n = b - 1$ copies of $\neg e_{T_0}$. Thus, since $f$ applies to the first $k - 1$ tuples is equal to $1$, $f(\neg e_T) = 0$, which implies by folding that $f(e_T) = 1$. Therefore, $a + b_1 + \cdots + b_{n-1} \in S_1$, as desired. 

  \textbf{End Cases}
  
  Thus, we have established that the conditions of Lemma \ref{lem:sum-game} hold for some $n(\Gamma)$. As stated at the beginning of the proof, we may apply the lemma to see that if $|A|$ grows arbitrarily large, then either $S_0 \cap S_1$ is nonempty for some $f$, which is an immediate contradiction, or $S_0$ contains all even integers between $d(n)$ and $|A|$ and $S_1$ contains all odd integers between $d(n)$ and $|A|$.
  To obtain a contradiction in this second case, {we show that $\Par_{L_1}$ is a minor of $f$.} {Observe that $f(e_{A}) = 1$, implying that $|A| \in S_1$, so $|A|$ is odd. Assume that $|A| \ge L_1(d(n)+1)$, then we can partition $A$ into $A_1 \cup \cdots \cup A_{L_1}$ such that for all $i \in [L_1]$, $|A_i|$ is odd and at least $d(n)$. Now consider, the map $\pi : [L] \to [L_1]$ defined as follows:}
\[
  {\pi(i) = \begin{cases}j & i \in A_j\\
      1 & i \not\in A.
    \end{cases}}
\]
{We claim that $f^{\pi} = \Par_{L_1}$. Since $f$ is folded, $f^{\pi}$ is also folded. Thus, it suffices to check that $f^{\pi}(x) = \Par_{L_1}$ when $x_1 = 0$. In that case,}
\begin{align*}
  {f^{\pi}(x)} &{= f(e_{\bigcup_{i : x_i = 1} A_i})}\\
                      &= {\begin{cases}0 & \sum_{i : x_i=1} |A_i| \text{ is even}\\
                          1 & \sum_{i : x_i = 1} |A_i| \text{ is odd} \end{cases}}\\
                      &{= \Par_{L_1}(x)\tag{{$|A_i|$ odd}}}.
\end{align*}
{The second line follows from the fact that each $A_i$ has size at least $d(n)$. Therefore, $\Par_{L_1} = f^{\pi} \in \Pol(\Gamma)$, a contradiction.}

Thus, $|A|$ is bounded, as desired.
\end{proof}

From this lemma, we can make an even stronger conclusion.

\begin{cor}\label{cor:noMalt-noPar}
  Let $\Gamma$ have the same properties as in Lemma \ref{lem:noMalt-noPar}. Let $f : \{0, 1\}^L \to \{0, 1\} \in \Pol(\Gamma)$ be any polymorphism and let $S_1, \hdots, S_{\ell}$ be disjoint subsets of $\{1, \hdots, L\}$ such that $f(e_{S_i}) = 1$ for all $i \in \{1, \hdots, \ell\}$. Then, $\ell \le c(\Gamma)$, where $c(\Gamma)$ is the same as in Lemma \ref{lem:noMalt-noPar}.
\end{cor}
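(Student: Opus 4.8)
The plan is to deduce this from Lemma~\ref{lem:noMalt-noPar} by a single projection. First note that we may assume every $S_i$ is nonempty: since $\Gamma$ is idempotent we have $f(0,\dots,0)=0$, so $f(e_{S_i})=1$ is impossible when $S_i=\emptyset$, and in that degenerate situation the hypothesis is vacuous and there is nothing to prove. So fix a representative $r_i\in S_i$ for each $i\in\{1,\dots,\ell\}$, and, writing $S_0:=\{1,\dots,L\}\setminus(S_1\cup\cdots\cup S_\ell)$, fix also a representative $r_0\in S_0$ in case $S_0\neq\emptyset$. Since the $S_i$ are pairwise disjoint, the chosen representatives are pairwise distinct.

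Next I would form the projection map $\pi:[L]\to[L]$ sending $j\mapsto r_i$ whenever $j\in S_i$ (for $i\in\{0,1,\dots,\ell\}$, the value $i=0$ occurring only if $S_0\neq\emptyset$), and set $g:=f^{\pi}$. By Definition~\ref{df:proj} we have $g\in\poly(\Gamma)$, and since $f$ is folded (as $\Gamma$ is folded) and projections of folded functions are folded, $g$ is again a folded weak polymorphism of $\Gamma$; its input coordinates are indexed by $\pi([L])=\{r_0,r_1,\dots,r_\ell\}$ (or just $\{r_1,\dots,r_\ell\}$ when $S_0=\emptyset$), which after an innocuous relabeling is a weak polymorphism in the format required by Lemma~\ref{lem:noMalt-noPar}. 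The one computation to check is that for each $i\in\{1,\dots,\ell\}$, if $x$ is the unit vector supported on the coordinate $r_i$, then $g(x)=f(y)$ where $y_j=x_{\pi(j)}=1$ exactly when $\pi(j)=r_i$, i.e. exactly when $j\in S_i$; hence $y=e_{S_i}$ and $g(x)=f(e_{S_i})=1$.

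Finally, applying Lemma~\ref{lem:noMalt-noPar} to $g$ shows that the number of coordinates $j$ with $g(e_j)=1$ is at most $c(\Gamma)$. The coordinates $r_1,\dots,r_\ell$ are $\ell$ distinct such coordinates, so $\ell\le c(\Gamma)$, with exactly the constant $c(\Gamma)$ furnished by the lemma.

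The only subtlety — and the reason I would not instead simply ``plug the constant $0$ into the $S_0$-block'' to get a weak polymorphism of arity $\ell$ directly — is that fixing an input coordinate of a weak polymorphism to $0$ need not preserve being a weak polymorphism (it would require $(0,\dots,0)\in P$ for the relevant $P$, which we are not given). Keeping the extra $S_0$-coordinate and using a genuine projection, which Definition~\ref{df:proj} guarantees stays in $\poly(\Gamma)$, sidesteps this; everything else is bookkeeping, so I do not expect a real obstacle here.
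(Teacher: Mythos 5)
Your proof is correct and follows essentially the same route as the paper's: collapse each block $S_i$ to a single representative coordinate via a projection $\pi$, observe that $g=f^\pi$ is again a weak polymorphism with $g(e_{r_i})=f(e_{S_i})=1$, and invoke Lemma~\ref{lem:noMalt-noPar} on $g$. The paper collapses to $\min(S_i)$ and leaves the complement untouched (identity), whereas you also collapse the complement to one coordinate, but this is an inessential difference.
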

\begin{proof}
Choose projection $\pi : \{1, \hdots, L\} \to \{1, \hdots, L\}$ such that for all $i \in \{1, \hdots, \ell\}$ and all $j \in S_i$, $\pi(j) = \min(S_i)$ and otherwise is the identity map. Consider $g = f ^ \pi$ which must also be a polymorphism of $\Gamma$. It is easy then to see that for all $i \in \{1, \hdots, \ell\}$, $g(e_{\min(S_i)}) = f(e_{S_i})$. Thus, $\ell \le c(\Gamma)$ by applying Lemma \ref{lem:noMalt-noPar} to $g$.
\end{proof}

\begin{lem}\label{lem:noMaj}
  Let $\Gamma$ be a symmetric, folded, idempotent family of promise relations such that\\$\Par_{L_1}, \Malt_{L_2}, \Maj_{L_3} \not\in \Pol(\Gamma)$ for some odd positive integers $L_1, L_2, L_3$. Then, there exists $C(\Gamma) \in \mathbb N$ such that for all $f \in \Pol(\Gamma)$, $f$ is $C(\Gamma)$-fixing.
\end{lem}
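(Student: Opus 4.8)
The plan is to recast ``$C$-fixing'' as a hitting-set statement and reduce it, via Lemma~\ref{lem:no-majority}, to a combinatorial property of the two canonical $\Maj$-excluding relaxations. Fix a folded, idempotent $f:\{0,1\}^L\to\{0,1\}$ in $\poly(\Gamma)$ and identify it with the set function $g(T)=f(e_T)$, $T\subseteq[L]$; let $\mathcal G=\{T: g(T)=1\}$. Idempotence gives $\emptyset\notin\mathcal G$, $[L]\in\mathcal G$, and folding gives $T\in\mathcal G\iff[L]\setminus T\notin\mathcal G$. Then $f$ is $C$-fixing iff $\mathcal G$ has a transversal of size $\le C$. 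Since every member of $\mathcal G$ contains an inclusion-minimal one, it suffices to prove (i) every minimal member of $\mathcal G$ has size at most a constant $\rho(\Gamma)$, and (ii) any family of pairwise disjoint minimal members has size at most a constant: for then a maximal pairwise-disjoint family $S_1,\dots,S_p$ of minimal members is already a transversal of $\mathcal G$ (any $T\in\mathcal G$ contains a minimal member, which meets some $S_i$ by maximality), whence $|\bigcup_i S_i|\le p\cdot\rho(\Gamma)$. Part (ii) is free: since $\Par_{L_1},\Malt_{L_2}\notin\poly(\Gamma)$, Corollary~\ref{cor:noMalt-noPar} applies, and $f(e_{S_i})=g(S_i)=1$ for a minimal member $S_i$, so $p\le c(\Gamma)$.

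For part (i), I would use $\Maj_{L_3}\notin\poly(\Gamma)$ together with Lemma~\ref{lem:no-majority} to get a relaxation $\{(P,Q)\}$ of $\Gamma$, so that $f\in\poly(P,Q)$, with $(P,Q)$ of one of the two listed forms. Unwinding the weak-polymorphism condition for $f$ and complementing every witnessing tuple (legal since $f$ is folded) yields, in the first form $P=\Ham_k(\{(k{+}1)/2\})$, $Q=\Ham_k(\{0,\dots,k{-}1\})$ (write $m=(k{-}1)/2$), the statement: for any sets $T_1,\dots,T_k\subseteq[L]$ that cover each index exactly $m$ times, at least one $g(T_j)=1$; and in the second form $P=\Ham_k(\{1,k\})$, $Q=\Ham_k(\{0,\dots,k\}\setminus\{b\})$, the statement: for every $U\subseteq[L]$ and every partition $R_1\sqcup\dots\sqcup R_k=[L]\setminus U$, one has $|\{j: g(U\cup R_j)=1\}|\ne b$.

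The argument for (i) would then run: let $T$ be a minimal member of $\mathcal G$, so $g(T)=1$ but $g(T')=0$ for all $T'\subsetneq T$, hence also $g([L]\setminus T')=1$ for all such $T'$ by folding. In the first form, suppose $|T|\ge m{+}1$; partition $T=B_1\sqcup\dots\sqcup B_{m+1}$ into nonempty parts and consider the $k$-tuple consisting of $T\setminus B_1,\dots,T\setminus B_{m+1}$ together with $m$ copies of $[L]\setminus T$. Each index lies in exactly $m$ of these sets, yet $g$ is $0$ on every one of them (each $T\setminus B_j$ is a proper subset of $T$, and $g([L]\setminus T)=\neg g(T)=0$), contradicting the displayed property; so $|T|\le m$. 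In the second form, suppose $|T|\ge b$; take $U=[L]\setminus T$ and partition $[L]\setminus U=T$ into $R_1,\dots,R_k$ with exactly $b$ of the parts nonempty. Then $g(U\cup R_j)=g([L]\setminus(T\setminus R_j))=1$ when $R_j\ne\emptyset$ and $g(U\cup R_j)=g([L]\setminus T)=0$ otherwise, so $|\{j:g(U\cup R_j)=1\}|=b$, a contradiction; so $|T|\le b{-}1$. In both cases the minimal members of $\mathcal G$ have size at most $\rho(\Gamma):=\max\{(k{-}1)/2,\,b{-}1\}$, which proves (i); thus $C(\Gamma)=c(\Gamma)\cdot\rho(\Gamma)$ works. (The degenerate case $T=[L]$ causes no trouble: in the first form the construction still applies with $[L]\setminus T=\emptyset$, and in the second form folding already forces a singleton into $\mathcal G$, contradicting minimality, unless $L\le1$, when the lemma is trivial.)

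The step I expect to be the crux is part (i): one must produce, for each of the two canonical shapes, an explicit tuple of ``$g$-null'' sets that covers uniformly (first form) or hits exactly $b$ of the sets $U\cup R_j$ (second form). Making the counting come out to be exactly $m$ or exactly $b$ is what pins down the shapes output by Lemma~\ref{lem:no-majority}, and the folding bookkeeping—switching between $T$ and its complement to move between $g$-value $0$ and $g$-value $1$—has to be tracked carefully. Everything else (the translation to transversals and the invocation of Corollary~\ref{cor:noMalt-noPar}) is routine.
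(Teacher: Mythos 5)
Your proof is correct and, while it rests on the same two ingredients as the paper (Lemma~\ref{lem:no-majority} to get the canonical $\Maj$-excluding shapes, and Corollary~\ref{cor:noMalt-noPar} to bound disjoint $1$-sets), it reorganizes the argument in a way that is genuinely cleaner and more uniform. The paper works with the ``somewhere-increasing'' coordinate set $B$, observes $f$ is $|B|$-fixing, and then bounds $|B|$ by two rather different arguments: in the first shape it shows that \emph{every} size-$a$ subset of $B$ lies in $\mathcal G$, while in the second it bounds the size of minimal members of $\mathcal G$ and runs a greedy disjointness argument. Your transversal reformulation --- $f$ is $C$-fixing iff $\mathcal G$ has a hitting set of size $\le C$ --- lets you prove the same bound on minimal members of $\mathcal G$ in \emph{both} shapes via essentially the same partition-of-$T$ construction (flipping by folding when needed), and then the maximal-disjoint-family argument delivers the hitting set directly, sidestepping the paper's slightly delicate step of extracting a \emph{minimal} set through a prescribed coordinate $i\in B$. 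Your Case~2 partition is essentially the paper's, your Case~1 construction ($m{+}1$ sets $T\setminus B_j$ plus $m$ copies of $[L]\setminus T$, giving uniform coverage $m$) is a nice alternative to the paper's witness-pair construction. The degenerate-case remarks are fine (and actually redundant, since for $L\ge 2$ folding already rules out $[L]$ being minimal). In short: same skeleton, cleaner joints, a different but correct Case~1.
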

\begin{proof}
  Fix $f \in \Pol(\Gamma)$ of arity $f$. Pick a promise relation $(P, Q)$ of arity $k$ as guaranteed by Lemma \ref{lem:no-majority} such that $f \in \Pol(P, Q)$ for all $f \in \Pol(\Gamma)$.
  
  \textbf{Case 1,} $k \ge 3$ odd, $P = \Ham_k(\{(k+1)/2\}), Q = \Ham_k(\{0, \hdots, k - 1\})$.
  
  This case builds on techniques from Lemmas 4.2 and 5.4 of \cite{DBLP:journals/siamcomp/AustrinGH17}.

  Let $B \subseteq \{1, \hdots, L\}$ be the set of coordinates for which $f$ is \textit{somewhere-increasing}. That is, $B = \{i \in \{1, \hdots, L\} \: : \: \exists S \subseteq \{1, \hdots, L\}, f(e_{S\setminus i}) = 0, f(e_{S}) = 1\}$.

  {We claim that $f(e_S) = 1$, for all $S \supseteq B$.} {Otherwise if $f(e_S) = 0$, consider a sequence of subsets $S = S_0 \subset S_1 \subset \cdots \subset S_\ell = [L]$ such that $S_{i+1} \setminus S_i$ is always a singleton. Since $f(e_{[L]}) = 1$, there exists at least one $i$ such that $f(S_{i}) = 0$ and $f(S_{i+1}) = 1$, but the unique $j \in S_{i+1} \setminus S_i$ must be an element of $B$, contradiction.}

  Therefore, $f$ is $|B|$-fixing. Thus, if we deduce that $|B|$ is bounded by some $C$ for all $f$, then we know that all polymorphisms of $\Gamma$ are $C$-fixing.
  
  Let $a = (k-1)/2 \ge 1$. If $|B| < a$ then we are done. {Otherwise, w}e claim that for every subset $S \subseteq B$ of size $a$, we have that $f(e_S) = 1$. Let $S = \{i_1, \hdots, i_a\}$, and let $x^1, y^1, \hdots, x^a, y^a$ be witnesses for $i_1, \hdots, i_a \in B$. That is, $f(x^j) = 0, f(y^j) = 1$, $x^j_{i_j} = 0$, $y^j_{i_j} = 1$, and $x^j$ and $y^j$ are identical in all other coordinates. Consider now the $k$ tuples
  \[
  (\neg x^1, y^1, \hdots, \neg x^a, y^a, \neg e_S)
  \]
  It is easy to verify that in each coordinate $i \in \{1, \hdots, L\}$, exactly $a +1 = (k+1)/2$ of these tuples have their $i$th coordinate equal to $1$. {Thus, since} $f \in \Pol(P, Q)$, we have that not all of $f(\neg x^1), f(y^1), \hdots, f(\neq x^a), f(y^a), f(\neg e_S)$ are equal to $1$. Thus, since the first $2a$ are equal to $1$, we have that $f(\neg e_S) = 0$, so $f(e_S) = 1$, as desired.

  It is easy now to see that $|B| < (c(\Gamma)+1)a$, else we may construct disjoint $S_1, \hdots, S_{c(\Gamma)+1} \subseteq B$ of size equal to $a$, so $f(e_{S_1}), \hdots, f(e_{S_{c(\Gamma)+1}})$, violating Corollary \ref{cor:noMalt-noPar}. Thus, $|B|$ is bounded, so all $f$ are $C$-fixing for some $C(\Gamma)$ independent of $f$. 

  \textbf{Case 2,} $k \ge 3$, $P = \Ham_k(\{1, k\}), Q = \Ham_k(\{0, \hdots, k\} \setminus \{b\})$, $b \in \{2, \hdots, k-1\}$.

  Call $S \subseteq \{1, \hdots, L\}$ minimal if $f(e_S) = 1$ but $f(e_{S'}) = 0$ for all $S' \subset S$. We claim that if $S$ is minimal, then $|S| <  b$. Assume for contradiction that $S$ is minimal but $|S| \ge b$. Thus, we may find nonempty disjoint $S_1 \cup \cdots \cup S_{b} = S$. For each $i$, note that $f(e_{S\setminus S_i}) = 0$, so $f(e_{([L]\setminus S)\cup S_i}) = 1$ by folding. Furthermore, $f(e_{[L]\setminus S}) = 0$. Thus, consider the $k$-tuple
  \[
  (e_{([L]\setminus S)\cup S_1}, \hdots, e_{([L]\setminus S)\cup S_b}, e_{[L]\setminus S}, \hdots, e_{[L]\setminus S}). 
  \]
  where $e_{[L]\setminus S}$ appears $k - b$ times. It is easy to see that if $i \in S$, then the $i$th coordinate is equal to $1$ in exactly one element of this $k$-tuple, otherwise the $i$th coordinate is equal to $1$ in every $k$-tuple. Thus, the $i$th coordinates belong to $P$ for all $i \in [L]$. Since, $f \in \Pol(P, Q)$, we then have that
  \[
  (f(e_{([L]\setminus S)\cup S_1}), \hdots, f(e_{([L]\setminus S)\cup S_b}), f(e_{[L]\setminus S}), \hdots, f(e_{[L]\setminus S})) \in Q.
  \]
  But, the $k$-tuple has Hamming weight $b$, a contradiction. Thus, every minimal set has size strictly less than $b$. {Construct a sequence of subsets of $[L]$ as follows. Let $T_1$ be a minimum sized subset of $f$ such that $f(e_{T_1}) = 1$. Let $T_2$ be a minimum sized subset, if it exists of $f$ disjoint from $T_1$ such that $f(e_{T_2}) = 1$, and so forth. Note that each of this subsets is minimal. If we can construct $T_{c(|\Gamma|)+1}$, then $T_1, \hdots, T_{c(|Gamma|)+1}$ violate Corollary~\ref{cor:noMalt-noPar}. Otherwise, there is some $i < c(|\Gamma|)+1$ such that setting the coordinates of $T_1\cup \cdots \cup T_{i}$ to $0$ fixes $f$, and so $f$ is $c(|\Gamma|)(b-1)$-fixing. Either way we are done.}\footnote{{This proof idea originated in a subsequent paper~\cite{FicakKozikOlsakEtAl2019}}.}
\end{proof}

\subsection{Non-idempotent case}

Now, assume that our folded, symmetric family $\Gamma$ of promise relations has non-idempotent polymorphisms. If any polymorphism $f$ has the property that $f(0,\hdots, 0) = f(1,\hdots, 1),$ then folding is violated. Thus, $\Gamma$ is non-degenerate, so we may apply Lemma \ref{lem:idempotent} to yield that every polymorphism of $\Gamma$ is a polymorphism of the idempotent family $\Gamma'$ or it is the negation of a polymorphism of the idempotent family $\Gamma''$. Thus, if $\Gamma$ avoids Parity, Majority, \Maltsev{}, as well as their antis, then $\Gamma'$ and $\Gamma''$ both avoid Parity, Majority, and \Maltsev{}. By the previous section, the polymorphisms of $\Gamma'$ and $\Gamma''$ are $C$-fixing for some sufficiently large $C$. Since negating a folded polymorphism does not change that it is $C$-fixing, we have shown the following.

\begin{thm}\label{thm:junta}
Let $\Gamma$ be a finite, folded, symmetric family of promise relations. Assume there exist odd $L_1, \hdots, L_6$ such that $\Par_{L_1}$, $\Malt_{L_2}$, $\Maj_{L_3}$, $\overline{\Par}_{L_4}$, $\overline{\Malt}_{L_5},$  and $\overline{\Maj}_{L_6}$ are not polymorphisms of $\Gamma$. Then there exists $C(\Gamma)$ such that all polymorphisms of $\Gamma$ are $C$-fixing.
\end{thm}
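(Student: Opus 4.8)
The plan is to bootstrap from the idempotent statement of Lemma~\ref{lem:noMaj} using the reduction machinery of Section~\ref{subsec:idempotence}. The first step is to record that foldedness forces non-degeneracy: any $f \in \poly(\Gamma)$ satisfies $f(0,\dots,0) = \neg f(1,\dots,1) \neq f(1,\dots,1)$, so $\Zero_1, \One_1 \notin \poly(\Gamma)$. I then split on whether $\Gamma$ has a non-idempotent weak polymorphism. If it does not, then $\Gamma$ is already a symmetric, folded, idempotent family with $\Par_{L_1}, \Malt_{L_2}, \Maj_{L_3} \notin \poly(\Gamma)$, and Lemma~\ref{lem:noMaj} applied directly to $\Gamma$ yields the required constant $C(\Gamma)$.

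In the remaining case $\Gamma$ is non-degenerate with a non-idempotent weak polymorphism, so $\neg\Gamma$ is defined and Lemma~\ref{lem:idempotent} gives $\poly(\Gamma) = \poly(\Gamma') \cup \neg\poly(\Gamma'')$ with $\Gamma' = \Gamma \cup \{\setzero, \setone\}$ and $\Gamma'' = (\neg\Gamma) \cup \{\setzero, \setone\}$. I would check that both $\Gamma'$ and $\Gamma''$ satisfy the hypotheses of Lemma~\ref{lem:noMaj}. They are idempotent by construction, and symmetric, since $\setzero, \setone$ are symmetric and negating a symmetric relation $\Ham_k(S)$ gives the symmetric relation $\Ham_k(\{k-s : s \in S\})$. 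For foldedness: $\poly(\Gamma') \subseteq \poly(\Gamma)$, all of whose members are folded, while Lemma~\ref{lem:idempotent}(1) gives $\poly(\Gamma'') \subseteq \neg\poly(\Gamma)$, so each member of $\poly(\Gamma'')$ is a negation of a folded function and hence folded. The exclusion hypotheses transfer as follows: each of $\Par_{L_1}, \Malt_{L_2}, \Maj_{L_3}$ is idempotent, hence lies in $\poly(\setzero) \cap \poly(\setone)$, so membership in $\poly(\Gamma')$ is equivalent to membership in $\poly(\Gamma)$, forcing $\Par_{L_1}, \Malt_{L_2}, \Maj_{L_3} \notin \poly(\Gamma')$; and if one of $\Par_{L_4}, \Malt_{L_5}, \Maj_{L_6}$ lay in $\poly(\Gamma'') \subseteq \neg\poly(\Gamma)$, then the corresponding $\overline{\Par}_{L_4}$, $\overline{\Malt}_{L_5}$, or $\overline{\Maj}_{L_6}$ would lie in $\poly(\Gamma)$, contradicting the hypothesis, so $\Gamma''$ also avoids a Parity, an \Maltsev{}, and a Majority operation. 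This is exactly why all six moduli $L_1,\dots,L_6$ are needed in the statement — three to kill the idempotent families in $\Gamma'$ and three to kill them in $\Gamma''$ through the anti-polymorphism correspondence.

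With both hypotheses verified, Lemma~\ref{lem:noMaj} gives constants $C' = C(\Gamma')$ and $C'' = C(\Gamma'')$ such that every weak polymorphism of $\Gamma'$ is $C'$-fixing and every weak polymorphism of $\Gamma''$ is $C''$-fixing. Since negating a folded function preserves being $C$-fixing (the witnessing set $S$ with $f(x) = f(0,\dots,0)$ whenever $x_i = 0$ for all $i \in S$ is unchanged, only the output value flips), every element of $\neg\poly(\Gamma'')$ is also $C''$-fixing, and taking $C(\Gamma) = \max(C', C'')$ makes every $f \in \poly(\Gamma) = \poly(\Gamma') \cup \neg\poly(\Gamma'')$ be $C(\Gamma)$-fixing. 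This theorem is essentially an assembly of Lemmas~\ref{lem:idempotent} and~\ref{lem:noMaj}, so there is no genuine obstacle here; the only care needed is the bookkeeping of symmetry and foldedness for $\Gamma'' = (\neg\Gamma) \cup \{\setzero, \setone\}$ and the direction of the various set inclusions. The real work is upstream, in Lemma~\ref{lem:noMaj} and the combinatorial Lemmas~\ref{lem:no-maltsev}, \ref{lem:no-majority}, and~\ref{lem:noMalt-noPar} that feed it.
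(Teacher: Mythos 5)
Your proof is correct and follows the same route as the paper's: bootstrap from the idempotent Lemma~\ref{lem:noMaj} via the decomposition $\poly(\Gamma) = \poly(\Gamma') \cup \neg\poly(\Gamma'')$ from Lemma~\ref{lem:idempotent}, after observing that foldedness forces non-degeneracy and that negation preserves the $C$-fixing property. The only difference is that you spell out the verifications (symmetry and foldedness of $\Gamma''$, and the transfer of the six exclusion hypotheses) that the paper leaves to the reader, which is a useful expansion but not a new argument.
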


\section{Hardness Arguments}\label{sec:hardness}

Now that we have a rather strong classification of polymorphisms
for folded, symmetric PCSPs, we are in a good position to interface it
with a reduction from Label Cover to actually demonstrate
$\mathsf{NP}$-hardness.

\begin{df}\label{df:lc}
An instance of Label Cover is based on a bipartite graph $G = (U, V, E)$. Each edge $e = (u, v)$ is associated with a projection $\pi_{e} : [R] \rightarrow [L]$ for some positive integers $R$ and $L$. A labeling is a pair of maps $\sigma_V : V \to [R]$, $\sigma_U : U \to [L]$. A labeling \textit{satisfies} the instance if for all $(u, v) \in E$, $\pi_{(u, v)}(\sigma_V(v)) = \sigma_U(u)$.
\end{df}

The PCP theorem combined with parallel repetition gives the following well-known hardness of Label Cover which is the starting point for most inapproximability results.

\begin{prop}
\label{prop:lc}
For any $\eta >  0$, given an instance of Label Cover it is $\mathsf{NP}$-hard to distinguish between the two cases:
\begin{itemize}
\item Completeness: There exists a labeling $\sigma_V, \sigma_U$ that satisfies every edge.
\item Soundness: No labeling $\sigma_V, \sigma_U$ can satisfy a fraction $\eta$ of the edges.
\end{itemize}
\end{prop}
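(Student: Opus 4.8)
The plan is to obtain Proposition~\ref{prop:lc} in the standard way, from the PCP theorem together with the parallel repetition theorem; since $\eta$ is a fixed constant, the whole construction runs in polynomial time.

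First I would invoke the PCP theorem in its ``hardness of gap-$3$SAT'' formulation: there is an absolute constant $\epsilon_0>0$ such that, given a $3$-CNF formula $\varphi$, it is $\mathsf{NP}$-hard to distinguish the case that $\varphi$ is satisfiable from the case that every assignment satisfies at most a $(1-\epsilon_0)$ fraction of its clauses. Next I would turn such a $\varphi$ into a Label Cover instance with constant soundness via the usual clause--variable game: let $V$ be the set of clauses with label set $[R]$, $R=7$, encoding the satisfying partial assignments of a width-$3$ clause; let $U$ be the set of variables with label set $[L]$, $L=2$; put an edge $(u,v)\in E$ whenever the variable $u$ occurs in the clause $v$; and let $\pi_{(u,v)}$ send a satisfying assignment of $v$ to the value it induces on $u$. (One can make the bipartite graph biregular by duplicating edges; this is cosmetic.) If $\varphi$ is satisfiable, then a satisfying assignment together with its restrictions to the clauses satisfies every edge, giving completeness. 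For soundness, if some labeling satisfied more than a $1-\epsilon_0/3$ fraction of edges, then fewer than $\epsilon_0|V|$ clauses would have an unsatisfied incident edge, so the variable-labeling $\sigma_U$ by itself would satisfy more than a $(1-\epsilon_0)$ fraction of the clauses --- contradicting the gap-$3$SAT soundness. Hence we get perfect completeness and soundness at most some $s_0=1-\epsilon_0/3<1$, with $R,L$ absolute constants.

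Finally I would amplify the gap by $t$-fold parallel repetition: the new instance has vertex sets $U^t$ and $V^t$, one edge for each $t$-tuple of original edges, label sets $[L]^t$ and $[R]^t$, and the coordinatewise product of the original projections. Perfect completeness is clearly preserved (the product of satisfying labelings works). By Raz's parallel repetition theorem (with the sharper forms due to Holenstein and Rao), because the base instance has soundness $s_0<1$ and an alphabet of constant size, the soundness of the $t$-fold repetition is at most $c^t$ for some constant $c=c(s_0)<1$. Choosing $t=\lceil \log_{1/c}(1/\eta)\rceil$ --- a constant, since $\eta$ is fixed --- drives the soundness below $\eta$. The size of the repeated instance is polynomial in that of the original (the exponent $O(t)$ is constant), so composing the three reductions yields a polynomial-time reduction from $3$SAT that establishes the claimed hardness.

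The one genuinely nontrivial ingredient is the parallel repetition theorem itself: the value of the repeated game does not obviously decrease (naive ``independence'' reasoning fails), and Raz's proof is substantial. I would use it as a black box; every other step above is routine bookkeeping. If one wished to avoid parallel repetition altogether, an alternative route is to start from a PCP with subconstant soundness obtained by composition/derandomization techniques, but the parallel-repetition route is the cleanest and is what I would present.
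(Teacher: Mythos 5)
Your proposal is correct and takes exactly the route the paper alludes to: the paper does not give a proof of Proposition~\ref{prop:lc}, it simply states it as the well-known consequence of the PCP theorem combined with parallel repetition, and your write-up (gap-$3$SAT from the PCP theorem, clause--variable Label Cover with constant soundness, then Raz's parallel repetition to drive the soundness below $\eta$) is the standard argument that those citations are shorthand for.
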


\begin{thm}\label{thm:hardness}
Let $\Gamma$ be a folded, finite family of promise relations. Suppose that there exists a universal constant $C=C(\Gamma) < \infty$ such that every polymorphism of $\Gamma$ is $C(\Gamma)$-fixing. Then $\PCSP(\Gamma)$ is $\mathsf{NP}$-hard.
\end{thm}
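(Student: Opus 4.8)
We will reduce from Label Cover (Proposition~\ref{prop:lc}), taking the alphabets $R,L$ to be fixed constants and the soundness parameter $\eta < 1/C^2$ (both achievable for our fixed $C=C(\Gamma)$ via $O(1)$ rounds of parallel repetition of a basic hard instance). The reduction is a long-code style encoding following \cite{AGH-focs14}, but with a simplified treatment of the projection (edge) constraints.

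\emph{Construction.} For each $v\in V$ introduce a block of Boolean variables $\{z^v_a : a\in\{0,1\}^R\}$, and for each $u\in U$ a block $\{z^u_b : b\in\{0,1\}^L\}$. We may assume each $P_i\in\Gamma$ is nonempty (a clause with empty $P_i$ makes $\Psi_P$ unsatisfiable, so such instances may be rejected), and, applying $\flip$ as in Claim~\ref{claim:flip} (which preserves the folded weak polymorphisms, hence by the Galois correspondence of Appendix~\ref{app:Galois} preserves the complexity of $\PCSP(\Gamma)$), that $\vec 0\in P_i$ for all $i$. We add: (i) \emph{intra-block clauses}: for each vertex $w$, each $(P_i,Q_i)\in\Gamma$ of arity $k$, and each $k$-tuple of block indices $(a^{(1)},\dots,a^{(k)})$ that is coordinate-wise $P_i$-valid (meaning $(a^{(1)}_m,\dots,a^{(k)}_m)\in P_i$ for every coordinate $m$), the clause $P_i(z^w_{a^{(1)}},\dots,z^w_{a^{(k)}})$; and (ii) \emph{edge clauses}: for each edge $e=(u,v)$, each $(P_i,Q_i)$ of arity $k$, each coordinate-wise $P_i$-valid $k$-tuple $(b^{(1)},\dots,b^{(k)})$ over $\{0,1\}^L$, and each $\epsilon\in\{U,V\}^k$, the clause $P_i(w^{(1)},\dots,w^{(k)})$ with $w^{(\ell)}=z^u_{b^{(\ell)}}$ if $\epsilon_\ell=U$ and $w^{(\ell)}=z^v_{b^{(\ell)}\circ\pi_e}$ if $\epsilon_\ell=V$, where $(b\circ\pi_e)_j:=b_{\pi_e(j)}$. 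Since $\Gamma$ and $R,L$ are fixed constants, the instance has polynomial size. By Definition~\ref{df:weak-poly}, an assignment restricted to block $w$ satisfies all intra-block $\Psi_P$-clauses precisely when, viewed as a function, it is a polymorphism of every $P_i$, and satisfies all intra-block $\Psi_Q$-clauses precisely when it is a weak polymorphism of $\Gamma$; the edge clauses couple the block-$u$ function to the pullback $b\mapsto(\text{block-}v\text{ function})(b\circ\pi_e)$, which is again a weak polymorphism of $\Gamma$ as it is a minor (Definition~\ref{df:proj}).

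\emph{Completeness.} Given a satisfying labeling $\sigma_U,\sigma_V$, set $z^v_a:=a_{\sigma_V(v)}$ and $z^u_b:=b_{\sigma_U(u)}$. Coordinate-wise $P_i$-validity at coordinate $\sigma_V(v)$ (resp.\ $\sigma_U(u)$) makes every intra-block clause true, and since $\pi_e(\sigma_V(v))=\sigma_U(u)$ gives $z^v_{b\circ\pi_e}=b_{\sigma_U(u)}=z^u_b$, every edge clause likewise reduces to a coordinate-wise $P_i$-valid evaluation. Hence $\Psi_P$ is satisfiable.

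\emph{Soundness and the main obstacle.} If $\Psi_Q$ is satisfiable, reading off each block gives weak polymorphisms $g_w\in\poly(\Gamma)$, each folded and $C$-fixing; a minimal fixing set yields a nonempty $\mathrm{Dec}(w)$ of size at most $C$ (nonempty because a folded $g_w$ is nonconstant). Choosing $\sigma_V(v)$ uniformly from $\mathrm{Dec}(v)\subseteq[R]$ and $\sigma_U(u)$ uniformly from $\mathrm{Dec}(u)\subseteq[L]$, each edge is satisfied with probability at least $1/C^2$ provided the decoded sets ``cross-intersect,'' i.e.\ some $j\in\mathrm{Dec}(v)$ has $\pi_e(j)\in\mathrm{Dec}(u)$; then in expectation more than an $\eta$ fraction of edges are satisfied, contradicting soundness, so $\Psi_Q$ is unsatisfiable and the reduction is correct. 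The heart of the proof—and the step I expect to be the main obstacle—is deriving this cross-intersection from the edge clauses alone. One argues the contrapositive: if $\mathrm{Dec}(v)$ and $\pi_e^{-1}(\mathrm{Dec}(u))$ were disjoint, then using only that $g_u$ and $b\mapsto g_v(b\circ\pi_e)$ are \emph{folded} (hence nonconstant) weak polymorphisms, fixed respectively when $\mathrm{Dec}(u)$, resp.\ $\mathrm{Dec}(v)$, is set to a constant, one should be able to assemble a coordinate-wise $P_i$-valid $k$-tuple and a mixing vector $\epsilon$ whose evaluation escapes $Q_i$, contradicting that the corresponding $\Psi_Q$-clause holds. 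Carrying out this combinatorial construction with no symmetry assumption—and handling the ``anti'' sub-case $g_w(\vec 0)=1$, which arises when $\Gamma$ has non-idempotent folded polymorphisms, e.g.\ by also decoding from the complemented function—is the technical core, and is exactly where the present argument streamlines the projection handling of \cite{AGH-focs14}.
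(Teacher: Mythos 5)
Your construction diverges from the paper's at precisely the step that matters, and the step you single out as ``the main obstacle'' is in fact left unproven.

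The paper's edge gadget is much simpler than the one you propose: for each edge $e=(u,v)$ and each $x\in\{0,1\}^L$, it sets $f_u(x)$ and $f_v(y)$, where $y_i=x_{\pi_e(i)}$, to be the \emph{same underlying variable}. No constraint from $\Gamma$ is used to couple the blocks at all. This forces $f_u=f_v^{\pi_e}$ identically, so in the soundness case the cross-intersection of the decoded sets is immediate: $f_u(e_{\pi_e[S_v]})=f_v(e_{S_v})=f_v(\vec 1)=f_u(\vec 1)$, and if $S_u\cap\pi_e[S_v]=\emptyset$ then $S_u\subseteq[L]\setminus\pi_e[S_v]$, so $C$-fixing gives $f_u(e_{[L]\setminus\pi_e[S_v]})=f_u(\vec 1)$, while folding gives $f_u(e_{[L]\setminus\pi_e[S_v]})=\neg f_u(e_{\pi_e[S_v]})=\neg f_u(\vec 1)$, a contradiction. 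That three-line argument \emph{is} the ``simplification over \cite{AGH-focs14}'' the paper advertises.

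You instead use AGH-style mixed clauses across blocks, which only enforce that $(g_u,\,g_v\circ\pi_e)$ jointly behave like a weak-polymorphism pair; this does \emph{not} give $g_u=g_v^{\pi_e}$. You then correctly observe that the cross-intersection must be extracted from these mixed clauses, call this ``the technical core,'' and stop, asserting only that ``one should be able to assemble'' a violating tuple. No such construction is given, and it is not obvious that it exists for arbitrary (non-symmetric) $\Gamma$ with merely $C$-fixing polymorphisms — indeed, avoiding exactly that difficulty is what motivates the paper's equality-based gadget. So as written the proof has a real gap at its central step.

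Two smaller remarks. First, the ``anti'' sub-case $g_w(\vec 0)=1$ that you flag does not need separate handling: the $C$-fixing definition fixes $f$ to $f(\vec 0)$ whatever that value is, and the contradiction above never uses $f_u(\vec 1)=1$. Second, the normalization $\vec 0\in P_i$ for all $i$ (and the appeal to Claim~\ref{claim:flip} plus Galois correspondence to justify it) is unnecessary for this theorem; the paper's argument works without it.
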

\begin{proof}
  The proof is via reduction from the hardness of Label Cover as stated in Proposition~\ref{prop:lc}, for the parameter $\eta = 1/C^2$.  The proof is a simplification of the proof of Theorem 1.1 of \cite{DBLP:journals/siamcomp/AustrinGH17}.

  Let $G = (U, V, E)$ be our instance with maps $\pi_e : [R] \to [L]$. As noted in Remark 4.7 of \cite{DBLP:journals/siamcomp/AustrinGH17}, $L$ and $R$ are functions of $\eta$ and thus are independent of the size of $G$. We now create a $\Gamma$-$\PCSP$ $\Psi = (\Psi_P, \Psi_Q)$. For each $u \in U$, identify the vertex with $2^L$ variables which we denote by $f_u(x)$ where $x \in \{0, 1\}^L$ and $f_u : \{0, 1\}^L \to \{0, 1\}$. For all $(P, Q) \in \Gamma$ and $x^1, \hdots, x^L \in P$ (possibly with repetition) we enforce the constraint
  \[
  P(f_u(x^1_1, \hdots, x^L_1), \hdots, f_u(x^1_k, \hdots, x^L_k))
  \]
  in $\Psi_P$, with the corresponding constraint in $\Psi_Q$. From the perspective of $\Psi_Q$, $f_u$ is a polymorphism of $\Gamma$. Likewise, for each $v \in V$, identify $2^R$ variables which we denote by $f_v(y)$ where $y \in \{0, 1\}^R$ and $f_v : \{0, 1\}^R \to \{0, 1\}$. Again, using the constraints of $\Gamma$, we may specify that $f_v$ is a polymorphism from the perspective of $\Psi_Q$.

  Next, we specify the edge constraints, which we do in a manner greatly simplifying that of \cite{DBLP:journals/siamcomp/AustrinGH17}. For each $e = (u, v) \in E$ and for any $x \in \{0, 1\}^L$ and $y \in \{0, 1\}^R$ such that $x_{\pi_e(i)} = y_i$ for all $i \in [R]$, we specify that $f_u(x) = f_v(y)$. Note that $\Gamma$ might not have an equality constraint, but we can implicitly introduce one by using the same variable for $f_u(x)$ and $f_v(y)$ when constructing $\Psi$. For a specific $(u, v) \in E$, these constraints maintain that $f_v^{\pi_{(u, v)}} = f_u$ (in both $\Psi_P$ and $\Psi_Q$).

  To show that this is a valid reduction, we need to show that both completeness and soundness hold (see Lemmas 4.5 and 4.6 of \cite{DBLP:journals/siamcomp/AustrinGH17}).

  \begin{itemize}
  \item Completeness: If there exists a labeling $\sigma_U, \sigma_V$ satisfying every edge of the Label Cover instance, let $f_u(x) = x_{\sigma_U(u)}$ and $f_v(x) = x_{\sigma_V(v)}$. These satisfy the constraints for $\Psi_P$ since dictators are polymorphisms of $(P, P)$ (as well as $(P, Q)$) for all $(P, Q) \in \Gamma$. The equal constraints are also satisfied since if $x_{\pi_e(i)} = y_i$ for some $e = (u, v) \in E$, then $f_u(x) = x_{\sigma_U(u)} = y_{\sigma_V(v)} =  f_v(y)$, as desired. Thus, $\Psi_P$ is satisfiable when our Label Cover instance is satisfiable.

  \item Soundness: Assume for {the} sake of contradiction, that a satisfying assignment to $\Psi_Q$ exists. For each $u \in U$, $v \in V$, $f_u$ and $f_v$ are $C$-fixing. Thus, we may define $S_u \subseteq [L], S_v \subseteq [R]$ such that $f_u(x) = f_u(1, \hdots, 1)$ and $f_v(y) = f_v(1, \hdots, 1)$ if $x_i = 1$ for $i \in S_u$ and $y_j = 1$ for $j \in S_v$. Since $S_u$ is $C$-fixing, we can let $|S_u| \le C$. {By the same logic, $|S_v| \le C$.}

    We claim that for every edge $e= (u, v) \in E$, $S_u \cap \pi_e[S_v]$ is nonempty (where $\pi_e[S_v] = \{\pi(s) : s \in S_v\}$). By virtue of the equality constraints, $f_u = f_v^{\pi_e}$; thus we have that $f_u(e_{\pi[S_v]}) = f_v(e_{S_v}) = f_v(1, \hdots, 1) = f_u(1, \hdots, 1).$ Thus, as $\Gamma$ is folded, $f_u$ must be folded, so $f_u(e_{[L] \setminus \pi_e[S_v]}) = \neg f_u(1, \hdots, 1)$. If $S_u$ and $\pi_e[S_v]$ were disjoint, then $S_u \subseteq [L]\setminus \pi_e[S_v]$, so by the definition of $S_u$, $f_u(e_{[L] \setminus \pi_e[S_v]}) = f_u(1, \hdots, 1)$, contradiction. Thus, $S_u$ and $\pi_e[S_v]$ intersect non-trivially.

    Due to this fact, we can show a $\eta$-approximate labeling exists for our label cover instance as in \cite{DBLP:journals/siamcomp/AustrinGH17} and typical for Label Cover reductions. For each $u \in U$, select $\sigma_U(u)$ uniformly at random from $S_u$. Likewise, for each $v \in V$, select $\sigma_V(v)$ uniformly at random from $S_v$. Since for any given $e = (u, v) \in E$ we have that $S_u$ and $\pi_e[S_v]$ have a common intersection and both sets have size at most $C$, $\sigma_U(u) = \pi_e(\sigma_V(v))$ with probability at least {$\eta \geq 1/C^2$}. Thus, the expected number of constraints satisfied by a random labeling is at least $\eta$. Hence, there exists a labeling which satisfies at least $\eta$-fraction of the constraints, as desired.
  \end{itemize}

  Thus, we have completed our reduction, so $\PCSP(\Gamma)$ is $\mathsf{NP}$-hard.
\end{proof}

Hence, we have completed the proof of Theorem \ref{thm:main-result} by combining Theorems \ref{thm:alg-result}, \ref{thm:junta}, and \ref{thm:hardness}.

\section{General Theory of Promise CSPs}\label{app:theory}

This section contains a number of additional observations about Promise CSPs which help to build a broader theory of Promise CSPs.

\subsection{Galois Correspondence of Polymorphisms}\label{app:Galois}

In this part, we show that for any finite family $\Gamma$ of
promise relations of any finite arity, we show that $\Pol(\Gamma)$
captures the computational complexity of $\PCSP(\Gamma)$ {in} the
following precise sense. To do this, we show a \emph{Galois correspondence} between families of polymorphisms and families of promise relations that are closed under a form of reduction.

\begin{thm}\label{thm:Galois}
  Let $\Gamma$ and $\Gamma'$ be families of promise relations such that $\Pol(\Gamma) \subseteq \Pol(\Gamma')$. Then, there is a polynomial-time reduction from $\PCSP(\Gamma')$ to $\PCSP(\Gamma)$.
\end{thm}

This result is the promise-analogue to Theorem 3.16 of
\cite{Chen:2009}, originally established by \cite{jeavons88}, which
holds for traditional CSPs. Our proof has similar structure to that of
\cite{Chen:2009}. Pippenger~\cite{Pippenger2002} in Section 2 of his
paper proves a variation of the Galois correspondence between promise
relations and their polymorphisms, although not in
this particular complexity-theoretic formulation.

In fact, we show the polynomial-time reduction is of a very local form. Let $\EQUAL = \{(i, i) \: : \: \in D\}$ be the relation which specifies that two variables are equal. Since we have been allowing repetition of variables, this relation has been essentially implicit.

\begin{df}\label{df:ppp-definable}
  Let $\Gamma$ be a finite family of promise relations. We say that a promise relation $(P', Q') \in D^k \times D^k$ is \textit{positive primitive promise definable} (shortened to \textit{ppp-definable}) from $\Gamma$ if there exists a $\Gamma\cup\{\EQUAL\}$-$\PCSP$ $\Psi = (\Psi_P, \Psi_Q)$ on $k + \ell$ variables such that
  \begin{itemize}
  \item For all $(x_1, \hdots, x_k)\in P'$, there exists $(y_1, \hdots, y_{\ell})$ such that $(x_1, \hdots, x_k, y_1, \hdots, y_{\ell})$ is a satisfying assignment to $\Psi_P$.
  \item For all satisfying assignments $(z_1, \hdots, z_{k + \ell})$ to $\Psi_Q$, $(z_1, \hdots, z_k) \in Q'$.
  \end{itemize}
  We say that a finite family of promise relations $\Gamma'$ is ppp-definable from $\Gamma$ if every $(P', Q') \in \Gamma'$ is ppp-definable from $\Gamma$.
\end{df}

In particular, note that if $(P, Q)$ and $(P', Q')$ have the same arity and $P' \subseteq P \subseteq Q \subseteq Q'$ then $(P', Q')$ is ppp-definable from $(P, Q)$ by letting $(\Psi_P, \Psi_Q) = (P, Q)$. We also note that ppp-definability is reflexive ($\Gamma$ is ppp-definable from $\Gamma$) and transitive: if $\Gamma'$ is ppp-definable from $\Gamma$ and $\Gamma''$ is ppp-definable from $\Gamma'$ then $\Gamma''$ is ppp-definable from $\Gamma$. We have that ppp-definability is a formalization {of} the notion of a \emph{gadget reduction} in \cite{DBLP:journals/siamcomp/AustrinGH17} (see Proposition 3.1).

Our notion of ppp-definability is a direct generalization of {the} notion of \textit{pp-definability} for normal CSP relations defined in \cite{Chen:2009}. If $\Gamma'$ is ppp-definable from $\Gamma$, there is a corresponding polynomial-time reduction from $\PCSP(\Gamma')$ to $\PCSP(\Gamma)$ by replacing each $(P', Q') \in \Gamma'$ clause with a corresponding $(\Psi_P, \Psi_Q)$ clause (adding in any auxiliary variables), which can can be implemented with clauses from $\Gamma$ and yields only a constant-factor blowup. It is straightforward to verify that this reduction is valid. As noted in \cite{Chen:2009}, this reduction can be done in logarithmic space.

In establishing the Galois correspondence, one important ppp-definition from $\Gamma$ {is the} promise relation of polymorphisms of $\Gamma$. {This definition is instrumental in showing that every relaxation of PCSP is also a ppp-definition and vice versa.}

\begin{prop}\label{prop:ppp-poly}
  Let $L$ be a positive integer. The following promise relation $S_L \subseteq T_L \subseteq D^{D^L}$ is ppp-definable from $\Gamma$:
  \begin{align*}
  S_L &= \{f : D^L \to D \: :\: f \in \Pol(P, P)\text{ for all }(P, Q) \in \Gamma\}\\
  T_L &= \{f : D^L \to D\: :\: f \in \Pol(P, Q)\text{ for all }(P, Q) \in \Gamma\},
  \end{align*}
  where we identify a function $f \in D^L \to D$ as a vector of $|D|^L$ variables.
\end{prop}
\begin{proof}
  Using the definition of a polymorphism, one can specify that $f$ is a polymorphism of $\Pol(P, Q)$ of specific arity in terms of a fixed number of $Q$-clauses. Replacing those $Q$-clauses with $P$-clauses exactly characterizes that $f \in \Pol(P, P)$.
\end{proof}

With these facts established, we may now prove the theorem. The proof is quite similar to and was inspired by the second half of Theorem 3.13 of \cite{Chen:2009}.

\begin{proof}[Proof of Theorem \ref{thm:Galois}]
  It suffices to show that every promise relation $(P', Q') \in \Gamma'$ is ppp-definable from $\Gamma$. Let $k$ be the arity of $(P', Q')$ and let $m = |P'|$. Let $x^1, \hdots, x^m$ be some ordering of the elements of $P'$. Define $y^1, \hdots, y^k \in D^m$ such that $y^i_j = x^j_i$ for all $i \in [k], j \in [m]$. Now from Proposition \ref{prop:ppp-poly}, we have that $(S_m, T_m)$ is ppp-definable from $\Gamma$. Now, consider the following promise relation $(S'_m, T'_m)$ of arity $k$.
  \begin{align*}
    S'_m &= \{(f(y^1), \hdots, f(y^k))\: : \: f \in S_m\}\\
    T'_m &= \{(f(y^1), \hdots, f(y^k))\: : \: f \in T_m\}.
  \end{align*} 
  We have that $(S'_m, T'_m)$ is ppp-definable from $(S_m, T_m)$ since every $x \in S'_m$ can be built up into a corresponding element of $S_m$ and every $y \in T_m$ can be stripped down to an element of $T'_m$. Note that this is the case even if $y^i = y^j$ for some distinct $i, j \in [k]$ by using the $\EQUAL$ relation.

  We claim that $P' \subseteq S'_m \subseteq T'_m \subseteq Q'$. First, for all $i \in [m]$, consider the unique projection map $\pi_i : D^m \to D$ given by $\pi_i(y) = y_i$. Clearly $\pi_i \in S_m$. Thus, $(\pi_i(y^1), \hdots, \pi_i(y^k)) = (y^1_i, \hdots, y^k_i) = x^i \in S'_m$. Thus, $P' \subseteq S'_m$. Second, we can see that $S'_m \subseteq T'_m$ since $S_m \subseteq T_m$. Third, note that $T'_m \subseteq {T_m}(P') \subseteq {\Pol(\Gamma)}(P')$. Since $\Pol(P', Q') \supseteq \Pol(\Gamma)$, we have that $Q' \supseteq {\Pol(\Gamma)}(P')$. Thus, $T'_m \subseteq Q'$.

  Thus, therefore $(P', Q')$ is ppp-definable from $(S'_m, T'_m)$. By transitivity, we have that $(P', Q')$ is ppp-definable from $\Gamma$, so $\Gamma'$ is ppp-definable from $\Gamma$.
\end{proof}

\subsection{Polymorphism-only description of PCSPs}

In this section, we establish a
necessary and sufficient set of conditions on a set $\mathcal F$ of
functions over domain $D$ for which there exists some finite $\Gamma$ such
that $\mathcal F = \Pol(\Gamma)$. Pippenger~\cite{Pippenger2002}
proved such a characterization in the case that $\Gamma$ may or may
not \emph{have infinitely many relations} using similar ideas. Recall the definition of a projection of a polymorphism.

\begin{df}
  Let $f : D^L \to D$ be a function, and let $\pi : [L] \to [R]$ be any map which we call a \emph{projection}. The projection of $f$ with respect to $\pi$ is the function $f^{\pi} : D^R \to D$ such that for all $y \in D^R$, $f^{\pi}(y) = f(x)$, where $x \in D^L$ is the unique $L$-tuple such that
  \[
  x_i = y_{\pi(i)}, \text{ for all $i \in [L]$.}
  \]
\end{df}

Note that in a projection it might be the case that $R \ge L$. We say that a family $\mathcal F$ of functions over domain $D$ is \emph{projection-closed} if for all $L, R \in \mathbb N$, all $f \in \mathcal F$ of arity $L$, and all maps $\pi : [L] \to [R]$, $f^{\pi} \in \mathcal F$.

Another technical property we require of $\mathcal F$ is that it is \emph{finitizable}. This means there exists some finite arity $R \in \mathbb N$, called the \emph{finitized arity} such that $f : D^L \to D$ is an element of $\mathcal F$ if and only if for all $\pi : [L] \to [R]$, the projection $f^{\pi}$ is an element of $\mathcal F$. Intuitively, this finitization property says that some finite arity of $\mathcal F$ captures all of the meaningful information about what is contained in $\mathcal F$. This is directly analogous to the property that our set $\Gamma$ of promise relations is finite.

Surprisingly, these two properties--that $\mathcal F$ is projection-closed and finitizable--perfectly capture the families of the form $\Pol(\Gamma)$ for some $\Gamma$ as long as we stipulate that $\mathcal F$ contains the \emph{identity function}: $\id_D : D \to D$ such that ${\id}_D(x) = x$ for all $x \in D$.\footnote{If we broaden our definition of PCSPs (as mentioned in the introduction) so that instead of $P \subseteq Q$, there is some unary map $\sigma : D_1 \to D_2$ such that $\sigma(P) \subseteq Q$, then the condition $\id_D \in \mathcal F$ can be replaced with $\sigma \in \mathcal F$ for some unary function $\sigma$.}

\begin{thm}\label{lem:finitization}
  Let $\mathcal F$ be a family of functions over domain $D$. Then, there exists a finite family $\Gamma$ of promise relations such that $\mathcal F = \Pol(\Gamma)$ if and only if $\mathcal F$ is both projection-closed and finitizable and $\id_D \in \mathcal F$.
\end{thm}

We start by showing that these two properties are necessary.

\begin{claim}\label{claim:finitization-necessary}
  Let $\Gamma = \{(P_i, Q_i) : P_i \subseteq Q_i \subseteq D^{k_i}\}$ be a finite family of promise relations with domain $D$. Then, $\Pol(\Gamma)$ is both projection-closed and finitizable.
\end{claim}

\begin{proof}
  \emph{projection-closed:} Let $f : D^L \to D$ be a polymorphism of $\Gamma$ and let $\pi : [L] \to [R]$ be a map. We claim that $f^{\pi} : D^R \to D$ is also a polymorphism of $\Gamma$. Consider all $(P_i, Q_i)$ and $y^{(1)}, \hdots, y^{(R)} \in P_i$. We need to show that $f^{\pi}(y^{(1)}, \hdots, y^{(R)}) \in Q_i$. Consider $x^{(1)}, \hdots, x^{(L)} \in P_i$ such that $x^{(j)} = y^{(\pi(j))}$ for all $j \in [L]$. From the definition of $f^{\pi}$ it is then easy to see that
  \[
  f^{\pi}(y^{(1)}, \hdots, y^{(R)}) = f(x^{(1)}, \hdots, x^{(L)}) \in Q_i,
  \]
  as desired.

  \emph{finitizable:} Let $R = \max_{(P_i, Q_i) \in \Gamma} |P_i|$. Crucially, this maximum exists since $\Gamma$ is finite. Since $\Pol(\Gamma)$ is projection-closed, for all $f \in \Pol(\Gamma)$ of arity $L$ and all $\pi : [L] \to [R]$, we have that $f^{\pi} \in \Pol(\Gamma)$.

  Now, consider any $f \not\in \Pol(\Gamma)$ of arity $L$, we would like to show that there exists $\pi : [L] \to [R]$ such that $f^{\pi} \not\in \Pol(\Gamma)$. Since $f \not\in \Pol(\Gamma)$, there exists $(P_i, Q_i) \in \Gamma$ and $x^{(1)}, \hdots, x^{(L)} \in P_i$ such that $f(x^{(1)}, \hdots, x^{(L)}) \not\in Q_i$. Since $R \ge |P_i|$, there exists an injective map $\sigma : P_i \to [R]$. Let $\pi : [L] \to [R]$ be $\pi(i) = \sigma(x^{(i)}).$ By nature of $\pi$, we can select $y^{(1)}, \hdots, y^{(R)} \in P_i$ such that $y^{(r)} = x^{(\pi^{-1}(r))}$ for all $r \in \Im(\pi)$ and $y^{(r)} = 1$ otherwise. (If $r \in R$ is not in the image of $\pi$, then we may make an arbitrary choice.) Note that if $\pi(j_1) = \pi(j_2)$ then $x^{(j_1)} = x^{(j_2)}$ so this choice of $y^{(j)}$'s is well-defined. From the definition of a projection,
  \[
  f^{\pi}(y^{(1)}, \hdots, y^{(R)}) = f(x^{(1)}, \hdots, x^{(L)}) \not\in Q_i,
  \]
  as desired. Therefore $\Pol(\Gamma)$ is finitizable.
\end{proof}

Note that since we stipulate that $P \subseteq Q$ for all $(P, Q) \in \Gamma$, we immediately have that $\id_D$ is a polymorphism of $\Gamma$. Much more difficultly, we show that these two properties are sufficient.

\begin{lem}\label{lem:finitization-sufficient}
  Let $\mathcal F$ be a domain-$D$ family of functions which is both projection-closed and finitizable as well as has $\id_D$ as an element. Then, there exists a family $\Gamma$ of finitely many promise relations such that $\Pol(\Gamma) = \mathcal F$.
\end{lem}

\begin{proof}
  Let $R \in \mathbb N$ be the finitized arity of $\mathcal F$. Identify the integers of $[|D|^R]$ with elements of $D^R$. Our choice of $\Gamma$ will consist of a single promise relation $P \subseteq Q \subseteq D^{|D|^R}$, where each $f \in D^{[|D|^R]}$ will be identified with a function $f : D^R \to D$ in the canonical way. We let $f \in P$ if and only if there exists $j \in [R]$ such that $f(x) = x_j$ for all $x \in D^R$. We let $Q = \{f \in \mathcal F \mid \text{$f$ has arity $R$}\}$. Since $\mathcal F$ has the identity function and is projection-closed, we have that $P \subseteq Q$. Thus, $\Gamma$ is a finite promise relation.

  Now that we have constructed $\Gamma$, we need to show $\Pol(\Gamma) = \mathcal F$. Enumerate the elements of $P$ as $y^{(1)}, \hdots, y^{(R)}$, where $y^{(j)}(x) = x_j$ for all $j \in [R]$ and $x \in D^R$. With this enumeration, we have the property that for all $g \in \Pol(P, Q)$ of arity $R$, $g(y^{(1)}, \hdots, y^{(R)}) = g$. This is because for all $x\in D^R$,
  \[g(y^{(1)}(x), \hdots, y^{(R)}(x)) = g(x_1, \hdots, x_R) = g(x).\]
  (Thus the $y^{(i)}$'s are like a long code test.)

  First, we show that $\Pol(P, Q) \subseteq \mathcal F$. Consider any $f \in \Pol(P, Q)$ of arity $L$ as well as any $\pi : [L] \to [R]$. Pick $x^{(1)}, \hdots, x^{(L)} \in P$ such that $x^{(j)} = y^{(\pi(j))}$ for all $j \in [L]$. Thus,
  \[
  Q \ni f(x^{(1)}, \hdots, x^{(L)}) = f^{\pi}(y^{(1)}, \hdots, y^{(R)}) = f^{\pi}.
  \]
  Thus, $f^{\pi} \in Q \subseteq \mathcal F$ for all $\pi : [L] \to [R]$. Thus, $f \in \mathcal F$ since $\mathcal F$ is finitizable, as desired.

  Last, we show that $\mathcal F \subseteq \Pol(P, Q)$. For every $f \in \mathcal F$, we need to show that for all $x^{(1)}, \hdots, x^{(L)} \in P$, we have that $f(x^{(1)}, \hdots, x^{(L)}) \in Q$. Since $y^{(1)}, \hdots, y^{(R)}$ is an enumeration of the elements of $P$, there is a unique $\pi : [L] \to [R]$ such that $x^{(j)} = y^{(\pi(j))}$ for all $j \in [L]$. Then, we have that
  \[
  f(x^{(1)}, \hdots, x^{(L)}) = f^{\pi}(y^{(1)}, \hdots, y^{(R)}) = f^{\pi}.
  \]
  Since $\mathcal F$ is projection closed, $f^{\pi} \subseteq \mathcal F$. Therefore, $f^{\pi} \in Q$ because $f^{\pi}$ has arity $R$. Thus, $f(x^{(1)}, \hdots, x^{(L)}) \in Q$, as desired.

  Hence, $\mathcal F = \Pol(P, Q)$.
\end{proof}

Claim \ref{claim:finitization-necessary} and Lemma \ref{lem:finitization-sufficient} together establish Theorem~\ref{lem:finitization}.

\subsection{Analogous characterization for CSPs}

We now extend Theorem~\ref{lem:finitization} to show that the same characterization holds for CSPs as long as we add the condition that our set of functions form a \emph{clone} (defined below). For our purposes, a CSP is a PCSP $\Gamma$ in which $P = Q$ for all $(P, Q) \in \Gamma$.

As known in the CSP literature {(}e.g., \cite{Chen:2009}{)}, the family of polymorphisms of a CSP $\Lambda$ have the additional property {it} is a \emph{clone}.\footnote{{As pointed out to the authors after writing this, the collection of clones which correspond to $\CSP(\Gamma)$ for finite $\Gamma$ are ``finitely related clones'' (see, e.g.,~\cite{markovic2012finitely}.)}}That is, for all $f \in \Pol(\Lambda)$ of arity $L_1$, and all $g_1, \hdots, g_{L_1} \in \Pol(\Lambda)$ of arity $L_2$, we have that $h(x^{(1)}, \hdots, x^{(L_1)}) {:=} f(g_1(x^{(1)}), \hdots, g_{L_1}(x^{(L_1}{))}$ is a polymorphism of $\Lambda$ of arity $L_1L_2$. It turns out this property is necessary and sufficient for characterizing CSPs from their polymorphisms.

\begin{lem}\label{lem:CSP-finitization}
  Let $\mathcal F$ be a family of functions over the domain $D$. Then, there exists a CSP $\Lambda$ such that $\mathcal F = \Pol(\Lambda)$ if and only if $\mathcal F$ {is} finitizable, a clone, and contains the identity.
\end{lem}

\begin{proof}
  As stated previously, $\Pol(\Lambda)$ {is} finitizable, a clone, and contains the identity. Thus, it suffices to show the converse.

  Assume that $\mathcal F$ finitizes at arity $R$. As shown in Lemma \ref{lem:finitization-sufficient}, $\mathcal F = \Pol(P, Q)$, where $P \subseteq Q \subseteq D^{D^R}$. In this case, $P$ are the $R$ projection functions from $D^R$ to $D$ and $Q$ is the set of arity-$R$ functions of $\mathcal F$. Since, we now have that $\mathcal F$ is a clone, we claim that $\mathcal F = \Pol(Q, Q)$.

  First, we have that $\Pol(Q, Q) \subseteq \Pol(P, Q) = \mathcal F$ since $P \subseteq Q$, so membership in $\Pol(Q, Q)$ is a more strict condition. To show the reverse inclusion $\mathcal F \subseteq \Pol(Q, Q)$, consider any $f \in \mathcal F$ of arity $L$. We need to show for all $g_1, \hdots, g_L \in Q$, we have that $f(g_1, \hdots, g_L) \in Q$. Since $\mathcal F$ is a clone, we immediately have that $f(g_1, \hdots, g_L) \in \mathcal F$. Furthermore, $f(g_1, \hdots, g_L)$ has arity $R$ so $f(g_1, \hdots, g_L) \in Q$.
\end{proof}

\subsection{Significance toward establishing complexity of PCSPs}

{T}hese results liberate us from ever thinking about $\Gamma$, and instead we can think entirely in terms of establishing the easiness/hardness of projection-closed, finitized families of functions. {As discussed in the introduction, subsequent work (e.g., \cite{BartoBulinKrokhinEtAl2019}) has shown that it sufficies to only consider the \emph{identities} which the polymorphisms satisfy.}
Another point of consideration is the case in which there are \textit{infinitely} many relations in our (P)CSPs (although keeping a finite domain). As a computational problem, one can define the (non-uniform) computational complexity of a PCSP $\Gamma$, in the style of the compactness theorem, to be the supremum of the computational complexities of all finite subsets $\Gamma' \subseteq \Gamma$. Another common (uniform) definition is that the relations used in any particular CSP are encoded as part of the input (using some canonical encoding). The \emph{local-global conjecture} (e.g., \cite{BodirskyGrohe2008}) states that these two notions of complexity should be identical for infinite case. Such a conjecture could also be made for PCSPs, although we doubt the veracity of such a claim for the following reason. Once we allow infinitely many relations into our PCSPs, the possible characterizations of polymorphisms expand to all projection-closed families (that is, the finitization condition can be dropped). As a result, it seems quite tempting that an $\mathsf{NP}$-intermediate PCSP could be constructed by adapting the techniques used to prove Ladner's theorem \cite{Ladner1975}.

\subsection{Every PCSP is Equivalent to a Promise Digraph Homomorphism.}\label{app:digraph}

We show that for any finite $\PCSP(\Gamma)$ over a
domain $D$, there exist directed graphs $H_{\Gamma}, H'_{\Gamma}$
with a homomorphism from $H_{\Gamma}$ to $H'_{\Gamma}$ such that
$\PCSP(\Gamma)$ is polynomial time equivalent to the promise digraph
homomorphism problem on $(H_{\Gamma}, H'_{\Gamma}).$ Below is a formal
definition of the promise digraph problem.

\begin{df} 
  Let $H, H'$ be a pair of directed graphs such there is a homomorphism
  $\phi : H \to H'$. The promise decision problem $\mathsf{PDGH}(H,
  H')$ (promise directed graph homomorphism) is given a directed graph
  $G$ distinguish between the two possible cases.
  \begin{itemize}
    \item[YES.] There is a homomorphism from $G$ to $H$.
    \item[NO.] There is no homomorphism from $G$ to $H'$.
  \end{itemize}
\end{df}

\begin{thm}
  Let $D$ be a finite domain, and let $\Gamma$ be any promise relation
  over $D$. Then, there exists a pair of directed graphs $H_{\Gamma}$
  and $H'_{\Gamma}$ such that $H_{\Gamma} \subseteq H'_{\Gamma}$ and
  $\PCSP(\Gamma)$ is polynomial-time equivalent to
  $\mathsf{PDGH}(H_{\Gamma}, H'_{\Gamma})$.
\end{thm}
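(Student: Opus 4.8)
The plan is to transport the classical Feder--Vardi digraph encoding of a relational structure to the promise setting, applying it simultaneously to the two ``sides'' of $\Gamma$. Write $\Gamma = \{(P_i,Q_i) : i \in [r]\}$ with $P_i \subseteq Q_i \subseteq D^{k_i}$, and let $\mathbf A_P$ (resp.\ $\mathbf A_Q$) be the relational structure on universe $D$ whose $i$-th relation is $P_i$ (resp.\ $Q_i$); since $P_i \subseteq Q_i$ we have $\mathbf A_P \subseteq \mathbf A_Q$ (same universe, relations contained). A $\Gamma$-PCSP instance $(\Psi_P,\Psi_Q)$ has a well-defined \emph{instance structure} $\mathbf X$ on its variable set, whose $i$-th relation is the set of variable-tuples appearing in $R_i$-clauses; because the $P$- and $Q$-clauses of a $\Gamma$-PCSP are in bijection via the same variable-choice function, this $\mathbf X$ is the same structure regardless of side, and $\Psi_P$ is satisfiable iff $\mathbf X \to \mathbf A_P$, while $\Psi_Q$ is satisfiable iff $\mathbf X \to \mathbf A_Q$.

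The tool is the Feder--Vardi gadget construction $\mathbf D(\cdot)$, which turns an arbitrary finite relational structure into a digraph by replacing each tuple $\bar a$ of the $i$-th relation by a ``center'' vertex joined, for each coordinate position $j$, to the relevant universe-vertex by a directed path of a distinct length carrying fixed zig-zag decorations. I would use it with three properties (which the classical construction supplies, possibly after cosmetic tweaks): (i) \emph{functoriality}, $\mathbf X \to \mathbf A \iff \mathbf D(\mathbf X) \to \mathbf D(\mathbf A)$; (ii) a \emph{recovery map} $\mathbf Y(\cdot)$, computable in polynomial time and depending only on the input digraph and the signature of $\Gamma$ (not on the target), such that $G \to \mathbf D(\mathbf A) \iff \mathbf Y(G) \to \mathbf A$ for every target $\mathbf A$ over that signature (when $G$ fails the local scaffolding tests, $\mathbf Y(G)$ is a fixed unsatisfiable structure); and (iii) \emph{monotonicity}, $\mathbf A \subseteq \mathbf B \Rightarrow \mathbf D(\mathbf A) \subseteq \mathbf D(\mathbf B)$ as digraphs, with consistent vertex names. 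Set $H_\Gamma := \mathbf D(\mathbf A_P)$ and $H'_\Gamma := \mathbf D(\mathbf A_Q)$; by (iii) $H_\Gamma$ is literally a subgraph of $H'_\Gamma$, so $\mathsf{PDGH}(H_\Gamma,H'_\Gamma)$ is defined.

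The forward reduction $\PCSP(\Gamma) \le \mathsf{PDGH}(H_\Gamma,H'_\Gamma)$ then falls out of (i): on input $(\Psi_P,\Psi_Q)$ output the digraph $G := \mathbf D(\mathbf X)$; functoriality gives $\Psi_P$ satisfiable $\iff G \to H_\Gamma$ and $\Psi_Q$ satisfiable $\iff G \to H'_\Gamma$, so a YES instance of the PCSP maps to a YES instance of $\mathsf{PDGH}$ and a NO instance (where $\Psi_Q$ is unsatisfiable, i.e.\ $G \not\to H'_\Gamma$) to a NO instance. The backward reduction $\mathsf{PDGH}(H_\Gamma,H'_\Gamma) \le \PCSP(\Gamma)$ uses (ii): on input a digraph $G$, output the $\Gamma$-PCSP $(\Psi_P,\Psi_Q)$ whose instance structure is $\mathbf Y(G)$; since $\mathbf Y$ works uniformly in the target, $G \to H_\Gamma \iff \Psi_P$ satisfiable and $G \to H'_\Gamma \iff \Psi_Q$ satisfiable, again matching YES to YES and NO to NO. All maps are polynomial-time since the gadget sizes depend only on (the arities of) $\Gamma$, giving polynomial-time equivalence (indeed log-space).

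The main obstacle is establishing (i)--(iii) in exactly the parametric form needed --- in particular the \emph{uniformity in the target} in (ii): the recovery map $\mathbf Y$ must be simultaneously correct for $H_\Gamma$ and for the larger $H'_\Gamma$, i.e.\ enlarging the target from $\mathbf A_P$ to $\mathbf A_Q$ must not create unexpected homomorphisms of the scaffolding that would corrupt the decoding. I would isolate this as a single \emph{scaffolding rigidity lemma}: for every structure $\mathbf A$ over the signature, any homomorphism into $\mathbf D(\mathbf A)$ of a directed path or zig-zag of one of the prescribed lengths lands inside the scaffolding in its forced orientation. This is proved by the usual length-counting argument on directed walks and is insensitive to which core tuples are present in $\mathbf A$, precisely because the extra vertices $\mathbf D$ adds for a new tuple attach to the rest of the digraph only through further copies of the same rigid path gadgets. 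With that lemma in hand, (i)--(iii) follow from the classical analysis; the remaining work --- describing $\mathbf D$ and $\mathbf Y$ explicitly, checking consistency of vertex naming for monotonicity, and verifying the instance-structure translations --- is routine bookkeeping.
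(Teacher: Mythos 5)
Your proposal is correct and follows essentially the same route as the paper: both transport the Feder--Vardi digraph encoding to the promise setting, and both hinge on the observation that the translation between instance structures and digraphs is uniform in the target structure --- the paper isolates this as the notion of an \emph{oblivious} reduction (Definition~\ref{def:oblivious-FV}), which is exactly what your property~(ii) / scaffolding rigidity lemma supplies. The only procedural difference is that the paper first collapses $\Gamma$ to a single promise relation via products and adds a shift operator so as to invoke the particular statement of Feder--Vardi it cites, whereas you apply the encoding $\mathbf D(\cdot)$ directly to $\mathbf A_P$ and $\mathbf A_Q$; this is a matter of which form of the Feder--Vardi result one starts from, not a different argument.
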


Our proof is an adaptation of corresponding result for
CSPs by Feder and Vardi, see Theorem 11 of \cite{feder-vardi}. We cite
the following slightly stronger claim which is implied in the proof of
Theorem 11. We let $\Hom(H)$ denote the digraph graph homomorphism
problem of determining whether there exists a homomorphism from an
input graph $G$ to $H$. To do this, we need to define a notion of an
``oblivious reduction.''

\begin{df}
  Let $\Gamma = \{R_i\}$ be a CSP over a domain $D$. Let $\Psi$ be an
  instance of $\CSP(\Gamma)$, the \emph{underlying colored directed hypergraph}
  of $\Psi$ is the hypergraph whose vertices are the
  variables of $\Psi$ and whose (directed) hyperedges correspond to the tuples on which 
the relations of $\Gamma$ are applied, with the hyperedge having color $i$ when $R_i$ is applied.
\end{df}

\begin{df}
\label{def:oblivious-FV}
  Let $\Gamma$ be a CSP over a domain $D$ and let $H$ be a directed
  graph dependent on $D$.

  A reduction $\Psi \mapsto G_{\Psi}$ from $\CSP(\Gamma)$ to $\Hom(H)$ is \emph{oblivious} if
 $G_{\Psi} = \sigma(D,X)$ for some function $\sigma$ independent of $\Gamma$ applied to 
 the domain $D$ and underlying
  colored directed hypergraph $X$ of $\Psi$.

  Conversely, a reduction $G \mapsto \Psi_G$ from $\Hom(H)$ to $\CSP(\Gamma)$ is
  \emph{oblivious} if $\Psi_G = \eta(D,G)$ for some function $\eta$ independent of $H$.
  
\end{df}

\begin{thm}[Feder and Vardi, \cite{feder-vardi}]\label{thm:fed-vard}
  Let $\Gamma$ be a CSP over a domain $D$ such that $\Gamma = (R, S)$,
  where $R$ has arity $1$ and $S$ has arity $2$ with the property that
  the projections of the first and second coordinate of $S$ are both
  all of $D$. Then, there is a digraph $H_{\Gamma}$ such that
  $\CSP(\Gamma)$ is equivalent to the graph homomorphism problem on
  $H_{\Gamma}$.  Furthermore, the reductions mapping instances of
  $\CSP(\Gamma)$ to those of $\Hom(H_{\Gamma})$ and the reductions
  mapping instances of $\Hom(H_{\Gamma})$ to $\CSP(\Gamma)$ are
  both oblivious in the sense of Definition~\ref{def:oblivious-FV}.
\end{thm}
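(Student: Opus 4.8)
The plan is to prove the theorem by the classical \emph{indicator} (gadget-replacement) construction that turns a finite relational structure into a digraph, specialized to the binary-plus-unary signature and carried out so that obliviousness is visible. Two observations drive it: a relation of arity at most $2$ is literally an edge set on the domain, and the hypothesis that $S$ projects onto all of $D$ in each coordinate is exactly what prevents the target digraph from having dead ends, so that partial homomorphisms can always be completed through the wiring; this condition is moreover automatically met by the binary relation produced when an arbitrary structure is first compressed to a single binary relation, so that Theorem~\ref{thm:fed-vard} still suffices for the general reduction.

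First I would fix, as a function of $n=|D|$ alone, a small collection of rigid oriented-path gadgets: a \emph{variable gadget} $Z$ (a zig-zag path with a distinguished source and sink), a \emph{unary gadget}, and a \emph{binary gadget}, all long enough to be rigid and pairwise incomparable in the homomorphism order. The target digraph $H_\Gamma$ is then the ``all-tuples template'': take $n$ copies $Z_1,\dots,Z_n$ of $Z$, the copy $Z_d$ representing $d\in D$; attach a copy of the unary gadget at the sink of $Z_d$ for every $d\in R$; and join the sinks of $Z_a$ and $Z_b$ by a copy of the binary gadget for every $(a,b)\in S$. By construction $H_\Gamma$ admits a level function, and the rigidity of the gadgets yields the two \emph{key lemmas}: any homomorphism of a fresh copy of $Z$ into $H_\Gamma$ factors through exactly one $Z_d$, giving a bijection between such homomorphisms and $D$; and a copy of the unary (resp.\ binary) gadget glued at the sink(s) of copies of $Z$ that map to $Z_d$ (resp.\ $Z_a,Z_b$) extends to a homomorphism of $H_\Gamma$ iff $d\in R$ (resp.\ $(a,b)\in S$). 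The \emph{forward reduction} $\Psi\mapsto G_\Psi$ is then mechanical and manifestly oblivious: replace each variable of $\Psi$ by a fresh copy of $Z$, each $R$-constraint by a fresh unary gadget, each $S$-constraint by a fresh binary gadget, gluing at sinks according only to which constraints are applied to which tuples of variables (the colored directed hypergraph of $\Psi$). From the key lemmas, a satisfying assignment $\sigma\colon V\to D$ induces a homomorphism $G_\Psi\to H_\Gamma$ by sending each variable-gadget to $Z_{\sigma(v)}$ and completing each constraint gadget, and conversely any homomorphism $G_\Psi\to H_\Gamma$ reads off, through the bijection, an assignment $\sigma$ that the constraint-gadget lemma forces to satisfy every constraint of $\Psi$.

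For the \emph{backward reduction} $G\mapsto\Psi_G$, which must depend only on $D$ and $G$ and not on $H_\Gamma$, the idea is to ``parse'' $G$: a homomorphism $G\to H_\Gamma$ forces $G$ to be balanced with a level function compatible with that of $H_\Gamma$, after which each level-$0$ vertex of $G$ becomes a variable of $\Psi_G$ (it must map to some $Z_d$, i.e.\ to a value in $D$), each maximal subpath of $G$ having the shape of the binary gadget becomes an $S$-constraint on its two endpoint variables, each appendage of $G$ having the shape of the unary gadget becomes an $R$-constraint, and any part of $G$ not decomposable in this way (wrong levels, a direct edge between level-$0$ vertices, an ill-shaped segment) makes $G\not\to H_\Gamma$, so that $\eta$ may output a trivially unsatisfiable instance; the full-projection hypothesis on $S$ ensures $H_\Gamma$ has no isolated domain-vertex, which is what makes this parsing faithful. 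The \textbf{main obstacle}, and the genuine technical heart of the argument, is the \emph{rigidity lemma} on which both directions rest: one must rule out every unintended homomorphism of the gadgets into $H_\Gamma$ --- no folding of a zig-zag onto a shorter walk, no gadget slipping into the wrong copy of $Z$ or onto the wrong level --- so that the correspondence between solutions of $\Psi_G$ and homomorphisms $G\to H_\Gamma$ is a bijection for \emph{every} input digraph $G$, not merely for those produced by the forward reduction. This is established by a careful case analysis of how the distinguishing up/down patterns of the oriented paths can be mapped, using that the path lengths were taken large in terms of $n$. Finally, both $\sigma$ and $\eta$ are computable in logarithmic space and neither inspects the contents of $R$ or $S$ --- $\sigma$ sees only $D$ and the incidence data of $\Psi$, and $\eta$ sees only $D$ and $G$ --- so they are oblivious in the sense of Definition~\ref{def:oblivious-FV}, which completes the proof.
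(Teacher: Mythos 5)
You should first be aware that the paper does not prove this statement at all: it is imported from Feder and Vardi (the paper explicitly says it is citing a ``slightly stronger claim which is implied in the proof of Theorem 11'' of \cite{feder-vardi}). So there is no in-paper argument to compare against; the question is whether your reconstruction of the indicator/gadget construction would actually establish the cited claim. Your overall architecture (rigid oriented-path gadgets for variables, unary constraints and binary constraints; forward reduction by gadget replacement driven only by $|D|$ and the colored hypergraph, hence oblivious) is indeed the right shape and is in the spirit of Feder--Vardi.

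The genuine gap is in the backward reduction. You claim that any input digraph $G$ that does not parse into complete variable/unary/binary gadgets satisfies $G \not\to H_{\Gamma}$, so that $\eta$ may output a trivially unsatisfiable instance. That is false: $H_{\Gamma}$ contains the gadgets themselves, so any \emph{fragment} of a gadget --- a single edge, a subpath of a zig-zag, half of a binary gadget --- admits a homomorphism into $H_{\Gamma}$ even though it is not decomposable into whole gadgets; as written, $\eta$ already gives the wrong answer on a one-edge input. Handling arbitrary instances, not just images of the forward reduction, is precisely the technical heart of Feder--Vardi's Theorem 11: one must show that every connected component of $G$ either forces a consistent parse or maps into $H_{\Gamma}$ for reasons that do not depend on the contents of $R$ and $S$ (this is where the full-projection hypothesis on $S$ does its work, letting partial images be completed through the template), and the residual, weaker constraints imposed by partial pieces must be faithfully translated into $\Gamma$-constraints. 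This independence from $R$ and $S$ is also exactly what obliviousness demands --- the same $\eta$ must be simultaneously correct for $H_{P'}$ and $H_{Q'}$ in the paper's application --- so it cannot be waved away. Together with the fact that the rigidity lemma underpinning both directions is only asserted, the proposal as written does not yet prove the theorem, though it correctly identifies where the difficulty lies.
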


 Below we sketch the
details necessary to extend this argument to PCSPs. 

\begin{thm}
  Let $\Gamma$ be a PCSP, then there exists a pair of digraphs $H$ and
  $H'$ such there is a homomorphism from $H$ to $H'$ and
  $\PCSP(\Gamma)$ is polynomial time equivalent to $\mathsf{PDGH}(H,
  H')$.
\end{thm}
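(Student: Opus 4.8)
The plan is to transfer Feder and Vardi's two-stage reduction from CSPs to the promise setting, exploiting the \emph{obliviousness} of their reductions (Definition~\ref{def:oblivious-FV}) to keep the ``$\Psi_P$'' and ``$\Psi_Q$'' halves of a PCSP instance perfectly synchronized. Write $\Gamma = \{(P_i,Q_i) : P_i \subseteq Q_i \subseteq D^{k_i}\}$.

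\emph{Stage 1: arity reduction for promise relations.} I would first reduce $\PCSP(\Gamma)$ to $\PCSP(\Gamma')$, where $\Gamma' = \{(R_P,R_Q),(S_P,S_Q)\}$ lives over an enlarged domain $D' \supseteq D$, with $R_P \subseteq R_Q$ unary, $S_P \subseteq S_Q$ binary, and both coordinate projections of $S_P$ (hence of $S_Q$) equal to $D'$, as required by Theorem~\ref{thm:fed-vard}. Concretely, one applies the classical incidence gadget that encodes each $k$-ary relation by a binary relation over $D$ together with fresh ``position'' elements, and one applies the \emph{same} gadget to $P_i$ and to $Q_i$; because $P_i \subseteq Q_i$ this yields $R_P \subseteq R_Q$ and $S_P \subseteq S_Q$, and a routine preprocessing (delete clauses with empty $P_i$; throw in a handful of structural edges) secures the full-projection condition. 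The resulting gadget is a ppp-definition of each $(P_i,Q_i)$ from $\Gamma'$ (Definition~\ref{df:ppp-definable}) whose $P$- and $Q$-parts have identical clause structure, and, just as in the CSP case, it is reversible; so by the discussion following Definition~\ref{df:ppp-definable} (or by Theorem~\ref{thm:Galois}) one gets a polynomial-time equivalence $\PCSP(\Gamma) \equiv \PCSP(\Gamma')$ that respects the promise.

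\emph{Stage 2: from $(R,S)$-PCSP to a promise digraph homomorphism.} Let $\Lambda_P := \{R_P,S_P\}$ and $\Lambda_Q := \{R_Q,S_Q\}$; both satisfy the hypotheses of Theorem~\ref{thm:fed-vard}, so set $H := H_{\Lambda_P}$ and $H' := H_{\Lambda_Q}$. Any $\PCSP(\Gamma')$ instance $(\Psi_P,\Psi_Q)$ has a \emph{common} underlying colored digraph $X$ (the pairs of $\Gamma'$ have equal arities, and the $P$- and $Q$-clauses sit on the same tuples). Obliviousness then gives $G_{\Psi_P} = \sigma(D',X) = G_{\Psi_Q} =: G$, and correctness of the two instantiations of that reduction yields $\Psi_P$ satisfiable $\iff G \to H$ and $\Psi_Q$ satisfiable $\iff G \to H'$; hence $(\Psi_P,\Psi_Q) \mapsto G$ is a valid, promise-respecting reduction $\PCSP(\Gamma') \le \mathsf{PDGH}(H,H')$. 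Conversely, the oblivious reduction $G \mapsto \eta(D',G)$, being independent of the target graph, builds one instance whose $\Lambda_P$- and $\Lambda_Q$-readings form a $\PCSP(\Gamma')$ instance $(\Psi_{G,P},\Psi_{G,Q})$ with $G\to H \iff \Psi_{G,P}$ satisfiable and $G\to H' \iff \Psi_{G,Q}$ satisfiable, giving $\mathsf{PDGH}(H,H') \le \PCSP(\Gamma')$. Finally, for the promised homomorphism $H \to H'$: take $G := H$; the identity witnesses $H \to H$, so $\Psi_{H,P}$ is satisfiable; its $R_P$-/$S_P$-clauses sit on the same tuples as the $R_Q$-/$S_Q$-clauses of $\Psi_{H,Q}$, and $R_P\subseteq R_Q$, $S_P\subseteq S_Q$, so the same assignment satisfies $\Psi_{H,Q}$; hence $H\to H'$. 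Chaining Stages~1 and~2 gives $\PCSP(\Gamma)$ polynomial-time equivalent to $\mathsf{PDGH}(H,H')$.

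\emph{Main obstacle.} The one genuinely technical step is Stage~1: constructing the incidence gadget so that it is simultaneously (i) a faithful, reversible ppp-definition of each $(P_i,Q_i)$ with the $P$- and $Q$-sides structurally identical, (ii) compatible with the containments $R_P\subseteq R_Q$ and $S_P\subseteq S_Q$, and (iii) patchable so that $S_P$ has full coordinate projections over $D'$. A secondary, bookkeeping point is to check that the obliviousness guaranteed by Theorem~\ref{thm:fed-vard} is genuinely independent of the relations --- only $D$, the hypergraph $X$, and, for the digraph itself, the language matter --- since that is exactly what forces the two halves of the PCSP to travel to the two halves of the PDGH in lockstep and preserves the promise.
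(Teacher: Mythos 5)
Your proof follows the same two-stage skeleton as the paper's and, importantly, correctly identifies the load-bearing insight: that the \emph{obliviousness} of the Feder--Vardi reductions (Definition~\ref{def:oblivious-FV}) is what forces the $P$-side and $Q$-side of a PCSP instance to map to the \emph{same} digraph $G_\Psi$, so that the single graph $G_\Psi$ correctly encodes both halves of the promise. Your Stage~2 is essentially identical to the paper's. The difference is in Stage~1. The paper's route is very concrete: take the product relation $P'=\prod_i P_i$, $Q'=\prod_i Q_i$ to get a single promise pair, re-interpret this single $k$-ary relation as a \emph{unary} relation over the enlarged domain $D'=D^k$, and add a single binary ``shift'' relation $S=\{(x,y): x_i=y_{i+1}\}$ that is \emph{the same} on both the $P$- and $Q$-sides, so the two CSP templates fed to Theorem~\ref{thm:fed-vard} are $\{P',S\}$ and $\{Q',S\}$ sharing their binary part. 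You instead invoke a generic ``incidence gadget'' that produces a unary pair $(R_P,R_Q)$ and a binary pair $(S_P,S_Q)$ with possibly distinct $S_P\neq S_Q$, and you patch full projections afterward. This is plausible but left at the level of a sketch: the gadget is not specified, the claim that the resulting ppp-definition is ``reversible'' (needed to get a polynomial-time \emph{equivalence}, not just a one-way reduction) is asserted rather than argued, and the full-projection patching is hand-waved. The paper's product/shift construction sidesteps exactly these bookkeeping issues and gives $S_P=S_Q$ for free, and the containment $H_{P'}\subseteq H_{Q'}$ then just falls out of the Feder--Vardi construction (which is how the paper gets the promised homomorphism $H\to H'$), whereas you obtain it via an additional argument applying the reverse reduction to $G:=H$ --- also correct, but more indirect. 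In short: same architecture and same key lemma, a different and less fully executed Stage~1, and a slightly different but valid derivation of the homomorphism $H\to H'$.
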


\begin{proof}
The proof technique borrows significantly from \cite{feder-vardi}.

Assume that our PCSP is $\Gamma = \{(P_i, Q_i) {: i \in [r]}\}$. Let $P' = \prod_{i}
P_i$ and $Q' = \prod_i Q_i$. Let $\Gamma' = \{(P', Q')\}$. It is easy to
check that $\PCSP(\Gamma)$ is polynomial time equivalent to $\PCSP(\Gamma')$.

Assume the single relation of $\Gamma'$ has arity $k$. Now, view
$\Gamma'$ as a CSP over the domain $D' = D^k$. Note that
$\PCSP(\Gamma')$ over the domain $D'$ is not necessarily equivalent to
$\PCSP(\Gamma')$ over the domain $D$, since we lose the ability to
specify the same variable in different coordinates. To mitigate this,
we add an additional ``shift operator'' $S = \{(x, y) : x,y \in D^k;
x_i = y_{i+1}, i \in \{1, \hdots, k-1\}\}.$ Then, we have that
$\PCSP(\Gamma'\cup \{(S, S)\})$ over domain $D'$ is polynomial-time
equivalent $\PCSP(\Gamma')$ over domain $D$.  See \cite{feder-vardi}
for more details.

Then, applying Theorem~\ref{thm:fed-vard}, to $\{P', S\}$ and $\{Q',
S\}$ over domain $D'$, we obtain digraphs $H_{P'}$ and $H_{Q'}$ such that $\CSP(\{P',
S\})$ is polynomial-time equivalent to $\Hom(H_{P'})$ and $\CSP(\{Q', S\})$
is polynomial-time equivalent to $\Hom(H_{Q'})$. From the nature of
the construction of $H_{P'}$ and $H_{Q'}$ in \cite{feder-vardi}, we
know that $H_{P'} \subseteq H_{Q'}$, which trivially implies there is
homomorphism between them.

Since the reductions to $\Hom(H_{P'})$ and $\Hom(H_{Q'})$ are
oblivious, any instance $\Psi$ of
$\PCSP(\Gamma'\cup \{(S, S)\})$ reduces to a digraph $G_{\Psi}$ such
that there is a homomorphism from $G_{\Psi}$ to $H_{P'}$ if and only
if $\Psi_P$ is satisfiable. Likewise, there is a homomorphism from
$G_{\Psi}$ to $H_{Q'}$ if and only if $\Psi_Q$ is satisfiable. Thus,
any instance $\Psi$ of $\PCSP(\Gamma'\cup \{(S, S)\})$ reduces to an
instance $G_{\Psi}$ of $\mathsf{PDGH}(H_{P'}, H_{Q'})$.

Similarly, the reductions from $\Hom(H_{P'})$ and $\Hom(H_{Q'})$ are
oblivious, so any instance $G$ of
$\mathsf{PDGH}(H_{P'}, H_{Q'})$ reduces to an instance $\Psi_G$ of
$\PCSP(\Gamma'\cup \{(S, S)\})$.
\end{proof}

\begin{rem}
The theorem also holds when the notion of PCSP is extended
to when the $P_i$'s and $Q_i$'s are related by a homomorphism instead
of inclusion.
\end{rem}

\begin{rem}
For CSPs that it is known that any relation $\Gamma$ there
is a digraph $H_{\Gamma}$ such that $\CSP(\Gamma)$ is {\emph{logspace}-equivalent} to $\Hom(H_{\Gamma})$ \cite{bulin2015finer}. The authors
conjecture that this is also the case for PCSPs.
\end{rem}

\subsection{{Lack of Repetition Does Not Make Things Harder}} \label{app:repetition}

For a set $\Gamma$ of promise relations, let $\PCSP_R(\Gamma)$ be the promise decision problem analogous to $\PCSP(\Gamma)$ except that each clause has at most one copy of each variable. {We} show that the two problems are polynomial-time equivalent using a simple combinatorial argument, simplifying the argument used in \cite{DBLP:journals/siamcomp/AustrinGH17} for establishing the $\mathsf{NP}$-hardness of ``balanced $2$-coloring'' versus ``weak $2$-coloring'' of $2k+1$-uniform hypergraphs.

\begin{thm}
For all finite $\Gamma = \{(P_i, Q_i) \in D^{k_i} \times D^{k_i}\}$, $\PCSP_R(\Gamma)$ is polynomial-time equivalent to $\PCSP(\Gamma)$
\end{thm}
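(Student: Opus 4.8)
One direction is immediate: a repetition-free instance is already a legal input to $\PCSP(\Gamma)$ with the identical satisfiable/unsatisfiable semantics, so the identity map reduces $\PCSP_R(\Gamma)$ to $\PCSP(\Gamma)$. The content is the reverse reduction, turning an instance whose clauses may repeat variables into a repetition-free one that is a YES-instance exactly when the original is and a NO-instance exactly when the original is. I would do this by a ``cloud'' blow-up. Let $k = \max_i k_i$ and fix a constant $T := k\,|D|$. Replace each original variable $x_j$ by $T$ fresh copies $x_j^{(1)}, \dots, x_j^{(T)}$, and replace each clause $P_i(x_{\ell(i,1)}, \dots, x_{\ell(i,k_i)})$ by the family of clauses $P_i\bigl(x_{\ell(i,1)}^{(\phi(1))}, \dots, x_{\ell(i,k_i)}^{(\phi(k_i))}\bigr)$ ranging over all maps $\phi : [k_i] \to [T]$ that are injective on each fiber of $\ell(i,\cdot)$ (i.e.\ $\phi(p) \ne \phi(q)$ whenever $p \ne q$ and $\ell(i,p) = \ell(i,q)$), with the matching $Q_i$-clauses on the $\Psi_Q$ side. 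Such $\phi$ exist because every fiber has size at most $k \le T$, and the resulting clauses are repetition-free: two positions $p \ne q$ of a clause name distinct variables, either because $\ell(i,p) \ne \ell(i,q)$ or, when $\ell(i,p) = \ell(i,q)$, because $\phi(p) \ne \phi(q)$. The new instance has $nT$ variables and at most $m\,T^k$ clauses, hence is computable in polynomial time since $k$ and $|D|$ depend only on $\Gamma$.

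Completeness is a routine check: given a satisfying assignment $\sigma$ of the original $\Psi_P$, setting every copy of $x_j$ to $\sigma(x_j)$ makes each new $P_i$-clause evaluate to $(\sigma(x_{\ell(i,1)}), \dots, \sigma(x_{\ell(i,k_i)})) \in P_i$, so the blow-up of $\Psi_P$ is satisfiable. The step I expect to be the crux is soundness, precisely because $\Gamma$ need not contain an equality relation, so there is no way to force the copies of a variable to agree and the blow-up genuinely relaxes the instance; one must argue that unsatisfiability is nevertheless preserved. The plan is a plurality-decoding argument. Suppose $\tau$ satisfies the blow-up of $\Psi_Q$; define $\rho(x_j) \in D$ to be a most-frequent value among $\tau(x_j^{(1)}), \dots, \tau(x_j^{(T)})$. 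By pigeonhole at least $T/|D| = k$ of the copies of $x_j$ carry the value $\rho(x_j)$. Since each variable occurs at most $k$ times in any single original clause, for a fixed clause $c_i$ we can route each occurrence of each variable to a distinct copy carrying the plurality value, i.e.\ choose an admissible $\phi$ with $\tau(x_{\ell(i,p)}^{(\phi(p))}) = \rho(x_{\ell(i,p)})$ for all positions $p$; the corresponding $Q_i$-clause present in the blow-up then witnesses $(\rho(x_{\ell(i,1)}), \dots, \rho(x_{\ell(i,k_i)})) \in Q_i$. Hence $\rho$ satisfies $\Psi_Q$, so if the original is unsatisfiable so is the blow-up.

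Putting the two directions together gives the polynomial-time equivalence. The only quantitative point to watch is the choice $T = k|D|$: it is exactly what guarantees ``enough copies agree'' (at least $k$) in the worst case where a variable appears $k$ times in one clause, and it is the sole place the size parameter is used; beyond that I would only verify carefully that the admissible-$\phi$ families are nonempty and that the new clauses contain no repeated variable, both of which follow from $T \ge k$. I would also remark that, unlike the argument in \cite{AGH-focs14} tailored to hypergraph $2$-colouring, this construction is entirely black-box in $\Gamma$, using only that arities are bounded and $D$ is finite.
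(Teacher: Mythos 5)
Your proposal is correct and is essentially identical to the paper's argument: the same $k|D|$-fold blow-up of each variable, the same pruning to repetition-free instantiations (you phrase it as maps $\phi$ injective on fibers, the paper as enumerating all index tuples and discarding those yielding repeats — these generate the same clause set), the same trivial completeness direction, and the same plurality/pigeonhole decoding for soundness. No meaningful differences in approach or quantitative bookkeeping.
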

\begin{proof}
  $\PCSP_R(\Gamma)$ trivially reduces to $\PCSP(\Gamma)$ since any instance of $\PCSP_R(\Gamma)$ is an instance of $\PCSP(\Gamma)$. Thus, we now consider the harder case. Let $\Psi = (\Psi_P, \Psi_Q)$ be a $\Gamma-\PCSP$ with $m$ clauses on the variable set $x_1, \hdots, x_n$. Let $k$ be the maximum arity of any promise relation of $\Gamma$ (this is a constant). For our reduction, replace each variable $x_i$ with $|D|k$ `copies' $x_i^{(1)}, \hdots, x_i^{(|D|k)}$. Replace each clause $P_i(x_{j_1}, \hdots, x_{j_{k_i}})$ of $\Psi_P$ with a conjunction of at most $(|D|k)^{k_i}$ clauses, $P_i(x_{j_1}^{(a_{1})}, \hdots, x_{j_{k_i}}^{(a_{k_i})})$ in which we remove the clauses with a repeated variable. Call this new formula $\Psi_P^R$. Perform an identical reduction of $\Psi_Q$ to $\Psi^R_Q$. We can see that $\Psi^R = (\Psi_P^R, \Psi_Q^R)$ is a valid $\Gamma$-PCSP without repetition and the size of this PCSP is only a constant factor larger than the size of $\Psi$.

  Now we show that this is a valid reduction. First, if $\Psi$ is satisfiable, there is an assignment to the variables $x_1, \hdots, x_n$ which satisfies $\Psi_P$. If we let each copy $x_i^{(j)}$ have the same value of $x_i$, then we yield a satisfying assignment of $\Psi_P^R$. It suffices then to show that if $\Psi$ is unsatisfiable, then $\Psi^R$ is unsatisfiable. This is equivalent to showing that if $\Psi_Q^R$ is satisfiable, then $\Psi_Q$ is also satisfiable. Assume we have a satisfying assignment of $\Psi_Q^R$. For each of the variables $x_i$ of $\Psi_Q$, set $x_i$ to be the most frequently occurring value in the multiset $\{x_i^{(j)} : j \in \{1, \hdots, |D|k\}\}$ (break ties arbitrarily). Crucially note that this most frequently occurring value occurs at least $k$ times. We claim that this choice of the $x_i$ satisfies $\Psi_Q$. For each clause $Q_i(x_{j_1}, \hdots, x_{j_{k_i}})$, we can find a corresponding clause $Q_i(x_{j_1}^{(a_1)}, \hdots, x_{j_{k_i}}^{(a_{k_i})})$ in $\Psi_Q^R$such that $x_{j_{\ell}}^{a_{\ell}} = x_j$ for all $j$ (this is possible without repetition since there are at least $k$ distinct choices for $a_{\ell}$). Since $\Psi_Q^R$ is satisfied, this particular repetition-free clause is satisfied, so the corresponding clause in $\Psi_Q$ is satisfied. Thus, we have found a satisfying assignment for $\Psi_Q$. Thus, $\PCSP_R(\Gamma)$ and $\PCSP(\Gamma)$ are polynomial-time equivalent. 
\end{proof}

\section*{Acknowledgments}

The authors would like to thank anonymous referees for useful
comments, including simplifying/correcting a few of the proofs. The authors also thank Andrei Krokhin and Victor Dalmau for
important comments.

\bibliographystyle{plain}
\bibliography{pcsp2}

\end{document}